\pgfplotsset{compat=newest}
\def\ve#1{{\mathchoice{\mbox{\boldmath$\displaystyle #1$}}%
              {\mbox{\boldmath$\textstyle #1$}}%
              {\mbox{\boldmath$\scriptstyle #1$}}%
              {\mbox{\boldmath$\scriptscriptstyle #1$}}}}
\renewcommand{\vec}[1]{\ensuremath{\mathbf{#1}}}
\renewcommand{\S}{\ensuremath{\mathbf{S}}}
\newcommand{\0}{\ensuremath{\mathbf{0}}}
\newcommand{\e}{\ensuremath{\mathbf{e}}}
\newcommand{\Mooremat}[3]{\ensuremath{\mathcal{M}_{#1,#2}\left( #3 \right)}}
\newcommand{\MoormatExplicit}[3]{
	\begin{pmatrix}
		#1_{1} & #1_{2} & \dots& #1_{#3}\\
		#1_{1}^{q} & #1_{2}^{q} & \dots& #1_{#3}^{q}\\
		\vdots &\vdots&\ddots& \vdots\\
		#1_{1}^{q^{#2-1}} & #1_{2}^{q^{#2-1}} & \dots& #1_{#3}^{q^{#2-1}}\\
	\end{pmatrix}}
\newcommand{\MoormatExplicitMat}[4]{
	\begin{pmatrix}
          #1_{1,1} & #1_{1,2} & \dots& #1_{1,#3}\\
          #1_{2,1} & #1_{2,2} & \dots& #1_{2,#3}\\
          \vdots &\vdots&\ddots& \vdots\\
          #1_{#4,1} & #1_{#4,2} & \dots& #1_{#4,#3}\\
          #1_{1,1}^{q} & #1_{1,2}^{q} & \dots& #1_{1,#3}^{q}\\
          #1_{2,1}^{q} & #1_{2,2}^{q} & \dots& #1_{2,#3}^{q}\\
                    \vdots &\vdots&\ddots& \vdots\\
		#1_{#4,1}^{q^{#2-1}} & #1_{#4,2}^{q^{#2-1}} & \dots& #1_{#4,#3}^{q^{#2-1}}\\
	\end{pmatrix}}
\newcommand{\Fq}{\ensuremath{\mathbb{F}_q}}
\newcommand{\Fqm}{\ensuremath{\mathbb{F}_{q^m}}}
\newcommand{\Fqmu}{\ensuremath{\mathbb{F}_{q^{mu}}}}
\DeclareMathOperator{\extsmallfield}{ext}
\DeclareMathOperator{\rank}{rk}
\DeclareMathOperator{\Tr}{Tr}
\newcommand{\nkhalffrac}{\left\lfloor \frac{n-k}{2}\right\rfloor}
\newcommand{\RkError}{\ensuremath{\varphi}}
\newcommand{\dimZ}{\ensuremath{\zeta}}
\newcommand{\MooreZ}{\ensuremath{\tilde{\vec{Z}}}}
\newcommand{\tpub}{\ensuremath{t_{\mathsf{pub}}}}
\newcommand{\kpub}{\ensuremath{\mathbf{k}_{\mathsf{pub}}}}
\newcommand{\mycode}[1]{\ensuremath{\mathcal{#1}}}
\newcommand{\fontmetric}[1]{\mathsf{#1}}
\newcommand{\codelinearRank}[1]{\ensuremath{[#1]_q^\fontmetric{R}}}
\newcommand{\Gabcode}[2]{\ensuremath{\mathcal{G}(#1,#2)}}
\newcommand{\IntGabcode}[1]{\ensuremath{\mathcal{IG}(#1)}}
\newcommand{\numbRowErasures}{\varrho}
\newcommand{\numbColErasures}{\gamma}
\newcommand{\y}{\vec{y}}
\renewcommand{\a}{\vec{a}}
\newcommand{\g}{\vec{g}}
\newcommand{\m}{\vec{m}}
\renewcommand{\c}{\vec{c}}
\renewcommand{\e}{\vec{e}}
\newcommand{\K}{\vec{K}}
\newcommand{\M}{\vec{M}}
\renewcommand{\P}{\vec{P}}
\newcommand{\GG}{\vec{G}_\mathcal{G}}
\newcommand{\Trmum}{\Tr}
\renewcommand{\k}{\vec{k}}
\newcommand{\s}{\vec{s}}
\newcommand{\z}{\vec{z}}
\newcommand{\x}{\vec{x}}
\newcommand{\X}{\vec{X}}
\newcommand{\Y}{\vec{Y}}
\newcommand{\A}{\vec{A}}
\newcommand{\E}{\vec{E}}
\newcommand{\G}{\vec{G}}
\newcommand{\Pinv}{\P^{-1}}
\newcommand{\SystemName}{\textsf{LIGA}}
\newcommand{\keyGen}{\ensuremath{\mathsf{KeyGen}}}
\newcommand{\enc}{\ensuremath{\mathsf{Encrypt}}}
\newcommand{\dec}{\ensuremath{\mathsf{Decrypt}}}
\newcommand{\sys}{\ensuremath{\mathsf{\Pi}^{\text{PKE}}}}
\newcommand{\syskem}{\ensuremath{\mathsf{\Pi}^{\text{KEM}}}}
\newcommand{\encap}{\ensuremath{\mathsf{Encaps}}}
\newcommand{\decap}{\ensuremath{\mathsf{Decaps}}}
\newcommand{\secLevel}{\ensuremath{\lambda}}
\newcommand{\cipherset}{\hat{\mathcal{C}}}
\newcommand{\pk}{\ensuremath{\mathsf{pk}}}
\newcommand{\sk}{\ensuremath{\mathsf{sk}}}
\newcommand{\ct}{\ensuremath{\mathsf{ct}}}
\newcommand{\sample}{\xleftarrow{\$}}
\newcommand{\Adv}{\ensuremath{\mathsf{Adv}}}
\newcommand{\qbinomial}[2]{\genfrac{[}{]}{0pt}{}{#1}{#2}_q}
\newcommand{\qbinomialField}[3]{\genfrac{[}{]}{0pt}{}{#1}{#2}_{#3}}
\newcommand{\Code}{\mathcal{C}}
\newcommand{\trrank}{t}
\newcommand{\dRmin}{\mathrm{d}_{\mathrm{rk},\mathrm{min}}}
\newcommand{\quadbinom}[2]{\ensuremath{
		%\begin{bmatrix}
		%\ #1 \phantom{i} \\
		%\ #2 \phantom{i}
		%\end{bmatrix}
		{#1
			\brack
			#2}_q
}}
\newcommand{\removelatexerror}{\let\@latex@error\@gobble}
\begin{document}

\title{\SystemName:
A Cryptosystem Based on the Hardness of Rank-Metric List and Interleaved Decoding}
\titlerunning{\SystemName: A Rank-Metric Code-Based Cryptosystem}

\author{Julian Renner\inst{1} \and Sven Puchinger\inst{2} \and Antonia Wachter-Zeh\inst{1}}

\institute{  Technical University of Munich (TUM), Munich, Germany\\
  \email{\{julian.renner, antonia.wachter-zeh\}@tum.de}\thanks{The work of J. Renner and A. Wachter-Zeh was supported by the European Research Council (ERC) under the European Union's Horizon 2020 research and innovation programme (grant agreement No 801434).}
  \and
Technical University of Denmark (DTU), Lyngby, Denmark \\\email{svepu@dtu.dk}\thanks{Sven Puchinger has received funding from the European Union's
Horizon 2020 research and innovation program under the Marie Sklodowska-Curie grant
agreement no.~713683 (COFUNDfellowsDTU).}
}

\date{\today}

\maketitle

\begin{abstract}

We propose the new rank-metric code-based cryptosystem {\SystemName} which is based on the hardness of \underline{l}ist decoding and \underline{i}nterleaved decoding of \underline{Ga}bidulin codes.
{\SystemName} is an improved variant of the \emph{Faure--Loidreau} (FL) system, which was broken in a structural attack by \emph{Gaborit, Otmani, and {Tal\'e Kalachi}} (GOT, 2018).
We keep the FL encryption and decryption algorithms, but modify the insecure key generation algorithm.
Our crucial observation is that the GOT attack is equivalent to decoding an interleaved Gabidulin code.
The new key generation algorithm constructs public keys for which all polynomial-time interleaved decoders fail---hence {\SystemName} resists the GOT attack.
We also prove that the public-key encryption version of {\SystemName} is IND-CPA secure in the standard model and the KEM version is IND-CCA2 secure in the random oracle model, both under hardness assumptions of formally defined problems related to list decoding and interleaved decoding of Gabidulin codes.
We propose and analyze various exponential-time attacks on these problems,
calculate their work factors, and compare the resulting parameters to NIST proposals.
The strengths of {\SystemName} are short ciphertext sizes and (relatively) small key sizes. Further, {\SystemName} guarantees correct decryption and has no decryption failure rate.
It is \emph{not} based on hiding the structure of a code.
Since there are efficient and constant-time algorithms for encoding and decoding Gabidulin codes, timing attacks on the encryption and decryption algorithms can be easily prevented.

\end{abstract}

\section{Introduction}

Public-key cryptography is the foundation for establishing secure communication between multiple parties.
Traditional public-key algorithms such as RSA are based on the hardness of factoring large numbers or the discrete logarithm problem, but can be attacked in polynomial time once a capable quantum computer exists. Code-based public-key cryptosystems are considered to be post-quantum secure, but compared to RSA or elliptic curve cryptography their crucial drawback is the significantly larger key size.
Recently, the \emph{National Institute of Standards and Technology} (NIST) has initiated a standardization progress for post-quantum secure public-key algorithms \cite{NIST}. The currently being evaluated Round 2 of the competition consists of 17 code-based and lattice-based public-key encryption algorithms.  
The NIST competition and its systems attract a lot of attention and show the importance of designing post-quantum secure public-key encryption algorithms.

The \emph{Faure--Loidreau} (FL) code-based cryptosystem~\cite{faure2006new,LoidreauHabitilation-RankMetric_2007} is based on the problem of reconstructing linearized polynomials and can be seen as linearized equivalent of the (broken) Augot--Finiasz cryptosystem \cite{AugotFiniasz-PKC-PolyReconstruction_2003}.
While the Augot--Finiasz cryptosystem is closely connected to (list) decoding Reed--Solomon codes, the 
FL cryptosystem is connected to (list) decoding Gabidulin codes, a special class of rank-metric codes~\cite{Gabidulin_TheoryOfCodes_1985}.
In contrast to McEliece-type (or Niederreiter-type) cryptosystems, where the public key is a \emph{matrix}, in the FL system, the public key is only a \emph{vector}, resulting in a much smaller key size.
At the time when the FL cryptosystem was designed, it was only \emph{conjectured} that Gabidulin codes cannot be list decoded efficiently. As this was \emph{proven} in the last years for many families of Gabidulin codes \cite{Wachterzeh_BoundsListDecodingRankMetric_IEEE-IT_2013,RavivWachterzeh_GabidulinBounds_journal,ListDec_RankMetric_2019}, the FL system could be a very promising post-quantum secure public-key cryptosystem.
However, the recent structural attack by Gaborit, Otmani and Tal\'e Kalachi \cite{Gaborit2018-FL-Attack} can recover an alternative public key in cubic time complexity.

{In this paper, a new system is presented which is based on the original FL system, and therefore relies on the proven hardness of list decoding Gabidulin codes, but makes the attack from \cite{Gaborit2018-FL-Attack} impossible.}
Our contributions are as follows. First, a new coding-theoretic interpretation of the original FL system is given and an alternative decryption algorithm is proposed. Second, we show that the public key can be seen as a corrupted codeword of an \emph{interleaved Gabidulin code}. We prove that the failure condition of the GOT attack~\cite{Gaborit2018-FL-Attack} on the public key is equivalent to the failure condition of decoding the public key as a corrupted interleaved Gabidulin codeword.
This observation enables us to design a new code-based public-key encryption scheme, as well as a corresponding key encapsulation mechanism (KEM), based on the hardness of list and interleaved decoding Gabidulin codes: \SystemName.
In {\SystemName}, we choose the public key in a way that the corresponding interleaved decoder is guaranteed to fail, and thus, the system is secured against the attack from~\cite{Gaborit2018-FL-Attack}.
We also prove that the public-key encryption version of {\SystemName} is IND-CPA secure in the standard model and the KEM version is IND-CCA2 secure in the random oracle model, both under hardness assumptions on problems related to list and interleaved decoding of Gabidulin codes. We analyze possible (exponential-time) attacks on these hard problems, provide sets of parameters for {\SystemName}, and compare them amongst others to NIST proposals (RQC, ROLLO, BIKE, McEliece).

The structure of this paper is as follows. In Section~\ref{sec:preliminaries}, the notation is introduced and definitions are given. 
In Section~\ref{sec:fl_system}, the key generation of the original FL system is shown and a new coding-theoretic interpretation of the ciphertext and the public key is derived. After summarizing the attack from~\cite{Gaborit2018-FL-Attack}, we prove its equivalence to decoding the public key as an interleaved Gabidulin code. 
Based on this equivalence, the new system {\SystemName} is proposed in Section~\ref{sec:liga} and
its IND-CPA and IND-CCA2 security are proven in Section~\ref{sec:semantic}.
A security analysis of our system is given in Section~\ref{sec:sec_analysis}. In Section~\ref{sec:parameters}, example parameters for security levels $128$, $192$, and $256$ bit are proposed and compared to the NIST proposals RQC \cite{melchor2019rqc}, ROLLO \cite{melchor2019rollo}, BIKE \cite{aragon2019bike}, ClassicMcEliece \cite{bernstein2019mceliece} and Loidreau's McEliece-like system from \cite{Loidreau2017-NewRankMetricBased}. Conclusions are given in Section~\ref{sec:conclusion}.

Parts of these results have been presented at the \emph{IEEE International Symposium on Information Theory 2018} \cite{wachterzeh2018repairing}. The content of this journal paper contains various new results that were not shown in \cite{wachterzeh2018repairing}. In particular, in this paper,\\[-3ex]
\begin{itemize}
	\item we generalize {\SystemName}'s Key Generation algorithm, i.e., the choice of the $\vec{z}_i$'s (the interleaved errors in the public key) is more flexible now (in \cite{wachterzeh2018repairing}, $\vec{z}_1 = \vec{z}_2 = \dots = \vec{z}_u$),
	\item we present a KEM/DEM version of {\SystemName},
	\item the encryption and decryption complexity is analyzed,
	\item we show a new way to realize the decryption by error-erasure decoding,
	\item we identify formal problems in the rank metric on which the security of {\SystemName} relies and prove the IND-CPA/CCA2 security of the KEM/DEM version under the assumption that some of these problems are hard,
	\item we analyze new exponential-time attacks on these problems,
	\item we update the choice of parameters.
\end{itemize}

It is important to note that all these results go well beyond what is known about and what has been analyzed for the original FL system.

\section{Preliminaries}\label{sec:preliminaries}
\subsection{Notations}
Let $q$ be a power of a prime and let $\Fq$ denote the finite field of order $q$. Then, $\Fqm$ and $\Fqmu$ denote extension fields of $\Fq$ of order $q^m$ and $q^{mu}$, respectively. 
We use $\Fq^{m \times n}$ to denote the set of all $m\times n$ matrices over $\Fq$ and $\Fqm^n =\Fqm^{1 \times n}$ for the set of all row vectors of length $n$ over $\Fqm$. 
Further, we use another field extension $\mathbb{F}_{q^{mu}}$ with $u>1$.
Thus, $\Fq \subseteq \Fqm \subseteq \Fqmu$.

For a field $\mathbb{F}$, the vector space that is spanned by $\vec{v}_1,\hdots,\vec{v}_l \in \mathbb{F}^n$ is denoted by
\begin{equation*}
\langle\vec{v}_1,\hdots,\vec{v}_l\rangle_{\mathbb{F}} := \Bigg\{\sum_{i=1}^{l}a_i\vec{v}_i \, : \, \ a_i \in \mathbb{F} \Bigg\}.  
\end{equation*}

Denote the set of integers $[a,b] = \{i: a \leq i \leq b\}$. Rows and columns of $m\times n$-matrices are indexed by $1,\dots, m$ and $1,\dots, n$, where $A_{i,j}$ is the element in the $i$-th row and $j$-th column of the matrix $\vec{A}$. Further,
\begin{equation*}
  \vec{A}_{[a,b]} :=
  \begin{pmatrix}
    A_{1,a} & \hdots & A_{1,b} \\
    \vdots & \ddots & \vdots \\
    A_{m,a} & \hdots & A_{m,b} \\
  \end{pmatrix}.
\end{equation*}
  By $\rank_q(\vec{A})$ and $\rank_{q^m}(\vec{A})$, we denote the rank of a matrix $\vec{A}$ over $\Fq$, respectively $\Fqm$.
Let $(\gamma_1,\gamma_2,\dots,\gamma_{u})$ be an ordered basis of $\Fqmu$ over $\Fqm$. By utilizing the vector space isomorphism $\Fqmu \cong \Fqm^u$, we can relate each vector $\mathbf a \in \Fqmu^n$ to a matrix $\mathbf A \in \Fqm^{u \times n}$ according to
\begin{align*}
\extsmallfield_{\boldsymbol{\gamma}}:\Fqm^{n} &\rightarrow \Fq^{m \times n}\label{eq:mapping_smallfield}\\
  \vec{a} = (a_1,\hdots,a_n) &\mapsto \vec{A} =
                               \begin{pmatrix}
                                 A_{1,1} & \hdots & A_{1,n} \\
                                 \vdots & \ddots & \vdots \\
                                 A_{m,1} &  \hdots & A_{m,n} \\
                               \end{pmatrix},
\end{align*}
where $\boldsymbol{\gamma} = (\gamma_1,\gamma_2,\dots,\gamma_{u})$ and
\begin{equation*}
\quad a_j = \sum_{i=1}^{m} A_{i,j} \gamma_i, \quad \forall j \in [1,n].
\end{equation*}
The trace operator of a vector $\vec{a}\in\Fqmu$ to $\Fqm$ is defined by
\begin{align*}
  \Trmum:\Fqmu^{n} &\rightarrow \Fqm^{n}\\
  \vec{a} = (a_1,a_2,\hdots,a_n) &\mapsto \Bigg( \sum_{i=0}^{m-1} a_1^{q^i}, \sum_{i=0}^{m-1} a_2^{q^i},\hdots, \sum_{i=0}^{m-1} a_n^{q^i} \Bigg).
\end{align*}
A dual basis $(\gamma_1^*,\gamma_2^*,\dots,\gamma_{u}^*)$ to  $(\gamma_1,\gamma_2,\dots,\gamma_{u})$ is a basis that fulfills
\begin{equation*}
  \Trmum(\gamma_i \gamma_j^{*}) = 
  \begin{cases}
    1 ~\text{if $i=j$} \\
    0 ~\text{else}
  \end{cases},
\end{equation*}
where $i,j \in [1,u]$. Note that a dual basis always exists.

Denote by $\Mooremat{s}{q}{\vec{a}} \in \Fqm^{s \times n}$ the $s \times n$ Moore matrix for a vector $\vec{a} = (a_1,a_2,\dots,a_n) \in \Fqm^n$, i.e.,
\begin{equation*}
\Mooremat{s}{q}{\vec{a}} := \MoormatExplicit{a}{s}{n}.
\end{equation*}
If $a_1, a_2,\dots$, $a_{n}\in \Fqm$ are linearly independent over $\Fq$, then $\rank_{q^m}(\Mooremat{s}{q}{\vec{a}})=\min\{s,n\}$, cf.~\cite[Lemma 3.15]{Lidl-Niederreiter:FF1996}. This definition can also be extended to matrices by 
\begin{equation*}
\Mooremat{s}{q}{\vec{A}} := \MoormatExplicitMat{A}{s}{n}{l},
\end{equation*}
where $\vec{A} \in \Fqm^{l \times n}$.

The Gaussian binomial coefficient is denoted by
\begin{equation*}
  \quadbinom{s}{r} :=
  \begin{cases}
    \dfrac{ (1-q^s)(1-q^{s-1})\cdots(1-q^{s-r+1})}{(1-q)(1-q^2)\cdots(1-q^{r})} &\text{ for $r \leq s$} \\
    0 &\text{ for $r> s$},
    \end{cases}
\end{equation*}
where $s$ and $r$ are non-negative integers.

Let $\mathcal{X}$ be a set. When $x$ is drawn uniformly at random from the set $\mathcal{X}$, we denote it by $x \sample \mathcal{X}$. Further, be $x \gets y$ we mean that we assign $y$ to $x$.

\subsection{Rank-Metric Codes and Gabidulin Codes}\label{ssec:preliminaries_rank-metric_codes}
The \textit{rank norm} $\rank_q(\vec{a})$ is the rank of the matrix representation $\vec{A}\in \Fq^{m \times n}$ over $\mathbb{F}_{q}$. 
The rank distance between \vec{a} and \vec{b} is the rank of the difference of the two matrix representations, i.e.,
\begin{equation*}
d_{\textup{R}}(\vec{a},\vec{b}) := \rank_q(\vec{a}-\vec{b}) = \rank_q(\vec{A}-\vec{B}).
\end{equation*}
An $\codelinearRank{n,k,d}$ code \mycode{C} over $\Fqm$ is a linear rank-metric code, i.e., it is a linear subspace of $\Fqm^n$ of dimension $k$ and minimum rank distance
\begin{equation*}
d := \min_{\substack{{\vec{a},\vec{b}} \in \mycode{C}\\ \vec{a} \neq \vec{b}}}
\big\lbrace d_{\fontmetric{R}}(\vec{a},\vec{b}) = \rank_q(\vec{a}-\vec{b}) \big\rbrace. 
\end{equation*}
For linear codes with $n \leq m$, the Singleton-like upper bound \cite{Delsarte_1978,Gabidulin_TheoryOfCodes_1985} implies that $d \leq n-k+1$.
If $d=n-k+1$, the code is called a \emph{maximum rank distance} (MRD) code.

Gabidulin codes \cite{Gabidulin_TheoryOfCodes_1985} are a special class of rank-metric codes and can be defined by their generator matrices.
\begin{definition}[Gabidulin Code \cite{Gabidulin_TheoryOfCodes_1985}]
	A linear $\Gabcode{n}{k}$ code over $\Fqm$ of length $n \leq m$ 
	and dimension $k$ is defined by its $k \times n$ generator matrix
	\begin{equation*}
	\mathbf G_{\mycode{G}} = \Mooremat{k}{q}{g_1,g_2,\dots,g_n},
	\end{equation*}
	where $\vec{g}=(g_1,g_2, \dots, g_{n}) \in \Fqm^n$ and $\rank_q(\vec{g}) = n$. 
\end{definition}
In~\cite{Gabidulin_TheoryOfCodes_1985}, it is shown that Gabidulin codes are MRD codes, i.e., $d=n-k+1$.

For a short description on decoding of Gabidulin codes, denote by $\vec{C}_\mathcal{G}\in \Fq^{m \times n}$ the transmitted codeword (i.e., the matrix representation of $\vec{c}_\mathcal{G}\in \Fqm^n$) of a $\Gabcode{n}{k}$ code that is corrupted by an additive error $\mathbf E \in \Fq^{m \times n}$. At the receiver side, only the received matrix $\mathbf R \in \Fq^{m \times n}$, where $\mathbf R = \vec{C}_\mathcal{G}+ \mathbf E$, is known. 
The channel might provide additional side information in the form of erasures:
\begin{itemize}
	\item $\numbRowErasures$ {row erasures} (in \cite{silva_rank_metric_approach} called "deviations") and
	\item $\numbColErasures$ {column erasures} (in \cite{silva_rank_metric_approach} called "erasures"),
\end{itemize}  
such that the received matrix can be decomposed into 
\begin{equation}\label{eq:decomp_errrorerasures}
\vec{R}
=\vec{C}_\mathcal{G}+ \underbrace{\vec{A}^{(R)} \vec{B}^{(R)} + \vec{A}^{(C)} \vec{B}^{(C)} + \vec{E}}_{= \,\vec{E}_\mathrm{total}},
\end{equation}
where $\vec{A}^{(R)} \in \Fq^{m \times \numbRowErasures}$, $\vec{B}^{(R)} \in \Fq^{\numbRowErasures \times n}$, 
$\vec{A}^{(C)} \in \Fq^{m \times \numbColErasures}$, $\vec{B}^{(C)} \in \Fq^{\numbColErasures\times n}$ are full-rank matrices, respectively, and $\vec{E}^{(E)} \in \Fq^{m \times n}$ is a matrix of rank $t$.
The decoder knows $\vec{R}$ and additionally $\vec{A}^{(R)}$ and $\vec{B}^{(C)}$. 
Further, $t$ denotes the number of errors without side information. 
The rank-metric error-erasure decoding algorithms from \cite{GabidulinPilipchuck_ErrorErasureRankCodes_2008,silva_rank_metric_approach,WachterzehZeh-ListUniqueErrorErasureInterpolationInterleavedGabidulin_DCC2014}
can then reconstruct $\vec{c}_\mathcal{G} \in \Gabcode{n}{k}$ with asymptotic complexity $\mathcal O(n^2)$ operations over $\Fqm$, or in sub-quadratic complexity using the fast operations described in \cite{PuchingerWachterzeh-ISIT2016,puchinger2018fast}, if
\begin{equation}\label{eq:errorerasurecond}
2t + \numbRowErasures + \numbColErasures \leq d-1 = n-k
\end{equation}
is fulfilled.

\subsection{{Interleaved Rank-Metric Codes}}\label{subsec:int-rank-codes}
Interleaved Gabidulin Codes are a code class for which efficient decoders are known that are able to correct w.h.p. random errors of rank larger than $\lfloor\frac{d-1}{2}\rfloor$.
\begin{definition} [Interleaved Gabidulin Codes~\cite{Loidreau_Overbeck_Interleaved_2006}]
A linear (vertically, homogeneous) interleaved Gabidulin code $\IntGabcode{u;n,k}$ over $\Fqm$ of length $n \leq m$, dimension $k \leq n$, and interleaving order $u$ is defined by
\begin{equation*}
\IntGabcode{u;n,k} :=
\left\lbrace
\begin{pmatrix}
\vec{c}_{\mycode{G}}^{(1)}\\
\vec{c}_{\mycode{G}}^{(2)}\\
\vdots\\
\vec{c}_{\mycode{G}}^{(u)}
\end{pmatrix}
: \vec{c}_{\mycode{G}}^{(i)} \in \Gabcode{n}{k} , \forall i \in [1,u]
\right\rbrace.
\end{equation*}
\end{definition}
As a short-term notation, we also speak about a $u$-interleaved Gabidulin code.
When considering random errors of rank weight $t$, the code $\IntGabcode{u;n,k}$ can be decoded uniquely with high probability up to 
$w \leq \lfloor \frac{u}{u+1}(n-k)\rfloor$ errors 
\footnote{
	{In this setting, an ``error of weight $w$'' is a matrix in $\Fqmu^n$ with $\Fq$-rank equal to $w$.}},
 cf.~\cite{Loidreau_Overbeck_Interleaved_2006,Sidorenko2011SkewFeedback,WachterzehZeh-ListUniqueErrorErasureInterpolationInterleavedGabidulin_DCC2014}.
However, it is well-known that there are many error patterns for which the known efficient decoders fail.
In fact, we can explicitly construct a large class of such errors as shown in the following lemma.

\begin{lemma}[Interleaved Decoding~{\cite{Loidreau_Overbeck_Interleaved_2006}, \cite{Sidorenko2011SkewFeedback}, \cite[p.~64]{Wachterzeh_DecodingBlockConvolutionalRankMetric_2013}}]\label{lem:interleaved-fail}
	Let $\vec{c}_i=\vec{x}_i \cdot \vec{G}_{\mathcal{G}}$.
	All known\footnote{i.e., the algorithms in \cite{Loidreau_Overbeck_Interleaved_2006,Sidorenko2011SkewFeedback}, and \cite[p.~64]{Wachterzeh_DecodingBlockConvolutionalRankMetric_2013}.} efficient decoders for $\IntGabcode{u;n,k}$ codes fail to correct an error $\vec{z}\in \Fqmu^n$ with $\vec{z} = \sum_{i=1}^{u} \vec{z}_i\gamma_i^*$ and $\rank_q(\vec{z})=w$ if 
	\begin{equation*}
	\rank_{q^m}
	\begin{pmatrix}
	\Mooremat{n-w-1}{q}{\vec{g}}\\
	\Mooremat{n-k-w}{q}{\vec{c}_1+\vec{z}_1}\\
	\Mooremat{n-k-w}{q}{\vec{c}_2+\vec{z}_2}\\
	\vdots\\
	\Mooremat{n-k-w}{q}{\vec{c}_u+\vec{z}_u}\\
	\end{pmatrix}
	< n-1.
	\end{equation*}
\end{lemma}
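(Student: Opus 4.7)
The plan is to identify a common linear-algebraic core underlying the three referenced algorithms and to recognize the matrix in the statement as the key matrix of this core. The rank condition then translates into a kernel-dimension count that forces the decoders to fail.

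First, I would substitute $\vec{c}_i = \vec{x}_i\,\vec{G}_\mathcal{G}$ into the lower blocks of the matrix and use the nesting $\Mooremat{n-k-w}{q}{\vec{c}_i} \subseteq \mathrm{rowspan}\bigl(\Mooremat{n-w-1}{q}{\vec{g}}\bigr)$, which holds because each row $\vec{c}_i^{q^j}$ for $j=0,\dots,n-k-w-1$ is an $\Fqm$-linear combination of $\vec{g}^{q^j},\dots,\vec{g}^{q^{j+k-1}}$ and the exponent $j+k-1$ stays bounded by $n-w-2$. Consequently, the displayed matrix shares its row span, and in particular its rank, with
\[L := \begin{pmatrix}\Mooremat{n-w-1}{q}{\vec{g}}\\ \Mooremat{n-k-w}{q}{\vec{z}_1}\\ \vdots \\ \Mooremat{n-k-w}{q}{\vec{z}_u}\end{pmatrix}.\]
Hence any rank deficiency of the matrix in the lemma is equivalent to extra dimensions in the right kernel of $L$.

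Second, I would argue that each of the cited algorithms ultimately outputs, either explicitly or after a translation step, an element of this right kernel. In the Loidreau--Overbeck formulation \cite{Loidreau_Overbeck_Interleaved_2006} the decoder computes a right kernel vector of $L$ directly and reads off the error support from it. The skew-feedback algorithm of \cite{Sidorenko2011SkewFeedback} constructs the same object through minimal skew-polynomial shift-register synthesis, while the block-convolutional / interpolation viewpoint in \cite[p.~64]{Wachterzeh_DecodingBlockConvolutionalRankMetric_2013,WachterzehZeh-ListUniqueErrorErasureInterpolationInterleavedGabidulin_DCC2014} obtains it as a Gr\"obner basis element of the associated interpolation module. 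In all three cases, unique recovery of $(\vec{c}_1,\dots,\vec{c}_u)$ requires the right kernel to be one-dimensional, i.e., $\rank_{q^m}(L) = n-1$; the genuine error span polynomial then occupies that single line and can be read out up to scaling.

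Finally, whenever $\rank_{q^m}(L) < n-1$ the right kernel has $\Fqm$-dimension at least two and therefore contains vectors not proportional to the one induced by the true error. Each of the three decoders, faced with a multi-dimensional solution space and no further side information, declares failure rather than output an incorrect codeword; this is exactly the failure criterion documented in the three references. The main obstacle is step two: making the equivalence uniform across the three decoders, whose surface formulations differ considerably. Once the correspondence between their respective key matrices and $L$ has been spelt out—which is essentially carried out in \cite{Loidreau_Overbeck_Interleaved_2006} and revisited in \cite[p.~64]{Wachterzeh_DecodingBlockConvolutionalRankMetric_2013}—the failure conclusion follows immediately from the kernel-dimension count.
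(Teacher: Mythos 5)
The paper does not prove this lemma at all---it is imported verbatim from the three cited works---so there is no in-paper proof to compare against. Your reconstruction is nevertheless correct and consistent with how the paper subsequently \emph{uses} the lemma: your first step (substituting $\vec{c}_i=\vec{x}_i\vec{G}_{\mathcal{G}}$, noting that the rows $\vec{c}_i^{q^\ell}$ for $\ell\le n-k-w-1$ involve only $\vec{g},\vec{g}^q,\dots,\vec{g}^{q^{n-w-2}}$ and hence lie in the row span of $\Mooremat{n-w-1}{q}{\vec{g}}$, so the rank condition reduces to a condition on the stacked Moore matrix of the $\vec{z}_i$ alone) is exactly the reduction the paper performs in Section~\ref{subsec:int-attack} when it passes from Lemma~\ref{lem:interleaved-fail} to condition \eqref{eq:fail-interleaved}, and again inside the proof of Theorem~\ref{thm:equiv-int-got}. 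The kernel-dimension count in your third step is the standard Loidreau--Overbeck failure criterion. The only soft spot is the one you flag yourself: the claim that the Sidorenko--Jiang--Bossert shift-register formulation and the interpolation-based decoder reduce to the \emph{same} key matrix $L$ is asserted rather than shown, and making that uniform is the real technical content---but that is precisely what the lemma's three references are cited for, so nothing is missing relative to what the paper itself establishes.
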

It is widely conjectured that there cannot be a decoder that decodes the error patterns of Lemma~\ref{lem:interleaved-fail} \emph{uniquely}. 
Decoding these failing error patterns has been subject to intensive research since the Loidreau--Overbeck decoder \cite{Loidreau_Overbeck_Interleaved_2006} was found in 2006. In the Hamming metric, the equivalent problem for Reed--Solomon codes has been studied since 1997 \cite{krachkovsky1997decoding} and more than a dozen papers have dealt with decoding algorithms for these codes. None of these papers was able to give a polynomial-time decoding algorithm for the cases of Lemma~\ref{lem:interleaved-fail}. 
It seems that all unique decoders \emph{have to fail} for the error patterns of Lemma~\ref{lem:interleaved-fail} since for these cases, there is no unique decision, i.e., more then one interleaved codeword lies in the ball of radius $w$ around the received word.

\section{Key Generation in the  Original Faure--Loidreau System}\label{sec:fl_system}
In this section, we recall the key generation algorithm of the original FL cryptosystem, we give a coding-theoretic interpretation of the original public key, and analyze the structural attack from \cite{Gaborit2018-FL-Attack}.
\subsection{The Original Algorithm}

Let $q,m,n,k,u,w,\tpub$ be positive integers that fulfill the restrictions given in Table~\ref{tab:parameters} and are publicly known.
In the following, we consider the three finite fields $\Fq$, $\Fqm$, and $\Fqmu$, which are extension fields of each other, i.e.:
\begin{equation*}
\Fq \subseteq \Fqm \subseteq \Fqmu.
\end{equation*}
\begin{table*}
\caption{Summary of the Parameters}
\renewcommand{\arraystretch}{1.6} % to make the height of each row equal
\begin{center}
\begin{tabular}{c|l|l}
Name & Use & Restriction \\
\hline
$q$ & small field size & prime power \\
$m$ & extension degree & $1 \leq m$ \\
$n$ & code length & $n \leq m$ \\
$k$ & code dimension & $k < n$ \\
$u$ & extension degree & $2 \leq u < k$ \\
$w$ & error weight in public key & $\max \left\{n-k-\frac{k-u}{u-1}, \left\lfloor\frac{n-k}{2} \right\rfloor+1 \right\} \leq w < \frac{u}{u+2} (n-k)$ \\
$\tpub$ & error weight in ciphertext & $\tpub = \left\lfloor \frac{n-k-w}{2} \right\rfloor$ \\
$\dimZ$ & \makecell[l]{$\Fqm$-dimension of error vector \\ in the public key }& $\dimZ< \frac{w}{n-k-w}\ $  {and $\ \zeta q^{\zeta w-m} \leq \tfrac{1}{2}$}                                        
\end{tabular}
\end{center}
\label{tab:parameters}
\end{table*}
The original FL key generation is shown in Algorithm~\ref{alg:KeyGen}.
 \begin{algorithm}
 \DontPrintSemicolon
 \KwIn{Parameters $q,m,n,k,u,w$ as in Table~\ref{tab:parameters}}
 \KwOut{Secret key $\sk$, public key $\pk$}
 $\vec{g} \sample \{ \a \in \Fqm^n \,:\, \rank_q(\a) = n \} $\\
 $\vec{x} \sample \{ \a \in \Fqmu^k \,:\, \dim (\langle a_{k-u+1},\dots, a_k \rangle_{\Fqm} ) = u  \}$  \\
 $\vec{s} \sample \{ \a \in \Fqmu^w \,:\,\rank_{q} (\a) = w \}$ \label{line:keygen_s}\\
 $\vec{P} \sample \{ \A \in \Fq^{n\times n} : \rank_{q}(\A) = n  \}  $  \\
 $\mathbf G_{\mycode{G}} \gets \Mooremat{k}{q}{\vec{g}}$ \\
 $\vec{z} \gets (\vec{s} \ | \ \0) \cdot \vec{P}^{-1}$ \\
 $\kpub \gets \vec{x} \cdot \mathbf G_{\mycode{G}} + \vec{z}$ \label{line:encrypt_keygen} \\
 $\tpub \gets \left\lfloor \frac{n-w-k}{2} \right \rfloor$\\
 \Return{$\sk = (\vec{x},\vec{P}_{[w+1,n]})$, $\pk = (\vec{g}, \kpub)$}
 \caption{Original FL Key Generation}
 \label{alg:KeyGen}
 \end{algorithm}

\subsection{Coding-Theoretic Interpretation of the Original Public Key}\label{ssec:kpub_corrupted_interleaved_codeword}
The public key $\kpub$ of the FL system is a corrupted codeword of a $u$-interleaved Gabidulin code.
To our knowledge, this connection between the public key and interleaved Gabidulin codes has not been known before.
This interpretation is central to this paper and will be used in Section~\ref{sec:repair} to define the public key of {\SystemName} such that is not vulnerable against the attacks from \cite{Gaborit2018-FL-Attack} and described in Section~\ref{sec:eff_attack}.

\begin{theorem}\label{thm:public_key_is_corrupted_interleaved_codeword}
Fix a basis $\vec{\gamma}$ of $\Fqmu$ over $\Fqm$. Let $\vec{\gamma}^*$ be a dual basis to $\vec{\gamma}$ and write $\kpub = \sum_{i=1}^{u} \kpub^{(i)}\gamma_i^*$. Then,
\begin{equation}
\begin{pmatrix}
\kpub^{(1)}\\
\kpub^{(2)}\\
\vdots\\
\kpub^{(u)}
\end{pmatrix}
=
\begin{pmatrix}
\vec{c}_{\mycode{G}}^{(1)}\\
\vec{c}_{\mycode{G}}^{(2)}\\
\vdots\\
\vec{c}_{\mycode{G}}^{(u)}
\end{pmatrix}
+\!
\begin{pmatrix}
\vec{z}_1\\
\vec{z}_2\\
\vdots\\
\vec{z}_u
\end{pmatrix}, \label{eq:kpub_as_corrupted_interleaved_codeword}
\end{equation}
where the $\vec{c}_{\mycode{G}}^{(i)} \in \Fqm^n$ are codewords of the Gabidulin code $\mathcal{G}(n,k)$ with generator matrix $\GG$ and the $\vec{z}_i \in \Fqm^n$ are obtained from the vector $\vec{z} \in \Fqmu^n$ by $\vec{z} = \sum_{i=1}^{u} \vec{z}_i\gamma_i^*$.
\end{theorem}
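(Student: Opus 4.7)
The plan is to expand both $\vec{x}$ and $\vec{z}$ in the dual basis $\vec{\gamma}^*$ and then exploit $\Fqm$-linearity to push the coefficients through the Gabidulin generator matrix. Since $\vec{\gamma}^*$ is a (dual) basis of $\Fqmu$ over $\Fqm$, any vector in $\Fqmu^n$ admits a \emph{unique} expansion with coefficients in $\Fqm^n$. In particular, I would write $\vec{x} = \sum_{i=1}^{u} \vec{x}_i \gamma_i^*$ with $\vec{x}_i \in \Fqm^k$ and $\vec{z} = \sum_{i=1}^{u} \vec{z}_i \gamma_i^*$ with $\vec{z}_i \in \Fqm^n$ (the latter is exactly as in the theorem statement).

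Next, using Line~\ref{line:encrypt_keygen} of Algorithm~\ref{alg:KeyGen}, $\kpub = \vec{x} \cdot \GG + \vec{z}$. Since $\GG \in \Fqm^{k \times n}$ and the scalars $\gamma_i^* \in \Fqmu$ commute with entries in $\Fqm$, I would compute
\begin{equation*}
\kpub \;=\; \Bigl( \sum_{i=1}^{u} \vec{x}_i \gamma_i^* \Bigr) \cdot \GG + \sum_{i=1}^{u} \vec{z}_i \gamma_i^* \;=\; \sum_{i=1}^{u} \bigl( \vec{x}_i \cdot \GG + \vec{z}_i \bigr)\, \gamma_i^*,
\end{equation*}
and observe that each $\vec{x}_i \cdot \GG \in \Fqm^n$ is, by the very definition of $\Gabcode{n}{k}$, a codeword $\vec{c}_{\mycode{G}}^{(i)}$ of the Gabidulin code with generator $\GG$, while $\vec{x}_i \cdot \GG + \vec{z}_i \in \Fqm^n$. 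By the uniqueness of the dual-basis expansion of $\kpub$, identifying coefficients yields $\kpub^{(i)} = \vec{c}_{\mycode{G}}^{(i)} + \vec{z}_i$ for every $i \in [1,u]$, which is exactly \eqref{eq:kpub_as_corrupted_interleaved_codeword}.

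There is really no hard step here; the only thing worth being careful about is justifying that $\vec{\gamma}^*$ is indeed an $\Fqm$-basis of $\Fqmu$ (so that the expansions of $\vec{x}$, $\vec{z}$, and $\kpub$ are well-defined and unique), which follows from the non-degeneracy of the trace form $\Trmum$ defining the dual basis. The other subtlety to flag is that $\Fqm$-scalars commute through the matrix-vector product with $\GG$, which is automatic because $\GG$ has entries in $\Fqm \subseteq \Fqmu$. With these two observations, the equality is a direct bookkeeping of coefficients in the expansion, so the proof is essentially a one-line computation once the setup is in place.
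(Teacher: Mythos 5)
Your proof is correct and follows essentially the same route as the paper: expand $\vec{x}$ (and $\vec{z}$) in the dual basis, use $\Fqm$-linearity of multiplication by $\GG$ to distribute the product, and identify the coefficients row by row, noting that each $\vec{x}_i \cdot \GG$ is a $\Gabcode{n}{k}$ codeword. Your added remarks on the uniqueness of the dual-basis expansion and the commutation of $\Fqm$-scalars through $\GG$ only make explicit what the paper leaves implicit.
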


\begin{proof}
Recall the definition of the public key
\begin{equation*}
\kpub = \vec{x} \cdot \mathbf G_{\mycode{G}} + \vec{z},
\end{equation*}
where $\vec{x} \in \Fqmu^k$, $\vec{G}_{\mycode{G}} \in \Fqm^{k\times n}$ is the generator matrix of a $\Gabcode{n}{k}$ code, and $\vec{z} \in \Fqmu^n$ with $\rank_q(\vec{z})=w$.
Let $\vec{x} = \sum_{i=1}^{u} \vec{x}_i\gamma_i^*$, where the $\vec{x}_i$ have coefficients in $\Fqm$.

Then, we obtain the following representation of the public key $\kpub$ as a $u \times n$ matrix in $\Fqm$
\begin{equation*}
\begin{pmatrix}
\kpub^{(1)}\\
\kpub^{(2)}\\
\vdots\\
\kpub^{(u)}
\end{pmatrix}
=
\begin{pmatrix}
\vec{x}_1\\
\vec{x}_2\\
\vdots\\
\vec{x}_u
\end{pmatrix}\!\cdot \vec{G}_{\mycode{G}}
+\!
\begin{pmatrix}
\vec{z}_1\\
\vec{z}_2\\
\vdots\\
\vec{z}_u
\end{pmatrix}
=
\begin{pmatrix}
\vec{x}_1\cdot \vec{G}_{\mycode{G}}\\
\vec{x}_2\cdot \vec{G}_{\mycode{G}}\\
\vdots\\
\vec{x}_u\cdot \vec{G}_{\mycode{G}}
\end{pmatrix}
+\!
\begin{pmatrix}
\vec{z}_1\\
\vec{z}_2\\
\vdots\\
\vec{z}_u
\end{pmatrix}.
\end{equation*}
Since $\vec{x}_i \cdot \vec{G}_{\mycode{G}}$ is a codeword of a $\Gabcode{n}{k}$ code, $\forall i \in [1,u]$, the matrix representation of $\kpub$ can be seen as a codeword from an $\IntGabcode{u;n,k}$ code, corrupted by an error.  
\qed
\end{proof}

Note that the error $(\z_1^\top,\dots,\z_u^\top)^\top$ in \eqref{eq:kpub_as_corrupted_interleaved_codeword} has $\Fq$-rank at most $w$ due to the structure of $\z = (\s \mid \vec{0}) \Pinv$.

\subsection{Efficient Key Recovery of the Original FL Key}\label{sec:eff_attack}
{The attack by \emph{Gaborit, Otmani and Tal\'e Kalachi} (GOT) on the original FL system in~\cite{Gaborit2018-FL-Attack} (see Algorithm~\ref{alg:got_attack} below) is an efficient structural attack which computes a valid private key of the FL system in cubic time when the public key fulfills certain conditions.} We recall this attack in the following and derive an alternative, equally powerful, attack based on interleaved decoding the public key, utilizing the observation of the previous subsection. We prove that the failure conditions of both attacks are equivalent.
The interleaved decoding attack does not have any advantage in terms of cryptanalysis compared to \cite{Gaborit2018-FL-Attack}, but enables us to exactly predict for which public keys both attacks work and for which the attacks fail.

\subsubsection{GOT Attack}
The key recovery in the GOT attack (Algorithm~\ref{alg:got_attack}) succeeds under the conditions of the following theorem.
 \begin{algorithm}
 \DontPrintSemicolon
 \KwIn{Public key $(\vec{g},\kpub)$}
 \KwOut{Private key $(\vec{x},\vec{P})$}
Choose $\gamma_1,\dots,\gamma_u $ to be a basis of $\Fqmu$ over $\Fqm$ \\
$\kpub^{(i)} \gets\Trmum(\gamma_i \kpub)$ for all $i=1,\hdots,u$ \\
$\mathbf G_{\mycode{G}} \gets \Mooremat{k}{q}{\vec{g}}$ \\
Pick at random a non-zero vector $\widetilde{\vec{h}} \in \Fqm^n$ such that
  \begin{equation*}
    \mathcal{M}_{n-w-k,q}
    \begin{pmatrix}
        \vec{G}_{\mycode{G}}\\
        \kpub^{(1)} \\
        \vdots \\
        \kpub^{(u)} \\
      \end{pmatrix}
      \cdot
    \widetilde{\vec{h}}^{\text{T}} = \vec{0}.
    \end{equation*} \\
Choose $\vec{P} \in \Fq^{n\times n}$ and $\vec{h}^{\prime}\in \Fqm^{n-w}$ such that
$    \widetilde{\vec{h}} \big(\vec{P}^{-1} \big)^{\text{T}} = (\ \vec{0} \ | \ \vec{h}^{\prime})$
\\
Choose $\vec{x}$ such that
$    \vec{x} \vec{G}_{\mycode{G}} \vec{P}^{\prime} = \kpub \vec{P}^{\prime}$,
  where $\vec{P}^{\prime}=\vec{P}_{[w+1,n]} \in \Fq^{n \times (n-w)}$ \\
\Return{$(\vec{x},\vec{P})$}
 \caption{GOT Attack}
 \label{alg:got_attack}
 \end{algorithm}

\begin{theorem}[GOT Attack {\cite[Thm.~1]{Gaborit2018-FL-Attack}}]\label{thm:GOT-attack-success}
	Let $\gamma_1,\dots,\gamma_u \in \Fqmu$ be a basis of $\Fqmu$ over $\Fqm$ and let $\vec{z}_i = \Trmum(\gamma_i \vec{z})$, for $i=1,\dots u$.

	If the matrix $\vec{Z} \in \Fqm^{u \times n}$ with $\vec{z}_1,\dots,\vec{z}_u$ as rows, satisfies
	\begin{equation*}
	\rank_{q^m} (\Mooremat{n-k-w}{q}{\vec{Z}}) = w,
	\end{equation*}
	then $(\vec{x}, \vec{z})$ can be recovered from $(\vec{G}_{\mycode{G}},\kpub)$ with $\mathcal{O}(n^3)$ operations in $\Fqmu$ by using Algorithm~\ref{alg:got_attack}.
\end{theorem}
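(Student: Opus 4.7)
The plan is to walk through Algorithm~\ref{alg:got_attack} and, at each step, verify that the stated rank hypothesis on $\Mooremat{n-k-w}{q}{\vec{Z}}$ is exactly what makes it succeed; the overall complexity then follows from a standard linear-algebra count. The backbone is Theorem~\ref{thm:public_key_is_corrupted_interleaved_codeword}, which decomposes each $\kpub^{(i)}$ as $\vec{c}_i + \vec{z}_i$ with $\vec{c}_i \in \Gabcode{n}{k}$, so that the matrix whose kernel is computed in line~4 splits into a ``Gabidulin block'' and a ``noise block''.

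First I would bound the rank of that matrix to show a non-trivial kernel exists. For any $j \in [0,n-w-k-1]$, the Frobenius power $\vec{c}_i^{q^j}$ lies in the $\Fqm$-span of $\vec{g}^{q^j},\dots,\vec{g}^{q^{j+k-1}}$, so the rows contributed by $\kpub^{(i)}$ differ from those contributed by $\vec{G}_{\mycode{G}}$ only through the corresponding rows of $\Mooremat{n-k-w}{q}{\vec{Z}}$. The $\vec{G}_{\mycode{G}}$-block spans $\{\vec{g},\vec{g}^q,\dots,\vec{g}^{q^{n-w-2}}\}$ with rank $n-w-1$, and by hypothesis the noise block adds exactly $w$ new directions, giving total rank at most $n-1 < n$. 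Hence a non-zero $\widetilde{\vec{h}} \in \Fqm^n$ in the right kernel exists and can be computed by Gaussian elimination; moreover, subtracting the Gabidulin contribution forces $\Mooremat{n-k-w}{q}{\vec{Z}}\,\widetilde{\vec{h}}^\top = \vec{0}$, which by the rank hypothesis is equivalent to $\vec{Z}\,\widetilde{\vec{h}}^\top = \vec{0}$.

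Next I would recover $\vec{P}$ and $\vec{x}$. Because $\vec{z} = (\vec{s}\mid\vec{0})\vec{P}^{-1}$ and $\vec{P}\in\Fq^{n\times n}$ commutes with the $\Fq$-linear trace, every $\vec{z}_i$ --- and hence the whole row space of $\vec{Z}$ --- is an $\Fqm$-combination of only the first $w$ rows of $\vec{P}^{-1}$; since $\rank_{q^m}\vec{Z}=w$, this span coincides with the full $w$-dimensional subspace generated by those rows. Consequently $\vec{Z}\,\widetilde{\vec{h}}^\top = \vec{0}$ is equivalent to $\widetilde{\vec{h}}(\vec{P}^{-1})^\top$ being supported on its last $n-w$ coordinates --- exactly the condition of line~5 --- and completing a basis yields a concrete $\vec{P} \in \Fq^{n\times n}$ and $\vec{h}'$. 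The identity $\vec{z}\,\vec{P}_{[w+1,n]} = \vec{0}$ then reduces line~6 to the linear system $\kpub\,\vec{P}_{[w+1,n]} = \vec{x}\,\vec{G}_{\mycode{G}}\,\vec{P}_{[w+1,n]}$, which is well-posed since the parameter range enforces $n-w\geq k$ and $\vec{G}_{\mycode{G}}$ has the full-rank Moore structure; $\vec{x}$ is then read off by a linear solve. All linear-algebra operations act on $O(n)\times n$ matrices over $\Fqm\subseteq\Fqmu$, giving the claimed $\mathcal{O}(n^3)$ operations over $\Fqmu$.

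The main obstacle I expect is conceptual rather than computational: the $\vec{P}$ recovered in line~5 is generally \emph{not} the honestly sampled one, so one must argue that any $\vec{P}$ consistent with $\widetilde{\vec{h}}$, together with the $\vec{x}$ from line~6, constitutes an \emph{alternative} private key that correctly decrypts every ciphertext. Making this key-equivalence precise --- and in particular ruling out that $\kpub$ admits a decomposition with a strictly smaller error support that would make the linear solve for $\vec{x}$ ill-posed --- is the substantive part of the proof, the rest being essentially rank bookkeeping on Moore matrices.
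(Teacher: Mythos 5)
First, a point of calibration: the paper does not prove Theorem~\ref{thm:GOT-attack-success} at all---it is imported verbatim as \cite[Thm.~1]{Gaborit2018-FL-Attack}, so there is no in-paper proof to compare against. The closest in-paper material is the proof of Theorem~\ref{thm:equiv-int-got}, which performs exactly the kind of row-space bookkeeping on $\Mooremat{n-w-k}{q}{\vec{Z}}$ that you attempt, and your rank count for line~4 of Algorithm~\ref{alg:got_attack} (a Gabidulin block spanning $\vec{g},\vec{g}^q,\dots,\vec{g}^{q^{n-w-2}}$ of rank $n-w-1$, plus a noise block of rank $w$, hence total rank at most $n-1$ and a non-trivial right kernel) is correct and is the right way to start.

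There are, however, two concrete gaps, and the issue you flag as the hard one is actually the easy one. (i) You slide from the hypothesis $\rank_{q^m}(\Mooremat{n-k-w}{q}{\vec{Z}})=w$ to the assertion ``since $\rank_{q^m}\vec{Z}=w$''. These are very different: in {\SystemName} one has $\rank_{q^m}(\vec{Z})=\zeta\ll w$, and the gap between these two ranks is precisely what the new key generation exploits. The argument must be run with the full Moore matrix: under the hypothesis it is the $\Fqm$-row space of $\Mooremat{n-k-w}{q}{\vec{Z}}$---not of $\vec{Z}$---that fills the $w$-dimensional $\Fqm$-span of the first $w$ rows of $\Pinv$, and it is $\Mooremat{n-k-w}{q}{\vec{Z}}\,\widetilde{\vec{h}}^\top=\vec{0}$ (which line~4 does deliver, by your own subtraction argument) that pins $\widetilde{\vec{h}}$ to the right kernel of those rows; $\vec{Z}\widetilde{\vec{h}}^\top=\vec{0}$ alone is strictly weaker and does not suffice. (ii) Line~5 builds $\vec{P}$ from a \emph{single} kernel vector $\widetilde{\vec{h}}\in\Fqm^n$. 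For the resulting $\vec{P}$ to satisfy $\vec{z}\vec{P}_{[w+1,n]}=\vec{0}$---which is what line~6 silently requires---the $\Fq$-row support $\mathrm{RowSpace}_{\Fq}\big(\extsmallfield_{\boldsymbol{\gamma}}(\widetilde{\vec{h}})\big)$ must exhaust the entire $(n-w)$-dimensional $\Fq$-kernel of the first $w$ rows of $\Pinv$, i.e.\ one needs $\rank_q(\widetilde{\vec{h}})=n-w$; otherwise there are choices of $\vec{P}$ admissible in line~5 for which line~6 is inconsistent. Your sketch never argues that such an $\widetilde{\vec{h}}$ exists or is the one found. By contrast, the ``key-equivalence'' worry you defer largely dissolves: once $\vec{z}\vec{P}_{[w+1,n]}=\vec{0}$, the matrix $\GG\vec{P}_{[w+1,n]}=\Mooremat{k}{q}{\vec{g}\vec{P}_{[w+1,n]}}$ has full row rank $k$ because $\rank_q(\vec{g}\vec{P}_{[w+1,n]})=n-w>k$ by Table~\ref{tab:parameters}, so line~6 has the \emph{unique} solution $\vec{x}=\vec{x}_{\mathrm{honest}}$ and then $\vec{z}=\kpub-\vec{x}\GG$ is recovered exactly; no equivalence argument between alternative private keys is needed for the theorem as stated.
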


If the key is generated by Algorithm~\ref{alg:KeyGen}, the GOT attack breaks the original FL system \emph{with high probability}.

\subsubsection{Interleaved Decoding Attack}\label{subsec:int-attack}
Recall from Theorem~\ref{thm:public_key_is_corrupted_interleaved_codeword} that the public key $\kpub$ is a corrupted interleaved codeword.
Based on this observation we will derive a structural attack on the original FL system to which we refer as \emph{Interleaved Decoding Attack} in the following. We prove that interleaved decoding and the GOT attack \emph{fail} (i.e, do not provide any information) for the \emph{same public keys}.
The idea is to decode $\kpub$ in an interleaved Gabidulin code. 
Since $w \leq  \frac{u}{u+1}(n-k)$, such a decoder will return $\vec{x}$ with high probability, but fail in certain cases, see Section~\ref{subsec:int-rank-codes}.

Since $\rank_{q^m}(\Mooremat{n-w-1}{q}{\vec{g}})=n-w-1$, the interleaved decoder fails if (compare Lemma~\ref{lem:interleaved-fail}):
\begin{equation}\label{eq:fail-interleaved}
\rank _{q^m} \big(\MooreZ \big) := \RkError< w,
\end{equation}	
where
\begin{equation}\label{eq:matrix-repair}
  \MooreZ = 
  \begin{pmatrix}
\Mooremat{n-k-w}{q}{\vec{z}_1}\\
\Mooremat{n-k-w}{q}{\vec{z}_2}\\
\vdots\\
\Mooremat{n-k-w}{q}{\vec{z}_u}
\end{pmatrix}.
\end{equation}

\subsubsection{Equivalence of GOT Attack and Interleaved Decoding Attack}

In the following, we prove that the failure condition of the GOT Attack is equivalent to the condition that decoding $\kpub$ in an interleaved Gabidulin code fails.
\begin{theorem}\label{thm:equiv-int-got}
	The GOT Attack from \cite{Gaborit2018-FL-Attack} fails if and only if the Interleaved Decoding Attack fails. In particular, both fail if \eqref{eq:fail-interleaved} holds.
\end{theorem}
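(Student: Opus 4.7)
The plan is to identify a single rank quantity $\RkError$ that controls both attacks and to argue that each attack fails precisely when $\RkError<w$. Recall that $\RkError = \rank_{q^m}(\MooreZ)$ is already the quantity defined in \eqref{eq:fail-interleaved}. I will proceed in three steps: (a)~show that $\RkError$ also controls the GOT success criterion; (b)~reduce the Interleaved-Decoding failure condition of Lemma~\ref{lem:interleaved-fail} to $\RkError<w$; (c)~use the universal bound $\RkError\leq w$ to collapse both criteria to the dichotomy $\RkError<w$ versus $\RkError=w$.

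For (a), note that $\Mooremat{n-k-w}{q}{\vec{Z}}$ appearing in Theorem~\ref{thm:GOT-attack-success} and $\MooreZ$ appearing in \eqref{eq:matrix-repair} contain the same set of $u(n-k-w)$ row vectors $\vec{z}_i^{[q^j]}$ with $i\in[1,u]$ and $j\in[0,n-k-w-1]$: the former interleaves them by Frobenius power, the latter groups them by $i$. The two matrices are therefore row-permutations of one another, and so share $\Fqm$-row space and rank $\RkError$. The bound $\RkError\leq w$ needed in~(c) follows from the generalized Moore-matrix rank inequality together with $\rank_q(\vec{z})=w$ and the decomposition $\vec{z}=\sum_i\vec{z}_i\gamma_i^*$.

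For (b), split the block matrix of Lemma~\ref{lem:interleaved-fail} into its top block $\Mooremat{n-w-1}{q}{\vec{g}}$ (of full rank $n-w-1$ since $\rank_q(\vec{g})=n$) and the $u$ bottom blocks $\Mooremat{n-k-w}{q}{\vec{c}_i+\vec{z}_i}$. From $\vec{c}_i = \vec{x}_i\GG$, every row $\vec{c}_i^{[q^j]} = \vec{x}_i^{[q^j]}\GG^{[q^j]}$ for $j\in[0,n-k-w-1]$ lies in the $\Fqm$-span of $\vec{g}^{[q^j]},\dots,\vec{g}^{[q^{j+k-1}]}$, and these rows are all already present in the top block because $j+k-1\leq n-w-2$. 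Row-reducing the codeword contributions against the top block therefore leaves exactly $\MooreZ$ underneath, so the total rank of the block matrix equals $(n-w-1)+\RkError$ and the Lemma's threshold ``$<n-1$'' becomes precisely $\RkError<w$.

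Combining (a)--(c), the GOT attack succeeds iff $\RkError=w$ (Theorem~\ref{thm:GOT-attack-success} together with $\RkError\leq w$) and the Interleaved Decoding Attack fails iff $\RkError<w$ (by (b)), so the two failure conditions coincide and in particular both fail whenever \eqref{eq:fail-interleaved} holds. The main subtlety I anticipate is the converse direction for the GOT attack: strictly, Theorem~\ref{thm:GOT-attack-success} only asserts the sufficiency of $\RkError=w$ for success, and one has to rule out an accidental GOT success when $\RkError<w$. I would handle this by inspecting Step~4 of Algorithm~\ref{alg:got_attack}: in that regime its kernel has dimension $w+1-\RkError\geq 2$, so no single choice of $\widetilde{\vec{h}}$ uniquely determines the private $\vec{P}$, which is exactly the ambiguity that prevents unique interleaved decoding in Lemma~\ref{lem:interleaved-fail}.
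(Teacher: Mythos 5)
Your proof is correct and follows essentially the same route as the paper's: both arguments rest on observing that $\Mooremat{n-w-k}{q}{\vec{Z}}$ and $\MooreZ$ are row permutations of one another (hence share the rank $\RkError$) and on the universal bound $\RkError\leq w$, which collapses the two failure criteria to the single dichotomy $\RkError<w$ versus $\RkError=w$. You additionally spell out two steps the paper leaves implicit --- the row-reduction showing that Lemma~\ref{lem:interleaved-fail}'s condition reduces to \eqref{eq:fail-interleaved}, and the kernel-dimension argument ruling out accidental GOT success when $\RkError<w$ --- which is welcome extra rigor but not a different method.
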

\begin{proof}
Rewrite the matrix from Theorem~\ref{thm:GOT-attack-success} as
\begin{align}
\Mooremat{n-w-k}{q}{\vec{Z}}	=
\begin{pmatrix}
\vec{z}_1\\
\vdots\\
\vec{z}_u\\
\vec{z}_1^{q}\\
\vdots\\
\vec{z}_u^{q}\\
\vdots\\
\vec{z}_1^{q^{n-w-k-1}}\\
\vdots\\
\vec{z}_u^{q^{n-w-k-1}}\\
\end{pmatrix}
  \label{eq:thm1-rewrite}
\end{align}
and the matrix from equation \eqref{eq:matrix-repair} as
\begin{align}
\MooreZ
=
\begin{pmatrix}
\vec{z}_1\\
\vdots\\
\vec{z}_1^{q^{n-w-k-1}}\\
\vec{z}_2\\
\vdots\\
\vec{z}_2^{q^{n-w-k-1}}\\
\vdots\\
\vec{z}_u\\
\vdots\\
\vec{z}_u^{q^{n-w-k-1}}\\
\end{pmatrix}.
\label{eq:lemm2-rewrite}
\end{align}
Since the matrix in~\eqref{eq:thm1-rewrite} and in~\eqref{eq:lemm2-rewrite} only differ in row permutations, they are row-space equivalent, implying that they have the same rank. Further, the rank of the matrix in~\eqref{eq:lemm2-rewrite}  cannot become larger than~$w$ (since any vector in the right kernel of this matrix has rank weight at least $n-w$ \cite[Algorithm 3.2.1]{Overbeck_Diss_InterleveadGab}). 
Thus, the failures of Theorem~\ref{thm:GOT-attack-success} and Lemma~\ref{lem:interleaved-fail} are equivalent. \qed
\end{proof}

{In the next section, we will exploit the observation of Theorem~\ref{thm:equiv-int-got}, i.e., we propose a new key generation algorithm that avoids public keys that can be efficiently decoded by an interleaved decoder, thereby rendering the GOT attack useless.}

\section{The New System \SystemName}\label{sec:liga}
In this section, we propose a public-key code-based encryption scheme $\sys = (\keyGen,\enc,\dec)$ called $\SystemName$. The system is based on the original FL system~\cite{faure2006new}, where we keep both the original encryption and decryption algorithm, but replace the insecure key-generation algorithm. Further, we present a KEM-DEM version of $\SystemName$ denoted by $\syskem = (\keyGen,\encap,\decap)$. 

Later, in Section~\ref{sec:semantic}, we will analyze the security of the system. We single out problems from coding theory and we prove that the encryption version is IND-CPA secure and the KEM-DEM version is IND-CCA2 secure under the assumption that the stated problems are hard. Furthermore, we study new and known attacks on these problems and show that they all run in exponential time (see~Section~\ref{sec:sec_analysis}).

\subsection{The New Key Generation Algorithm}\label{sec:repair}
We introduce a new key generation algorithm that is based on choosing $\vec{z} = \sum_{i=1}^{u} \vec{z}_i\gamma^*_i$ in a way that $\RkError < w$, where $\RkError$ is the rank of the interleaved Moore matrix of the errors $\vec{z}_i$ in the public key, see \eqref{eq:matrix-repair}. Based on the dimension of the span of the $\vec{z}_i$, we will upper bound $\RkError$ in the following Theorem~\ref{thm:rank-S_mat}. Recall that when $\RkError < w$, the GOT attack~\cite{Gaborit2018-FL-Attack} and interleaved decoding of the public key fail, see Theorem~\ref{thm:equiv-int-got}. In this case, retrieving any knowledge about the private key from the public key requires to solve Problem~\ref{pro:restr-int-search-rsd} (defined later), which basically corresponds to decoding the interleaved codeword when error patterns occur for which all known decoders fail.

\begin{theorem}\label{thm:rank-S_mat}
  Let $\dim(\langle\vec{z}_1,\hdots,\vec{z}_u \rangle_{q^m}) = \dimZ$. Then
  \begin{equation}
    \label{eq:rank-S_mat}
    \RkError = \rank _{q^m} \big(\MooreZ \big) \leq \min \{\dimZ(n-k-w),w\}.
  \end{equation}
\end{theorem}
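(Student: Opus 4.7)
The plan is to establish the two upper bounds $\RkError\le w$ and $\RkError\le\dimZ(n-k-w)$ separately, each by exhibiting an explicit spanning set for the row space of $\MooreZ$ over $\Fqm$ of the corresponding size.

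For the bound $\RkError\le\dimZ(n-k-w)$, I would start from the hypothesis that the $\Fqm$-span of $\vec{z}_1,\dots,\vec{z}_u$ has dimension $\dimZ$. Pick any $\dimZ$ of the $\vec{z}_i$, say (after reordering) $\vec{z}_1,\dots,\vec{z}_\dimZ$, forming a basis, and write each remaining $\vec{z}_i = \sum_{l=1}^{\dimZ} a_{i,l}\vec{z}_l$ with $a_{i,l}\in\Fqm$. Applying the $q^j$-power Frobenius componentwise yields $\vec{z}_i^{q^j} = \sum_{l=1}^{\dimZ} a_{i,l}^{q^j}\vec{z}_l^{q^j}$, and since $a_{i,l}^{q^j}\in\Fqm$, every row of $\MooreZ$ lies in the $\Fqm$-span of the $\dimZ(n-k-w)$ vectors $\{\vec{z}_l^{q^j}: l\in[1,\dimZ],\,j\in[0,n-k-w-1]\}$. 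This immediately yields the claimed bound.

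The bound $\RkError\le w$ is the more interesting of the two, and will be the main technical point. The key observation is that $\rank_q(\vec{z})=w$ gives a decomposition $\vec{z}=\vec{v}\vec{B}$ with $\vec{v}\in\Fqmu^{w}$ and $\vec{B}\in\Fq^{w\times n}$ of full $\Fq$-rank. Writing $\vec{v} = \sum_{i=1}^{u}\vec{v}_i\gamma_i^*$ with $\vec{v}_i\in\Fqm^w$ and comparing coefficients in the dual basis $\boldsymbol{\gamma}^*$, I obtain the parallel decomposition $\vec{z}_i=\vec{v}_i\vec{B}$ for every $i\in[1,u]$. The crucial point is that $\vec{B}$ has entries in $\Fq$, so it is fixed by every Frobenius power: $\vec{z}_i^{q^j} = \vec{v}_i^{q^j}\vec{B}$. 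Therefore every row of $\MooreZ$ lies in the $\Fqm$-row span of $\vec{B}$, which has dimension at most $w$, giving $\rank_{q^m}(\MooreZ)\le w$.

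Taking the minimum of the two bounds completes the proof. The step I expect to be slightly delicate is the transfer of the $\Fq$-rank decomposition of $\vec{z}$ to a matching decomposition of each $\vec{z}_i$ with the \emph{same} right factor $\vec{B}\in\Fq^{w\times n}$; this requires working in the dual basis $\boldsymbol{\gamma}^*$ and using that $\Fq$-linear combinations in $\Fqmu$ descend coordinate-wise to $\Fq$-linear combinations in $\Fqm$ when read off in the dual basis. Once this structural fact is in place, both rank bounds are immediate consequences of Frobenius invariance of $\Fq$ and of scalar extension arguments, and no further calculation is needed.
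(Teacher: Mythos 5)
Your proof is correct and follows essentially the same route as the paper's: the $\dimZ(n-k-w)$ bound via Frobenius powers of the $\Fqm$-linear dependencies among the $\vec{z}_i$, and the $w$ bound via a right factor over $\Fq$ that is fixed by Frobenius and has only $w$ rows. The only (cosmetic) difference is that for the second bound you use the generic rank factorization $\vec{z}=\vec{v}\vec{B}$ coming from $\rank_q(\vec{z})=w$, whereas the paper plugs in the explicit key-generation form $\vec{z}=(\vec{s}\mid\vec{0})\vec{P}^{-1}$, of which your $\vec{B}$ is just the first $w$ rows of $\vec{P}^{-1}$.
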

\begin{proof}
  The dimension of $\langle\vec{z}_1,\hdots,\vec{z}_u \rangle_{q^m}$ implies that at most $\dimZ(n-k-w)$ rows of $\MooreZ$ are linearly independent over $\Fqm$, meaning that $\RkError \leq \dimZ(n-k-w)$.
  The definition of $\vec{z}=(\vec{s} \ | \ \0) \cdot \vec{P}^{-1}$ leads to

  \begin{align}
\RkError &= \rank_{q^m}(\MooreZ)   \nonumber\\
         &= \rank_{q^m}
\begin{pmatrix}
           \begin{pmatrix}
    \begin{array}{c | c}
      \Mooremat{n-k-w}{q}{\vec{s}_1} & \vec{0}\\
      \vdots & \vdots\\
      \Mooremat{n-k-w}{q}{\vec{s}_u} & \vec{0}\\
    \end{array}
  \end{pmatrix}\vec{P}^{-1}
  \end{pmatrix} \nonumber\\ 
         &= \rank_{q^m}
    \begin{pmatrix}
    \begin{array}{c}
      \Mooremat{n-k-w}{q}{\vec{s}_1}\\
      \vdots\\
      \Mooremat{n-k-w}{q}{\vec{s}_u}\\
    \end{array}
  \end{pmatrix}\nonumber \\
         &\leq w, \nonumber
  \end{align}
where the last inequality holds since $\vec{s}_1,\hdots,\vec{s}_u$ are vectors of length $w$. \qed
\end{proof}

We propose the following modification to Line~\ref{line:keygen_s} of the \textbf{Key Generation}, depending on the parameter $\zeta$:
{\RestyleAlgo{plain}
\begin{algorithm}
\DontPrintSemicolon
{
\setcounter{AlgoLine}{2}
$\mathcal{A} \sample \left\{ \text{subspace } \mathcal{U} \subseteq \Fqm^w \, : \, \dim \mathcal{U} = \zeta, \, \mathcal{U} \text{ has a basis of full-$\Fq$-rank elements} \right\}$ \\[2ex]
\setcounter{AlgoLine}{2}
\SetNlSty{textbf}{}{'}
$\begin{pmatrix}
\s_1 \\
\vdots \\
\s_u
\end{pmatrix} \sample \left\{ \begin{pmatrix}
\s_1' \\
\vdots \\
\s_u'
\end{pmatrix} \, : \, \langle\s_1',\dots,\s_u'\rangle_{\Fqm} = \mathcal{A}, \ \rank_{q}(\s_i') = w, \, \forall \, i \right\}$ \\
}
\label{alg:s_construction}
\end{algorithm}
}

Clearly, $\dim(\langle\vec{z}_1,\hdots,\vec{z}_u \rangle_{q^m}) = \dimZ$ in this case.
{To avoid that the GOT attack~\cite{Gaborit2018-FL-Attack} runs in polynomial time, Theorem~\ref{thm:rank-S_mat} implies that the parameter $\dimZ$ must always be chosen such that $\dimZ < \frac{w}{n-k-w}$.
In Section~\ref{sec:sec_analysis}, we will discuss several further exponential-time attacks on {\SystemName}.
Some of these attacks have a work factor depending on $\zeta$, which must be considered in the parameter design.

Furthermore, the condition $\rank_{q}(\s_i') = w$ ensures that $\rank_{q}(\z_i)=w$, i.e., as large as possible for a given subspace $\mathcal{A}$.
This choice maximizes the work factor of generic decoding attacks on the rows of the public key (seen as a received word of an interleaved Gabidulin code), see Section~\ref{sec:sec_analysis}.

The restriction of the choice of $\mathcal{A}$ to subspaces that contain a basis of full-$\Fq$-rank codewords is to ensure that the set from which we sample in Line~3' is non-empty. Hence, the key generation always works.
}

Compared to the choice of $\vec{z}$ in Line~\ref{line:keygen_s} of the original Key Generation algorithm, we restrict the choice of $\z$, but we will see in Section~\ref{sec:sec_analysis} that there are still enough possibilities for $\vec{z}$ to prevent an efficient naive brute-force attack.

{Appendix~\ref{app:practical_key_generation} contains a more detailed discussion on how to realize Lines $3$ and $3'$ in practice.}

\subsection{The Public Key Encryption Version}
The new key generation algorithm $\keyGen$, the encryption algorithm $\enc$ and the decryption algorithm $\dec$ are shown in Algorithm~\ref{alg:modKeyGen}, Algorithm~\ref{alg:Encrypt} and Algorithm~\ref{alg:Decrypt}, respectively. Compared to original key generation algorithm, the algorithm {\keyGen} has one more input parameter $\dimZ$ (cf.~Section~\ref{sec:repair}).
 \begin{algorithm}
 \DontPrintSemicolon
 \KwIn{Parameters $q,m,n,k,u,w,\dimZ$ as in Table~\ref{tab:parameters}}
 \KwOut{Private key $\sk$, public key $\pk$}
 $\vec{g} \sample \{ \a \in \Fqm^n \,:\, \rank_q(\a) = n \} $\\
 $\vec{x} \sample \{ \a \in \Fqmu^k \,:\, \dim (\langle a_{k-u+1},\dots, a_k \rangle_{\Fqm} ) = u  \}$  \\
{
\setcounter{AlgoLine}{2}
$\mathcal{A} \sample \left\{ \text{subspace } \mathcal{U} \subseteq \Fqm^w \, : \, \dim \mathcal{U} = \zeta, \, \mathcal{U} \text{ has a basis of full-$\Fq$-rank elements} \right\}$ \\
\setcounter{AlgoLine}{2}
\SetNlSty{textbf}{}{'}
$\begin{pmatrix}
\s_1 \\
\vdots \\
\s_u
\end{pmatrix} \sample \left\{ \begin{pmatrix}
\s_1' \\
\vdots \\
\s_u'
\end{pmatrix} \, : \, \langle\s_1',\dots,\s_u'\rangle_{\Fqm} = \mathcal{A}, \, \rank_{q}(\s_i') = w \, \forall \, i \right\}$ \\
}
\SetNlSty{textbf}{}{}
 $\vec{s} \gets \sum_{i=1}^{u}\vec{s}_i\gamma_i^{*}$ \\
 $\vec{P} \sample \{ \A \in \Fq^{n\times n} : \rank_{q}(\A) = n  \}  $  \\
 $\mathbf G_{\mycode{G}} \gets \Mooremat{k}{q}{\vec{g}}$ \\ 
 $\vec{z} \gets (\vec{s} \ | \ \0) \cdot \vec{P}^{-1}$ \\
 $\kpub \gets \vec{x} \cdot \mathbf G_{\mycode{G}} + \vec{z}$ \label{line:encrypt_keygen} \\
 $\tpub \gets \left\lfloor \frac{n-w-k}{2} \right \rfloor$\\
 \Return{$\sk = (\vec{x},\vec{P}_{[w+1,n]})$, $\pk = (\vec{g}, \kpub)$}
 \caption{$\keyGen\,(\cdot)$}
 \label{alg:modKeyGen}
 \end{algorithm}

 \begin{algorithm}
 \KwIn{Plaintext $\vec{m} \in \Fqm^{k-u}$, public key $\pk = (\vec{g},\kpub)$, randomness $\theta$}
 \KwOut{Ciphertext $\vec{c}$}
 $\alpha \sample \Fqmu \setminus \{0\}$ using $\theta$ \\
 $\vec{e} \sample \{ \a \in \Fqm^n : \rank_q(\a) = \tpub \}$ using $\theta$ \\
 $\mathbf G_{\mycode{G}} \gets \Mooremat{k}{q}{\vec{g}}$ \\
 \Return{$\vec{c} \gets (\vec{m},\vec{0}_u)\cdot\mathbf G_{\mycode{G}} + \Trmum(\alpha \kpub) +\vec{e}.$\label{line:encrypt_encoding}}
 \caption{$\enc\, (\cdot)$}
 \label{alg:Encrypt}
 \end{algorithm}

 \begin{algorithm}
 \KwIn{Ciphertext $\vec{c}\in \Fqm^n$, private key $\sk = (\vec{x},\vec{P}_{[w+1,n]})$}
 \KwOut{Plaintext $\vec{m}\in \Fqm^{k-u}$}
 $\c' \gets$ $\vec{c}\vec{P}_{[w+1,n]}$ \label{line:decrypt_cP} \\
 $\mycode{\mathcal{G}}' \gets $ Gabidulin code generated by $\vec{G}_{\mycode{G}}\vec{P}_{[w+1,n]}$ \\
 $\vec{m}^{\prime} \gets$ decode $\vec{c}'$ in $\mycode{\mathcal{G}}'$ \label{line:decrypt_decoding} \\
 $\alpha \gets \sum_{i=k-u+1}^{k}m_i^{\prime}x_i^*$ \label{line:decrypt_alpha} \\
 \Return{$\vec{m} \gets (\vec{m}^{\prime}-\Trmum(\alpha\vec{x}))_{[1,k-u]}$} \label{line:decrypt_m}
 \caption{$\dec\,(\cdot)$}
 \label{alg:Decrypt}
 \end{algorithm}

 The proposed system has no decryption failures as proven in the following theorem.
\begin{theorem}[Correctness~\cite{faure2006new}]
Algorithm~\ref{alg:Decrypt} returns the correct plaintext $\vec{m}$.
\end{theorem}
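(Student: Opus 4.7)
The plan is to trace the effect of the decryption steps on the ciphertext and show each step reduces exactly to the plaintext. Write the ciphertext as
\begin{equation*}
\vec{c} = \bigl[(\vec{m},\vec{0}_u) + \Trmum(\alpha \vec{x})\bigr]\cdot \GG \;+\; \Trmum(\alpha \vec{z}) \;+\; \vec{e},
\end{equation*}
using that $\kpub = \vec{x}\GG + \vec{z}$ and that $\Trmum$ is $\Fqm$-linear (so it commutes with right-multiplication by $\GG$, which has entries in $\Fqm$).

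First I would analyze Line~\ref{line:decrypt_cP}. The key observation is that $\vec{z}=(\vec{s}\mid\vec{0})\Pinv$ and $\vec{P}$ is $\Fq$-invertible, hence $\Pinv\cdot\vec{P}_{[w+1,n]}$ equals the block $\bigl(\begin{smallmatrix}\vec{0}\\ I_{n-w}\end{smallmatrix}\bigr)$. Since $\vec{P}_{[w+1,n]}$ has $\Fq$-entries, $\Trmum$ commutes with right-multiplication by it as well, so $\Trmum(\alpha \vec{z})\cdot \vec{P}_{[w+1,n]}=\Trmum\bigl(\alpha(\vec{s}\mid\vec{0})\bigr)\cdot\bigl(\begin{smallmatrix}\vec{0}\\ I_{n-w}\end{smallmatrix}\bigr)=\vec{0}$. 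Hence
\begin{equation*}
\vec{c}' = \bigl[(\vec{m},\vec{0}_u)+\Trmum(\alpha \vec{x})\bigr]\cdot \GG\vec{P}_{[w+1,n]} \;+\; \vec{e}\cdot\vec{P}_{[w+1,n]}.
\end{equation*}

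Next I would verify that the code $\mathcal{G}'$ in Line~2 of $\dec$ is a $\Gabcode{n-w}{k}$ Gabidulin code with decoding radius at least $\tpub$. Because the Frobenius $q$-power commutes with $\Fq$-linear combinations, $\GG\vec{P}_{[w+1,n]}=\Mooremat{k}{q}{\vec{g}\vec{P}_{[w+1,n]}}$, and since $\rank_q(\vec{g})=n$ and $\vec{P}_{[w+1,n]}$ has $\Fq$-rank $n-w$, the vector $\vec{g}\vec{P}_{[w+1,n]}$ is $\Fq$-linearly independent of length $n-w\le m$. Moreover, $\rank_q(\vec{e}\vec{P}_{[w+1,n]})\le\rank_q(\vec{e})=\tpub=\lfloor(n-k-w)/2\rfloor$, which is within the unique-decoding radius of $\mathcal{G}'$. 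Hence the decoder in Line~\ref{line:decrypt_decoding} returns exactly $\vec{m}'=(\vec{m},\vec{0}_u)+\Trmum(\alpha\vec{x})$.

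Finally I would verify the recovery of $\alpha$ and of $\vec{m}$. By construction, $x_{k-u+1},\dots,x_k$ is a basis of $\Fqmu$ over $\Fqm$, and $(x_{k-u+1}^*,\dots,x_k^*)$ is its dual basis. The last $u$ coordinates of $(\vec{m},\vec{0}_u)$ vanish, so $m_i'=\Trmum(\alpha x_i)$ for $i\in[k-u+1,k]$. The standard dual-basis expansion
\begin{equation*}
\alpha = \sum_{i=k-u+1}^{k}\Trmum(\alpha x_i)\,x_i^{*}
\end{equation*}
then shows that Line~\ref{line:decrypt_alpha} recovers $\alpha$ exactly. Subtracting $\Trmum(\alpha\vec{x})$ leaves $(\vec{m},\vec{0}_u)$, and projecting on the first $k-u$ positions returns $\vec{m}$.

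The only nontrivial step is the cancellation $\vec{z}\,\vec{P}_{[w+1,n]}=\vec{0}$ together with its interaction with $\Trmum$; once this and the standard Moore/Gabidulin facts are in place, each remaining step is a direct verification, so I do not anticipate a substantial obstacle.
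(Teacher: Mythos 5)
Your proposal is correct and follows essentially the same route as the paper's proof: project the ciphertext through $\vec{P}_{[w+1,n]}$ to cancel the $\Trmum(\alpha\vec{z})$ term, decode in the shortened Gabidulin code $\mycode{\mathcal{G}}'$, then recover $\alpha$ via the dual basis and subtract $\Trmum(\alpha\vec{x})$. You merely spell out some steps the paper leaves implicit (that $\GG\vec{P}_{[w+1,n]}$ generates a $\Gabcode{n-w}{k}$ code, that $\tpub$ matches its unique decoding radius, and the dual-basis expansion of $\alpha$), all of which check out.
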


\begin{proof}
Line~\ref{line:decrypt_cP} computes
\begin{equation*}
\vec{c}\vec{P} = (\vec{m} + \Trmum(\alpha\vec{x}))\vec{G}_{\mycode{G}}\vec{P} + (\Trmum(\alpha \vec{s}) | \vec{0}) + \vec{e}\vec{P},
\end{equation*}
whose last $n-w$ columns are given by
\begin{equation*}
\vec{c}^{\prime} = (\vec{m} + \Trmum(\alpha\vec{x}))\vec{G}^{\prime} + \vec{e}^{\prime},
\end{equation*}
where $\vec{G}^{\prime}:= \vec{G}_{\mycode{G}}\vec{P}_{[w+1,n]} \in \Fqm^{k \times (n-w)}$ and $\vec{e}^{\prime}:= \vec{e}\vec{P}_{[w+1,n]}$. By decoding in $\mycode{\mathcal{G}}'$, we thus obtain the vector
\begin{align*}
\vec{m}' = \vec{m} + \Trmum(\alpha\vec{x}).
\end{align*}
Since the last $u$ positions of the plaintext $\m$ are zero (i.e., $m_i=0$ for $i=k-u+1,\dots,k$), we get $\alpha = \sum_{i=k-u+1}^{k}m_i^{\prime}x_i^*$, where $\{x_{k-u+1}^*,\dots,x_k^*\}$ is a dual basis to $\{x_{k-u+1},\dots,x_k\}$. 
As we know $\alpha$ and $\x$, we can compute the plaintext~$\m$.
\qed
\end{proof}

\begin{remark}
{Steps~1 to~3 of Algorithm~\ref{alg:Decrypt} can be interpreted as an error-erasure decoder of a Gabidulin code.
As this observation may have advantages, especially for implementations, we present this connection formally in Appendix~\ref{app:decryption_as_error-erasure_decoding}.}
\end{remark}

A SageMath v8.8~\cite{stein_sagemath} implementation of the public key encryption version of {\SystemName} can be downloaded from \url{https://bitbucket.org/julianrenner/liga_pke}. The purpose of the source code is to clarify the shown algorithms but \emph{not} to provide a secure and efficient implementation.

\subsection{KEM/DEM Version $\sys$ and $\syskem$}
In~\cite{hofheinz2017modular}, generic transformations of IND-CPA secure public key encryptions into IND-CCA2 secure KEMs are proposed. In the following, we apply one of the transformations directly to $\sys$ to obtain $\syskem$. Later, in Section~\ref{ssec:semantic_security}, we will prove that $\sys$ fulfills the requirements such that the applied transformation is secure.

Let $\mathcal{G}$, $\mathcal{H}$ and $\mathcal{K}$ be hash functions, where $\mathcal{G}\neq\mathcal{H}$. In Algorithm~\ref{alg:Encap} and Algorithm~\ref{alg:Decap}, we show the encapsulation and decapsulation algorithms of the KEM $\syskem = (\keyGen,\encap,\decap)$. The algorithm $\keyGen$ remains Algorithm~\ref{alg:modKeyGen}. 

 \begin{algorithm}
 \KwIn{Public key $\pk = (\vec{g},\kpub)$}
 \KwOut{Ciphertext $(\vec{c},\vec{d})$, shared key $K$}
 $\m \sample \Fqm^{k-u}$ (this will serve as shared key) \\
 $\theta \gets \mathcal{G}(\m)$\\
 $\c\gets \enc(\m,\pk,\theta)$ \\
 $K\gets \mathcal{K}(\m,\c)$ \\
 $ \vec{d} \gets \mathcal{H}(\c,\vec{d}) $ \\
 \Return{$(\vec{c},\vec{d}),K$}
 \caption{$\encap\, (\cdot)$}
 \label{alg:Encap}
 \end{algorithm}

 \begin{algorithm}
 \KwIn{Ciphertext $(\vec{c},\vec{d})$, private key $\sk = (\vec{x},\vec{P}_{[w+1,n]})$}
 \KwOut{Shared key $K$}
 $\m' \gets \dec(\c,\sk)$ \\
 $\theta' \gets \mathcal{G}(\m')$ \\
 $\c' \gets \enc(\m',\pk,\theta)$ \\
 \eIf{$\c \neq \c'$ or $\vec{d} \neq \mathcal{H}(\m')$}
 {$K \gets \bot$}
 {$K \gets \mathcal{K}(\m',\c)$}
 \Return{$K$}
 \caption{$\decap\,(\cdot)$}
 \label{alg:Decap}
 \end{algorithm}

\subsection{Complexity}

\subsubsection{Asymptotically Fastest Methods}

It is essential for a cryptosystem that key generation, encryption, and decryption can be implemented fast.
The following results were not known when the original FL system was proposed, but have a major impact on its efficiency.

The complexity of key generation and encryption is dominated by the cost of encoding a Gabidulin code (Line~\ref{line:encrypt_keygen} of Algorithm~\ref{alg:modKeyGen} and Line~\ref{line:encrypt_encoding} of Algorithm~\ref{alg:Encrypt}).\footnote{Note that since $\x$ and $\z$ have coefficients in the large field $\Fqmu$, this line can be realized as encoding $u$ messages over $\Fqm$ with the generator matrix $\GG \in \Fqm^{k \times n}$ and corrupting these codewords with an error (see also Section~\ref{ssec:kpub_corrupted_interleaved_codeword} below).} The asymptotically fastest-known algorithms \cite{PuchingerWachterzeh-ISIT2016,caruso2017fast,puchinger2018fast} for this require
\begin{itemize}
\item $O^\sim(n^{\min\{\frac{\omega+1}{2},1.635\}})$ operations in $\Fqm$ {or $O^\sim(n^{\omega-2}m^2)$ operations in $\Fq$ in general\footnote{Which of the two algorithms is fastest depends on the relation between $n$ and $m$, as well as the used working basis of $\Fqm$ over $\Fq$.}} and
\item $O^\sim(n)$ operations in $\Fqm$ if the entries of $\g$ are a normal basis of $\Fqm/\Fq$,
\end{itemize}
where $\omega$ is the matrix multiplication exponent and $O^\sim$ means that $\log$ factors are neglected.

The bottleneck of decryption is (error-erasure) decoding of a Gabidulin code (Line~\ref{line:decrypt_decoding} of Algorithm~\ref{alg:Decrypt}, see also Appendix~\ref{app:decryption_as_error-erasure_decoding} below), where the asymptotically fastest algorithm costs
\begin{equation*}
O^\sim\!\left(n^{\min\{\frac{\omega+1}{2},1.635\}}\right)
\end{equation*}
operations in $\Fqm$ \cite{PuchingerWachterzeh-ISIT2016,puchinger2018fast}  or
\begin{equation*}
O^\sim(n^{\omega-2}m^2)
\end{equation*}
operations in $\Fq$ (decoder in \cite{PuchingerWachterzeh-ISIT2016} with linearized-polynomial operations in \cite{caruso2017fast}).

For small lengths $n$, the algorithms from \cite{RichterPlass_DecodingRankCodes_2004,gadouleau2008complexity,SilvaKschischang-FastEncodingDecodingGabidulin-2009,Wachterzeh_DecodingBlockConvolutionalRankMetric_2013}, which have quadratic complexity over $\Fqm$ (or cubic complexity over $\Fq$), might be faster than the mentioned algorithms due to smaller hidden constants in the $O$-notation.

\subsubsection{Timing Attacks}
{In some scenarios, resistance against timing attacks is required. Due to the fact that Step~\ref{line:encrypt_encoding} of Algorithm~\ref{alg:Encrypt} can be easily implemented in constant time, the proposed encryption algorithm does not reveal any information about secret knowledge through timing attacks. The same holds for the presented decryption algorithm since there exists an efficient constant-time decoding algorithm for Gabidulin codes~\cite{bettaieb2019preventing} and all other steps of Algorithm~\ref{alg:Decrypt} can be realized in constant time as well. 
}

\section{Difficult Problems \& Semantic Security of \SystemName}\label{sec:semantic}
In this section, we introduce problems in the rank metric that are considered to be difficult. Furthermore, we prove that the public-key encryption version of {\SystemName} is IND-CPA secure and the KEM version is IND-CCA2 secure under the assumption that there does not exist probabilistic polynomial-time algorithms that can solve them.
A detailed complexity analysis of {existing and new} algorithms solving the stated problems is given in Section~\ref{sec:sec_analysis}.

\subsection{Difficult Problems in the Rank Metric}\label{sec:problems}
{\SystemName} is based on several difficult problems which are stated in this section.
Note that the search variants of the problems correspond exactly to retrieving information about the private key from the public key (not necessarily a valid private key as explained in the following) or the plaintext from the ciphertext. The decisional problems are equivalent to distinguishing the public key or the ciphertext from random vectors.

\begin{definition}[\textsf{ResIG-Distribution}: Restricted Interleaved Gabidulin Code Distribution]
	\label{def:ResIG-Distr}
	
	\noindent Input: {$q,m,n,k,w >\lfloor\frac{n-k}{2}\rfloor ,\dimZ < \frac{w}{n-k-w}, u<w$.} \\
	Choose uniformly at random
	\begin{itemize}
		\item $\G \sample \mathcal{G}$, where $\mathcal{G}$ is the set of all generator matrices of $[n,k]$ Gabidulin codes over $\Fqm$
		\item $\M \sample \{\X \in \Fqm^{u\times k} : \rank_{q^m}(\X_{[k-u,k]}) = u\} $.
		\item {$\mathcal{A} \sample \{ \text{subspace } \mathcal{U} \subseteq \Fqm^w \, : \, \dim \mathcal{U} = \zeta, \, \mathcal{U} \text{ has a basis of full-$\Fq$-rank elements} \}$}
		\item {$\E' \sample \left\{ \begin{pmatrix}
		\s_1' \\
		\vdots \\
		\s_u'
		\end{pmatrix} \in \Fqm^{u \times w}\, : \, \langle\s_1',\dots,\s_u'\rangle_{\Fqm} = \mathcal{A}, \, \rank_{q}(\s_i') = w \, \forall \, i \right\}$}
		\item {$\vec{Q} \sample \{ \A \in \Fq^{w \times n} : \rank_{q}(\A) = w  \}$}
		\item {$\E \gets \E' \vec{Q}$}
	\end{itemize}
	Output: $(\G,\M\G+\E)$.
\end{definition}

\begin{problem}[\textsf{ResIG-Search}: Restricted interleaved Gabidulin Code Search Problem]\label{pro:restr-int-search-rsd}
	\noindent Input: $(\G,\Y)$ from \textup{\textsf{ResIG-Distribution}} with input $q,m,n,k,w,\dimZ,u$ (Definition~\ref{def:ResIG-Distr}).\\
	Goal: Find $\M\in\Fqm^{u\times k}$ and $\E \in \{\X \in \Fqm^{u \times n} : \rank_{\Fq}(\X) \leq w\}$ s.t. $\M\G+\E = \Y$.
\end{problem}
Problem~\ref{pro:restr-int-search-rsd} (\textsf{ResIG-Search}) is equivalent to decoding a codeword of a $u$-interleaved Gabidulin code that is corrupted by an error $\E$, see also Section~\ref{ssec:attack_interleaved_decoding} and is therefore the underlying problem of the structural attacks from Section~\ref{sec:eff_attack}.

Note however that not necessarily every solution of this problem can be used directly as a valid private key since some additional structure on $\E$ is introduced in {\SystemName} (i.e., Problem~\ref{pro:restr-int-search-rsd} is easier to solve than retrieving a valid private key of {\SystemName}).

\begin{problem}[\textsf{ResIG-Dec}: Restricted Interleaved Gabidulin Code Decisional Problem]\label{pro:restr-int-decision-rsd}
	
	\noindent Input: $(\G,\Y) \in \Fqm^{k\times n} \times \Fqm^{u\times n}$.\\
	Goal: Decide with non-negligible advantage whether $\Y$ came from \textup{\textsf{ResIG-Distribution}} with input $q,m,n,k,w,\dimZ,u$ (Definition~\ref{def:ResIG-Distr}) or the uniform distribution over $\Fqm^{u\times n}$.
\end{problem}

To solve \textsf{ResIG-Dec} (Problem~\ref{pro:restr-int-decision-rsd}), we do not know a better approach than trying to solve the associated \emph{search} problem (i.e., \textsf{ResIG-Search}), which is usually done for all decoding-based problems.

\begin{definition}[\textsf{ResErr-Distribution}: Restricted Error Distribution]\label{def:res-err}\\
	\noindent Input: $q,m,n,k,w,\tpub,u,\ve{\gamma}, (\G,\K) $ from \textup{\textsf{ResIG-Distribution}} (Definition~\ref{def:ResIG-Distr}).\\
	Choose uniformly at random
	\begin{itemize}
		\item $\e \sample \{ \x \in \Fqm^{n}\, : \, \rank_q(\x) = \tpub  \} $
		\item $\alpha \sample \Fqmu$
		\item $\k \gets \extsmallfield_{\vec{\gamma}}^{-1}(\K)$
		\item $\y \gets \Trmum(\alpha \k) +\vec{e} = \Trmum(\alpha \m) \G +  \Trmum(\alpha \z) + \e$
	\end{itemize}
	Output: $\y$.
\end{definition}
\begin{problem}[\textsf{ResG-Search}: Restricted Gabidulin Code Search Problem]\label{pro:fl-search-rsd}	
	
	\noindent Input: $q,m,n,k,w,\tpub,u,\ve{\gamma}, (\G,\K) $ from \textup{\textsf{ResIG-Distribution}} (Definition~\ref{def:ResIG-Distr}), $\y$ from \textup{\textsf{ResErr-Distribution}} (Definition~\ref{def:res-err}) with input $(\G,\K)$.\\
	Goal: Find $\m\in\Fqm^{k}$ and $\e \in \{\x \in \Fqm^{n} : \rank_{\Fq}(\x) \leq \tpub\}$ such that $\m\G+\e = \y$.
\end{problem}
Problem~\ref{pro:fl-search-rsd} is equivalent to decoding a codeword of a Gabidulin code that is corrupted by an error that has with high probability a rank weight of $> (n-k)/2$, see Appendix~\ref{app:probability_of_large_error_weight}.

\begin{problem}[\textsf{ResG-Dec}: Restricted Gabidulin Code Decisional Problem]\label{pro:fl-decision}\\	
	\noindent Input: $q,m,n,k,w,\tpub,u,\ve{\gamma}, (\G,\K) $ from \textup{\textsf{ResIG-Distribution}} (Definition~\ref{def:ResIG-Distr}), $\y \in \Fqm^{n}$.\\
	Goal: Decide with non-negligible advantage whether $\y$ came from \textup{\textsf{ResErr-Distribution}} with input $q,m,n,k,w,\tpub,u,\ve{\gamma}, (\G,\K)$  or the uniform distribution over $ \Fqm^{n}$.
\end{problem}
As before, we are not aware of a faster approach to solve \textsf{ResG-Dec} than through the solution of the associated \emph{search} problem.

{
We will see in the next subsection that {\SystemName} is IND-CCA2 secure under the assumption that $\textsf{ResG-Dec}$ is a hard problem. As mentioned above, there is an obvious reduction of $\textsf{ResG-Dec}$ to  $\textsf{ResG-Search}$, which can again be efficiently reduced to $\textsf{ResIG-Search}$.
In fact, all relevant attacks studied in Section~\ref{sec:sec_analysis} make use of this chain of reduction and aim at solving one of the two search problems.

We are not aware of a reduction of $\textsf{ResIG-Dec}$ to $\textsf{ResIG-Search}$ or one of the other problems.
Hence, it might very well be that $\textsf{ResIG-Dec}$ is significantly easier than the other problems.
In Section~\ref{ssec:sec_analysis_IG-Dec}, we show that there is a distinguisher for $\textsf{ResIG-Dec}$ that is efficiently computable if the system parameter $\zeta$ is chosen too small.
Due to the missing reduction, it is not clear whether or not this distinguisher influences the security of the system.}

\subsection{Semantic Security}\label{ssec:semantic_security}
In this section, we prove that the public key encryption system $\sys$ is semantically secure against chosen plaintext attacks in the standard model under the assumption that \textsf{ResG-Dec} (Problem~\ref{pro:fl-decision}) is difficult. In addition, we show that the IND-CCA2 security of $\syskem$ reduces tightly to the IND-CPA security of $\sys$ in the random oracle model.

\subsubsection{IND-CPA Security of $\sys$}

To show that $\sys$ is secure against chosen plaintext attacks, we use the definition of admissibility as in~\cite{nojima2008semantic}.
\begin{definition}[Admissibility~\cite{nojima2008semantic}]\label{def:admissible}
The public key encryption scheme $\sys = (\keyGen,\enc,\dec)$ with a message space $\mathcal{M}$ and a random space $\mathcal{R}$ is called admissible if there is a pair of deterministic polynomial-time algorithms $\enc_1$ and $\enc_2$ satisfying the following property:
\begin{itemize}
\item Partible: $\enc_1$ takes as input a public key $\pk$ and $r \in \mathcal{R}$, and outputs a $p(\secLevel)$ bit-string, where $\secLevel$ is the security parameter. $\enc2$ takes as input a key $\pk$, and $\m \in \mathcal{M}$ and outputs a $p(\secLevel)$ bit-string. Here $p$ is some polynomial in the security parameter $\secLevel$. Then for any $\pk$ given by $\keyGen$, $r \in \mathcal{R}$, and $\m \in \mathcal{M}$, $\enc_1( \pk, r ) \oplus  \enc_2 ( \pk, \m) = \enc ( \pk, \m; r ) $.
\item Pseudorandomness: Let $D$ be a probabilistic algorithm and let
  \begin{align*}
    \Adv_{D,\enc_1}^{ind}(\secLevel) =& \Pr \Big[ D(\pk,\enc_1(\pk,r)) = 1 \mid r \sample \mathcal{R}, (\sk, \pk) \gets \keyGen\big(1^{\secLevel}\big) \Big] \\
    & - \Pr \big[ D(\pk,s) = 1 \mid s \sample \mathcal{U}_{p(\secLevel)}, (\sk, \pk) \gets \keyGen\big(1^{\secLevel} \big)    \big].
  \end{align*}
  We define the advantage function of the problem as follows. For any $t$,
  \begin{equation*}
\Adv_{\enc_1}^{ind}(\secLevel,t) = \max_{D} \big\{\Adv_{D,\enc_1}^{ind}(\secLevel) \big\},
\end{equation*}
where the maximum is taken over all $D$ with time-complexity $t$. Then, the function $\Adv_{\enc_1}^{ind}(\secLevel,t)$ is negligible for every polynomial bounded $t$ and every sufficiently large $\secLevel$.
\end{itemize}
\end{definition}

In the following we will prove that $\sys$ is IND-CPA secure by showing that is fulfills the definition of admissibility.
\begin{theorem}\label{thm:cpa}
The  system $\sys = (\keyGen,\enc, \dec)$ is an IND-CPA secure encryption scheme in the standard model under the assumption that the \textup{\textsf{ResG-Dec}} problem is difficult.
\end{theorem}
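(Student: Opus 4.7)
The plan is to invoke the admissibility framework of \cite{nojima2008semantic} cited in Definition~\ref{def:admissible}: any admissible scheme is IND-CPA in the standard model, so it suffices to exhibit a decomposition $(\enc_1,\enc_2)$ witnessing the Partible property and to reduce Pseudorandomness of $\enc_1$ to \textsf{ResG-Dec}.

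For Partible, I read Algorithm~\ref{alg:Encrypt} as
\[
\enc(\m,\pk,\theta) = \underbrace{(\m,\vec{0}_u)\cdot\mathbf G_{\mycode{G}}}_{=:\,\enc_2(\pk,\m)} \; + \; \underbrace{\Trmum(\alpha\kpub) + \e}_{=:\,\enc_1(\pk,\theta)},
\]
where $\alpha$ and $\e$ are obtained deterministically from $\theta$. Both $\enc_1$ and $\enc_2$ are polynomial-time and deterministic; since the ciphertext space $\Fqm^n$ is an additive group (taking $\oplus$ to be the group operation as in algebraic instantiations of \cite{nojima2008semantic}), the identity $\enc_1(\pk,\theta)\oplus\enc_2(\pk,\m)=\enc(\pk,\m;\theta)$ is immediate.

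For Pseudorandomness, I would show that the output $(\pk,\enc_1(\pk,\theta))$, with $\pk=(\g,\kpub)\gets\keyGen(1^{\secLevel})$ and $\theta$ uniform, is distributed exactly as the instance of \textsf{ResG-Dec} in which $\y$ comes from \textsf{ResErr-Distribution}. First I would compare Algorithm~\ref{alg:modKeyGen} with Definition~\ref{def:ResIG-Distr} line by line: the generator matrix $\GG$ coincides with $\G$, the choice of $\vec{x}$ corresponds to $\M$ (the full-rank-on-the-last-$u$-columns condition matches), and $\z=(\s\mid\vec{0})\Pinv$ obtained from the sampled $\s_i$ and a random invertible $\vec{P}$ realises exactly $\E=\E'\vec{Q}$ of the definition (after applying $\extsmallfield_{\boldsymbol{\gamma}}$ and its inverse). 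Hence $(\g,\kpub)$ is distributed as $(\G,\K)$ in \textsf{ResIG-Distribution}, and then $\enc_1(\pk,\theta)=\Trmum(\alpha\kpub)+\e$ with $\alpha\sample\Fqmu$ and $\e$ of $\Fq$-rank $\tpub$ is exactly an \textsf{ResErr-Distribution} sample. Consequently, any $D$ with advantage $\Adv^{\mathrm{ind}}_{D,\enc_1}(\secLevel)$ yields a \textsf{ResG-Dec} distinguisher with the same advantage, up to the trivial polynomial-time re-encoding between $\kpub$ and $\K=\extsmallfield_{\boldsymbol{\gamma}}(\kpub)$. Under the hardness of \textsf{ResG-Dec}, $\Adv^{\mathrm{ind}}_{\enc_1}(\secLevel,t)$ is therefore negligible for every polynomial $t$, establishing admissibility and hence IND-CPA security.

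The main obstacle I anticipate is the distributional bookkeeping in the previous paragraph: the pseudorandomness game samples $\pk$ through $\keyGen$, whereas the distinguisher in \textsf{ResG-Dec} receives $(\G,\K)$ from \textsf{ResIG-Distribution} together with $\y$; I must therefore justify carefully that the conditional distribution of $\kpub$ given $\g$ matches that of $\K$ given $\G$, and that the auxiliary data $\theta$ used by $\enc_1$ (namely $\alpha$ and $\e$) can be re-sampled from scratch inside the reduction without altering the distribution. Both checks follow from the explicit sampling in Algorithm~\ref{alg:modKeyGen} and Definitions~\ref{def:ResIG-Distr}/\ref{def:res-err}, but they are the step that actually pins the reduction down and that the formal write-up must not skip.
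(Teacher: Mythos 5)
Your proposal follows exactly the same route as the paper's proof: the same decomposition $\enc_1 = \Trmum(\alpha\kpub)+\e$, $\enc_2 = (\m,\vec{0}_u)\GG$ witnessing the Partible property, and the same reduction of Pseudorandomness to the hardness of \textsf{ResG-Dec} via the admissibility lemma of \cite{nojima2008semantic}. The only difference is that you spell out the distributional bookkeeping (matching $\keyGen$ to \textsf{ResIG-Distribution} and $\enc_1$ to \textsf{ResErr-Distribution}) which the paper's proof leaves implicit, so your write-up is, if anything, more complete.
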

\begin{proof}
  Let
  $\enc_1 := \Trmum (\alpha \kpub) +\e$
and
$\enc_2 := \m \GG$.
Then, one observes that $\enc = \enc_1 \oplus \enc_2$ and thus $\sys$ is \emph{partible}.
Since \textup{\textsf{ResG-Dec}} (Problem~\ref{pro:fl-decision}) is assumed to be difficult, the encryption scheme fulfills \emph{pseudorandomness} and thus, the system is \emph{admissibile}. As proven in~\cite[Lemma 1]{nojima2008semantic}, if $\sys$ fulfills Definition~\ref{def:admissible}, then it is an IND-CPA secure encryption scheme.
\qed
\end{proof}

\subsubsection{IND-CCA2 Security of $\syskem$}

We used a transformation proposed in~\cite{hofheinz2017modular} to transform the public key encryption scheme $\sys$ into the KEM $\syskem$. In the following, we prove that $\syskem$ is IND-CCA2 secure.

The applied transformation requires that the encryption scheme is $\gamma$-spread which is proven to be the case for $\sys$ in the following.
\begin{definition}[$\gamma$-spread, \cite{fujisaki12013secure,hofheinz2017modular}] For valid $(\pk,\sk)$, the \emph{min-entropy} of\\ $\enc(\m,\pk)$ is defined by 
  \begin{equation*}
    \gamma(\m,\pk) := - \log_2 \max_{\c \in \cipherset} \Pr_{r\gets \mathcal{R}} [\c = \enc(\m,\pk;r)],
  \end{equation*}
  where $\cipherset$ is the set of possible ciphertexts. A public key encryption scheme is called \emph{$\gamma$-spread} if for every valid key pair $(\pk,\sk)$ and every message $\m\in \mathcal{M}$, $\gamma(\m,\pk) \geq \gamma$. It follows that for all $\c \in \cipherset$,
  \begin{equation*}
 \Pr_{r\gets \mathcal{R}} [\c = \enc(\m,\pk;r)] \leq 2^{- \gamma}.
  \end{equation*}
\end{definition}

\begin{lemma}\label{lem:gamma-spread}
The public key encryption system $\sys$ is $\gamma$-spread, where $\gamma = m(\tpub-u)+\tpub(n-\tpub-1)$.
\end{lemma}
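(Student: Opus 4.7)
The plan is to estimate $\Pr_r[\c = \enc(\m,\pk;r)]$ for an arbitrary fixed ciphertext $\c$ directly from the structure of the encryption map $\enc(\m,\pk;(\alpha,\e)) = \m\GG + \Trmum(\alpha\kpub) + \e$. The randomness $r = (\alpha,\e)$ is drawn uniformly from $\mathcal{R} = (\Fqmu\setminus\{0\}) \times \{\vec{a}\in\Fqm^n : \rank_q(\vec{a}) = \tpub\}$, so $|\mathcal{R}| = (q^{mu}-1)\cdot N_\tpub$, where $N_\tpub = \qbinom{n}{\tpub}\prod_{i=0}^{\tpub-1}(q^m - q^i)$ is the standard count of rank-$\tpub$ vectors in $\Fqm^n$.

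The key observation is that once $\c$ is fixed, the value of $\alpha$ alone determines $\e$ via $\e = \c - \m\GG - \Trmum(\alpha\kpub)$, and the pair $(\alpha,\e)$ is a valid preimage only if this forced $\e$ happens to have $\Fq$-rank exactly $\tpub$. Hence the fiber $\{r : \enc(\m,\pk;r) = \c\}$ has cardinality at most $q^{mu}-1$, yielding the uniform bound $\Pr_r[\c = \enc(\m,\pk;r)] \leq (q^{mu}-1)/[(q^{mu}-1)\,N_\tpub] = 1/N_\tpub$.

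It therefore suffices to show $N_\tpub \geq 2^\gamma$. I will use the standard lower bound $\qbinom{n}{\tpub}_q \geq q^{\tpub(n-\tpub)}$, which comes from writing each factor of the product formula as $q^{n-\tpub}\cdot(1-q^{-(n-i)})/(1-q^{-(\tpub-i)}) > q^{n-\tpub}$, together with $q^m - q^i \geq q^{m-1}$ for $q\geq 2$ and $0\leq i \leq \tpub-1 < m$. These yield $N_\tpub \geq q^{\tpub(m+n-\tpub-1)}$. A direct arithmetic identity gives $\tpub(m+n-\tpub-1) = m(\tpub-u) + \tpub(n-\tpub-1) + mu = \gamma + mu$, and combined with $\log_2 q \geq 1$ this implies $\log_2 N_\tpub \geq (\gamma+mu)\log_2 q \geq \gamma$, so $\Pr \leq 1/N_\tpub \leq 2^{-\gamma}$, which is the $\gamma$-spread property.

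The main obstacle, though minor, is the bookkeeping between $\log_q$-counts and the $\log_2$-flavored statement of $\gamma$: the natural counting bound lies at $\log_q N_\tpub \geq \tpub(m+n-\tpub-1)$, and only after separating the $\alpha$-randomness contribution $mu$ from the $\e$-randomness in the fiber-size argument and invoking $\log_2 q \geq 1$ does one recover exactly the claimed value $\gamma = m(\tpub-u) + \tpub(n-\tpub-1)$. The algebraic identity $\tpub(m+n-\tpub-1) - mu = \gamma$ is what makes the stated formula align cleanly with the natural counting bound.
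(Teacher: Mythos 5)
Your proof is correct and follows essentially the same route as the paper's: isolate the randomness in $\alpha$ from the randomness in $\e$, and lower-bound the number of rank-$\tpub$ vectors in $\Fqm^n$ by $q^{\tpub(n+m-\tpub-1)}$ (the paper cites an explicit injective construction for this count, whereas you derive it from the Gaussian binomial formula, with the same result). Your fiber-counting step is in fact slightly tighter than the paper's union bound over $\alpha$ --- you obtain $\Pr \leq 1/N_{\tpub}$ rather than $q^{mu}/N_{\tpub}$, which would support an even larger spread parameter --- but since the lemma only claims $\gamma = m(\tpub-u)+\tpub(n-\tpub-1)$, both bounds suffice and the arguments are otherwise the same.
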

\begin{proof}
  We observe that
  \begin{align*}
     \max_{\c \in \cipherset}\Pr_{r\gets \mathcal{R}} [\c = \enc(\m,\pk;r)] 
    {=}& \max_{\c \in \cipherset} \Pr_{r\gets \mathcal{R}} [\c = (\vec{m},\vec{0}_u)\mathbf G_{\mycode{G}} + \Trmum(\alpha \kpub) +\vec{e}] \\
        \overset{\text{(i)}}{\leq}& \max_{\c' \in \cipherset'}q^{mu}\Pr_{r\gets \mathcal{R}} [\c' = (\vec{m},\vec{0}_u)\mathbf G_{\mycode{G}} +\vec{e}] \\
        {=}&~ q^{mu} \frac{1}{|\{\e \in \Fqm^{n} : \rank_{q}(\e) = \tpub \}|},
\end{align*}
where $\cipherset'$ is the set of all vectors in rank distance $\tpub$ from $(\vec{m},\vec{0}_u)\mathbf G_{\mycode{G}}$ and $\text{(i)}$ follows from the fact that there at most $q^{mu}$ choices for  $\alpha$.
In \cite[Section~IV.B]{gabidulin2003reducible}, a constructive way of obtaining rank-$\tpub$ matrices is given. More precisely, an injective mapping $ \varphi \, : \, \Fq^{t(n+m-t-1)} \to \{\A \in \Fq^{n \times m} \, : \, \rank \A = t\}$ is given. Hence, we have  
$|\{\e \in \Fqm^{n} : \rank_{q}(\e) = \tpub \}| \geq q^{t(n+m-t-1)}$. It follows that
\begin{align*}
  \frac{q^{mu}}{ |\{\e \in \Fqm^{n} : \rank_{q}(\e) = \tpub \}|}
  &\leq \frac{q^{mu}}{ q^{\tpub(n+m-\tpub-1)} } \\
  &= q^{-m(\tpub-u)-\tpub(n-\tpub-1)} \\
  &\leq q^{-\gamma}. 
\end{align*}
\qed
\end{proof}

\begin{theorem}\label{thm:cca2}
 The KEM-DEM $\syskem = (\keyGen,\encap, \decap)$ is IND-CCA2 secure in the random oracle model under the assumption that the \textup{\textsf{ResG-Dec}} problem is difficult.
\end{theorem}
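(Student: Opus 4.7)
The plan is to invoke the generic transformation theorem from \cite{hofheinz2017modular} that was used to construct $\syskem$ from $\sys$ in the first place, and to verify that $\sys$ satisfies all its prerequisites. The transformation in \cite{hofheinz2017modular} turns a suitable public-key encryption scheme into an IND-CCA2 secure KEM in the random oracle model provided that (i) the underlying PKE is IND-CPA secure, (ii) the PKE is $\gamma$-spread for a sufficiently large $\gamma$, and (iii) the PKE has (essentially) perfect correctness. The strategy is therefore to tick off these three prerequisites one by one and then quote the corresponding reduction.

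First I would recall that $\sys$ has zero decryption failure: the correctness theorem already established that $\dec(\enc(\m,\pk;\theta),\sk) = \m$ for every valid key pair $(\pk,\sk)$, every message $\m$, and every randomness $\theta$. This makes the re-encryption check in $\decap$ (Algorithm~\ref{alg:Decap}) behave deterministically on honestly generated ciphertexts, which is exactly what the Fujisaki--Okamoto-type analysis in \cite{hofheinz2017modular} requires. Second, I would recall Lemma~\ref{lem:gamma-spread}, which shows that $\sys$ is $\gamma$-spread with $\gamma = m(\tpub - u) + \tpub(n-\tpub-1)$; for the parameter sets proposed in Section~\ref{sec:parameters}, this value is super-logarithmic in the security parameter, so the additive loss term $q_H \cdot 2^{-\gamma}$ appearing in the reduction from \cite{hofheinz2017modular} is negligible (here $q_H$ denotes the number of random-oracle queries made by the adversary). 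Third, I would recall Theorem~\ref{thm:cpa}, which says that $\sys$ is IND-CPA secure under the \textsf{ResG-Dec} assumption.

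With these three facts in hand, the remaining step is essentially bookkeeping: instantiate the reduction of \cite{hofheinz2017modular} with $\sys$ as the underlying PKE and with the hash functions $\mathcal{G}$, $\mathcal{H}$, $\mathcal{K}$ used in Algorithms~\ref{alg:Encap} and~\ref{alg:Decap} modeled as random oracles. The reduction then transforms any IND-CCA2 adversary $\mathcal{A}$ against $\syskem$, making $q_\mathcal{G}$, $q_\mathcal{H}$, $q_\mathcal{K}$ oracle queries and $q_D$ decapsulation queries, into an IND-CPA adversary $\mathcal{B}$ against $\sys$ with essentially the same running time and with advantage
\begin{equation*}
\Adv^{\text{IND-CPA}}_{\sys}(\mathcal{B}) \;\geq\; \tfrac{1}{2}\Adv^{\text{IND-CCA2}}_{\syskem}(\mathcal{A}) - (q_\mathcal{G}+q_\mathcal{H}+q_\mathcal{K}+q_D)\cdot 2^{-\gamma},
\end{equation*}
up to the standard polynomial factors in \cite{hofheinz2017modular}. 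Since the second term is negligible by the $\gamma$-spread bound and since $\Adv^{\text{IND-CPA}}_{\sys}(\mathcal{B})$ is negligible under \textsf{ResG-Dec}, the IND-CCA2 advantage of $\mathcal{A}$ must itself be negligible, which is the claim.

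The only step that requires any care is verifying that the particular variant of the transformation chosen in Algorithms~\ref{alg:Encap}/\ref{alg:Decap} matches one of the modular constructions catalogued in \cite{hofheinz2017modular} (e.g., $\mathsf{U}^{\not\perp}$ applied to a derandomized PKE $\mathsf{T}$), so that the quoted theorem applies verbatim; everything else is a direct appeal to previously established lemmas. I do not anticipate any genuine technical obstacle here, as the non-trivial cryptographic work has already been done in Theorem~\ref{thm:cpa} and Lemma~\ref{lem:gamma-spread}, and the transformation itself is black-box.
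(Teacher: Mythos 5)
Your proposal follows essentially the same route as the paper's own (very terse) proof: invoke the generic transformation of~\cite{hofheinz2017modular}, citing Theorem~\ref{thm:cpa} for IND-CPA security and Lemma~\ref{lem:gamma-spread} for $\gamma$-spreadness. Your version is in fact somewhat more careful than the paper's, since you also make explicit the perfect-correctness prerequisite and the concrete advantage bound with its $2^{-\gamma}$ loss term, neither of which the paper spells out.
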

\begin{proof}
Assuming the \textsf{ResG-Dec} is difficult, the encryption $\sys$ is IND-CPA secure, see Theorem~\ref{thm:cpa}. Further, it is proven in Lemma~\ref{lem:gamma-spread} that $\sys$ has $\gamma$-spread encryptions. Thus, the system $\syskem$ can be tightly reduced to $\syskem$ in the random oracle model as shown in~\cite{hofheinz2017modular}. \qed 
\end{proof}

\section{Security Analysis of {\SystemName}}\label{sec:sec_analysis}

In this section, we analyze the security of \SystemName. As proven in Theorem~\ref{thm:cpa} and~\ref{thm:cca2}, the encryption version is IND-CPA secure and the KEM version is IND-CCA2 secure under the assumption that \textsf{ResG-Dec} is difficult.
Since there are obvious reductions from \textsf{ResG-Dec} to \textsf{ResG-Search} and from \textsf{ResG-Dec} to \textsf{ResIG-Search}, we will study the hardness of these two search problems in this section (Section~\ref{ssec:attacks_public_key} for \textsf{ResIG-Search} and Section~\ref{ssec:attacks_ciphertext} for \textsf{ResG-Search}).
In fact, we are not aware of a more efficient method to solve \textsf{ResG-Dec} than through these two search problems.

Although no formal reduction from any of the other three studied problems to \textup{\textsf{ResIG-Dec}} is known, we study also the hardness of \textup{\textsf{ResIG-Dec}} (Section~\ref{ssec:sec_analysis_IG-Dec}).
We derive a distinguisher for the public key with exponential complexity in the system parameters, which can be avoided by proper parameter choice.

Due to the nature of the (random) encryption, there are public keys for which the probability that the work factor of some of the ciphertext attacks on \textsf{ResG-Search} (ciphertext attack) is below the designed minimal work factor is not negligible (i.e., $> 2^{-\secLevel}$). We show in Section~\ref{ssec:weak_keys} that these weak keys occur with negligible probability (i.e., $\leq 2^{-\secLevel}$) during the random key generation if the parameters are chosen in a suitable way.

\subsection{Exponential-Time Attacks on \textup{\textsf{ResIG-Search}}}\label{ssec:attacks_public_key}
We propose new and summarize known methods that solve \textup{\textsf{ResIG-Search}} (Problem~\ref{pro:restr-int-search-rsd}).
All studied algorithms have exponential complexity in the code parameters.

Recall that in the decryption algorithm of {\SystemName}, the last $u$ positions of the private key $\x$ have to be a basis of $\Fqmu$ over $\Fqm$. Therefore, not every solution of Problem~\ref{pro:restr-int-search-rsd} (\textsf{ResIG-Search}) can be used as valid private key and Problem~\ref{pro:restr-int-search-rsd} is a strictly easier problem than retrieving a valid private key corresponding to a given public key.

\subsubsection{Brute-Force the Vector $\z$ Attack}\label{ssec:attack_brute-force_z}
The number of vectors $\z \in \Fqmu^n$ that fulfill the conditions stated in Section~\ref{sec:repair} is equal to number of possible vectors $\s \in \Fqmu^w$ times the number of full rank matrices in $\Fqm^{w\times n}$ in reduced row Echelon form. Formally, the number of vectors $\z$ is
\begin{align*}
  \underbrace{\left|\left\{ \z \, : \, \z \text{ can occur in Alg.~\ref{alg:modKeyGen}}\right\}\right|}_{\geq 1} \cdot \underbrace{\left|\left\{ \P \, : \, \P \text{ can occur in Alg.~\ref{alg:modKeyGen}}\right\}\right|}_{\geq \qbinomialField{n}{w}{q}} \geq \qbinomialField{n}{w}{q}.    
\end{align*}
Thus, brute-forcing a vector $\z$ that is a solution to \textsf{ResIG-Search} has work factor
\begin{equation*}
  \text{WF}_{\text{\vec{z}}}\geq\frac{\qbinomialField{n}{w}{q}}{\mathcal{N}} \geq \frac{q^{w(n-w)}}{\mathcal{N}}, 
\end{equation*}
where the latter inequality follows from a lower bound on  $q$-binomials (see~\cite[Lemma 4]{koetter2008coding}), and 
\begin{equation}\label{eq:avNumberIntCodewords}
\mathcal{N} = \max \left\{|\mathcal{C}| \cdot 
 \frac{\sum_{i=0}^{w-1}{\prod_{j=0}^{i-1}{(q^{mu} - q^j)}  {\displaystyle\quadbinom{n}{i}}} }{q^{mun}}
 , \
 1\right\}
\end{equation}
 is the average number of interleaved codewords in a ball of radius $w$ around a uniformly at random chosen interleaved received word.

\subsubsection{Interleaved Decoding Attack}\label{ssec:attack_interleaved_decoding}
As described in Section~\ref{sec:eff_attack}, an attacker can apply an interleaved decoder on $\kpub$ to retrieve an alternative private key. A major ingredient of {\SystemName} is that the public key is chosen in a way that this decoding will always fail (i.e., the corresponding linear system of equations does not have a unique solution). However, it is still possible to brute-force search in the solution space of the involved system of equations. This is analyzed in the following. Notice thereby that \emph{any} interleaved codeword in radius at most $w$ is a solution to \textsf{ResIG-Search}.

Problem~\ref{pro:restr-int-search-rsd} (\textsf{ResIG-Search}) is equivalent to decoding a codeword of a $u$-interleaved Gabidulin code that is corrupted by an error $\E$. The error $\E$ fulfills 
\begin{equation*}
\bigg\lfloor \frac{n-k}{2} \bigg\rfloor < \rank_{q}(\E) \leq \frac{u}{u+1} (n-k) \quad \text{and} \quad \rank_{q^m}(\E) < \frac{w}{n-k-w} < w 
\end{equation*}
and thus, no known algorithm is able to correct it efficiently.

The crucial point of the interleaved decoding algorithms from \cite{Loidreau_Overbeck_Interleaved_2006,Sidorenko2011SkewFeedback} is solving a linear system of equations based on the syndromes with $w+1$ unknowns and $\RkError$ linearly independent equations which is equivalent to finding the kernel of the matrix in~\eqref{eq:matrix-repair}, cf.~\cite[Section~4.1]{Wachterzeh_DecodingBlockConvolutionalRankMetric_2013}. For $\dimZ \geq \frac{w}{n-k-w}$, the dimension of the solution space is one and all solutions are valid for the remaining decoding steps. For $\dimZ < \frac{w}{n-k-w}$, the dimension of the solution space is $w+1-\RkError$ but each valid solution forms only a one-dimensional subspace. An attacker can therefore search in the solution space for a valid solution which requires on average
\begin{equation*}
  \frac{(q^m)^{w+1-\RkError}}{q^m \cdot \mathcal{N}} = \frac{q^{m(w-\RkError)}}{\mathcal{N}}
\end{equation*}
 trials, where $\mathcal{N}$ is the average number of interleaved codewords, see~\eqref{eq:avNumberIntCodewords}.

The size of the solution space is $w+1-\RkError$  and clearly maximized for the smallest-possible value of $\RkError$, i.e., $\RkError = n-k-w$.
In this case, the search through the solution space has work factor
 \begin{equation*}
\text{WF}_{\text{ILD}}=\frac{q^{m(w-\dimZ(n-k-w))}}{\mathcal{N}}.
\end{equation*}
Since the size of the solution space is maximal for $\RkError = n-k-w$, the repair from Section~\ref{sec:repair} with the explicit parameter value $\dimZ=1$ (i.e., $\dim \big( \langle \vec{z}_1 , \dots , \vec{z}_u\rangle_{\Fqm}\big) = 1$) is the most secure choice \emph{in this sense}. 
However, we keep the choice of $\dimZ$ flexible as the pair-wise linear dependence of the $\z_i$ might decrease the security (we are however not aware of how this fact could be used).

Besides the syndrome-based interleaved decoding algorithms in \cite{Loidreau_Overbeck_Interleaved_2006}, \cite{Sidorenko2011SkewFeedback}, and \cite[p.~64]{Wachterzeh_DecodingBlockConvolutionalRankMetric_2013}, there is an interpolation-based decoding algorithm \cite[Section~4.3 (page 72)]{Wachterzeh_DecodingBlockConvolutionalRankMetric_2013}.
This interpolation-based algorithm can be interpreted both as a list decoder of interleaved Gabidulin codes with exponential worst-case and average list size or as a probabilistic unique decoder.
The probabilistic unique interpolation-based decoder fails if and only if the decoding algorithms in \cite{Loidreau_Overbeck_Interleaved_2006}, \cite{Sidorenko2011SkewFeedback}, \cite[p.~64]{Wachterzeh_DecodingBlockConvolutionalRankMetric_2013} fail and therefore the previous analysis applies here as well.
For the list decoder, cf. \cite[Lemma~4.5]{Wachterzeh_DecodingBlockConvolutionalRankMetric_2013}, the work factor of the resulting attack is 
\begin{equation}\label{eq:WF_list_size_kpub}
\text{WF}_\text{list, public key} \leq \frac{q^{m(u-1)k}}{\mathcal{N}}.
\end{equation}
Notice that the list of size $q^{m(u-1)k}$ contains many words which are no valid codewords, but we have to go through the whole list to find all valid codewords in radius up to $w$.

\subsubsection{List Decoding of the Public Key Attack}\label{ssec:attack_list_pk}
Recall that  $\kpub = \vec{x} \cdot \mathbf G_{\mycode{G}} + \vec{z}$. 
Previously, we have explained why this vector is a corrupted version of a codeword of a $u$-interleaved Gabidulin code.
At the same time, $\vec{x} \cdot \mathbf G_{\mycode{G}}$ can be seen as a short Gabidulin code over a large field $\Fqmu$ and therefore, if existing, one could apply list decoding algorithms to decode $\kpub$ and obtain $\vec{x}$. The weight of the error $\vec{z}$ is larger than the unique decoding radius and therefore a unique decoder cannot be applied to reconstruct $\vec{x}$ and a list decoder for radius $w$ is required.

However, such an algorithm has not been found yet. It was even shown in \cite{Wachterzeh_BoundsListDecodingRankMetric_IEEE-IT_2013,RavivWachterzeh_GabidulinBounds_journal,ListDec_RankMetric_2019} that for most classes of Gabidulin codes such a polynomial-time list decoding algorithm cannot exist. 
Note that these results were not known when the original FL cryptosystem was proposed.
These results also imply that there is no polynomial-time list decoding algorithm for arbitrary Gabidulin codes beyond the unique decoding radius (such as the Guruswami--Sudan algorithm for Reed--Solomon codes).

\subsubsection{Randomized Gabidulin Decoding Attack on the Public Key}\label{ssec:attack_rsd_pk}

The public key can be seen as the sum of a Gabidulin codeword over the field $\Fqmu$ and an error of weight $w > \frac{n-k}{2}$. Alternatively, as shown in Section~\ref{ssec:kpub_corrupted_interleaved_codeword}, the public key can be seen as an interleaved Gabidulin codeword that is corrupted by an error of weight $w$ ({note that this is the reason why all the $\s_i$'s must have full $\Fq$-rank in Algorithm~\ref{alg:modKeyGen}}). Each row of \eqref{eq:kpub_as_corrupted_interleaved_codeword} is a codeword of a Gabidulin code over $\Fqm$ that is corrupted by an error of rank weight $w$. Both the corrupted Gabidulin codeword over $\Fqmu$ as well as over $\Fqm$ can be decoded using the randomized decoding approach proposed in~\cite{renner2019randomized}.
Since applying the attack on each row of the unfolded public key is more efficient, we conclude that the randomized Gabidulin decoding attack on the public key has an average complexity of  
\begin{equation*}
\text{WF}_{\text{RGD}} = \frac{n}{64}\cdot q^{m(n-k)-w(n+m)+w^2+\min\{2\xi(\frac{n+k}{2}-\xi),wk\} }
\end{equation*}
over $\Fqm$.

\subsubsection{Moving to Another Close Error Attack}\label{ssec:attack_moving_to_error}
The following attack was suggested by Rosenkilde~\cite{Johan_attack}.
It tries to move the vector $\vec{z}$ (which we have chosen such that the interleaved decoder fails) to a close vector of the same or smaller rank weight $w$ for which the interleaved decoder for $\kpub$ does not fail.

The idea is to find a vector $\vec{y} \in \Fqm^{u \times n}$ such that $\vec{z}^\prime := \vec{z} + \vec{y}$ still has rank weight $\rank_q(\vec{z}^\prime) \leq w$ and that the rank of the matrix from~\eqref{eq:matrix-repair} over $\Fqm$ is at least $w$.
To guarantee the first condition, we want to construct $\vec{y}$ such that its extended $um \times n$ matrix over $\Fq$ has a row space $\mathcal{R} := \mathrm{RowSpace}_{\Fq}\big(\extsmallfield_{\boldsymbol{\gamma}}(\y)\big)$ that is contained in the one of $\vec{z}$.
Since for the original error $\z$, the matrix \eqref{eq:matrix-repair} has rank $\varphi \leq \dimZ(n-k-w)$, $\mathcal{R}$ must have at least $\Fq$-dimension $w-\varphi \geq w(\dimZ+1)-\dimZ(n-k)$.
By choosing a random $\mathcal{R}$ with this property and taking a random matrix $\vec{y}$ whose extended matrix has $\Fq$-row space $\mathrm{RowSpace}_{\Fq}\big(\extsmallfield_{\boldsymbol{\gamma}}(\y)\big) = \mathcal{R}$, the second condition is fulfilled with high probability.

The complexity of the attack is hence dominated by the complexity of finding a subspace $\mathcal{R} \subseteq \Fq^{n}$ of dimension $w-\varphi$ that is contained in the $w$-dimensional $\Fq$-row space of $\z$.
Since this is unknown, we can find it in a Las-Vegas fashion by repeatedly drawing a subspace uniformly at random.
The expected number of iterations until we find a suitable row space is thus one over the probability that a random $(w-\varphi)$-dimensional subspace of $\Fq^n$ is contained in a given $w$-dimensional subspace, which is (cf.~\cite[Proof of Lemma~7]{etzion2011error}):
\begin{align*}
\frac{\qbinomial{w}{w-\varphi}}{\qbinomial{n}{w-\varphi}} \approx \frac{q^{\varphi(w-\varphi)}}{q^{(n-w+\varphi)(w-\varphi)}} = q^{-(n-w)(w-\varphi)} \leq q^{-(n-w)(w(\dimZ+1)-\dimZ(n-k))}.
\end{align*}
Hence, the attack has work factor
\begin{align*}
\text{WF}_{\text{MCE}} &= q^{(n-w)(w-\varphi)} \geq q^{(n-w)(w(\dimZ+1)-\dimZ(n-k))}.
\end{align*}

\subsection{Exponential-Time Attacks on \textup{\textsf{ResG-Search}}}\label{ssec:attacks_ciphertext} 
Retrieving information about the plaintext from the ciphertext and the public key is equal to solving \textup{\textsf{ResG-Search}} (Problem~\ref{pro:fl-decision}). In this section, methods that solve this problem are summarized.

\subsubsection{Randomized Gabidulin Decoding Attack on the Ciphertext}\label{ssec:attack_rsd_cipher}

Each ciphertext of {\SystemName} can be seen as a Gabidulin codeword over $\Fqm$ plus an error:
\begin{align*}
\vec{c} &= (\vec{m},\vec{0}_u) \cdot \mathbf G_{\mycode{G}} + \Trmum(\alpha \kpub) + \vec{e} \\
&= \underbrace{\left[(\vec{m},\vec{0}_u) + \Trmum(\alpha \vec{x})\right] \cdot \mathbf G_{\mycode{G}}}_{\text{codeword}} + \underbrace{\Trmum(\alpha\vec{z}) + \vec{e}}_{\text{error}}
\end{align*}
Denote $\tilde{w} := \rank_{\Fq}(\Trmum(\alpha\vec{z}) + \vec{e})$.
Then we can use the decoding algorithm proposed in~\cite{renner2019randomized}, which requires on average at least
\begin{equation}
\frac{n}{64}\cdot q^{m(n-k)-\tilde{w}(n+m)+\tilde{w}^2+\min\{2\xi(\frac{n+k}{2}-\xi),\tilde{w}k\} } \label{eq:WF_RGD}
\end{equation}
operations in $\Fqm$.

Clearly, the complexity of the algorithm strongly depends on the value $\tilde{w}$, which in turn depends on the generated keys.
{In general, $\tilde{w} = w +\tpub$, but for some choices of $\z$, $\alpha$, and $\e$, the rank $\tilde{w}$ is smaller.
For this issue, see Section~\ref{ssec:weak_keys} and Appendix~\ref{app:probability_of_large_error_weight}, where we study the probability that $\tilde{w}$ is small, both for randomness in the encryption (random choice of $\alpha$ and $\e$) and the key generation (random choice of $\z$).
Some extremely rarely occurring keys (weak keys) thereby result in relatively high probabilities that $\tilde{w}$ is small.

However, we can choose the system parameters such that both, the probability of a weak key as well as the conditional probability that $\tilde{w}<w$ given a non-weak key is below $2^{-\secLevel}$.
Hence, with overwhelming probability, a random key and ciphertext result in a ciphertext error of rank weight $\tilde{w}\geq w$ and the work factor of this attack is always at least as large as the ``Randomized Gabidulin Decoding Attack on the Public Key'' in Section~\ref{ssec:attacks_public_key}.}

\subsubsection{List Decoding of the Ciphertext Attack}\label{ssec:list_decoding_attack_ciphertext}
As described in the Randomized Gabidulin Decoding Attack on the Ciphertext above, the ciphertext of {\SystemName} is a codeword of a Gabidulin code, corrupted by an error of rank weight $\tilde{w}$,  Hence, an attacker can try to decode the ciphertext directly.
Since $\tau$ is always greater than the unique decoding radius $\nkhalffrac$ of the Gabidulin code, this would require the existence of an efficient (list) decoding algorithm up to radius $\tau$.
As explained previously, there is no such algorithm and bounds on the list size prove that there cannot exist a generic list decoding algorithm for all Gabidulin codes which indicates that list decoding is a hard problem.

However, to be secure, we have considered list decoding as follows for the security level of our system.
The list size $\mathcal{L}_\text{$\c$,worst}$ denotes a lower bound on the \emph{worst-case} work factor of list decoding.
For example, for a Gabidulin code with parameters $n\mid m$ and $\gcd(n,n-\tau)\geq 2$, there is a received word such that there are at least
\begin{align}\label{eq:list_size_cipher}
\mathcal{L}_\text{$\c$,worst} \geq \max\left\{
\frac{\qbinomialField{n/g}{(n-\tau)/g}{q^g}}
{q^{n (\tau/g-1)}} \, : \, g \geq 2, \, g \mid \gcd(n,n-\tau) \right\}
\end{align}
codewords in rank distance at most $\tau$ to it.

Although $\mathcal{L}_\text{$\c$,worst}$ does not imply any statement about the average list size/average work factor, it provides an estimate of the order of magnitude of the work factor of a hypothetical list decoding attack.
For our suggested parameters, we have ensured that the value of $\mathcal{L}_\text{$\c$,worst}$ 
is sufficiently large in the proposed sets of parameters in Section~\ref{sec:parameters}.

\subsubsection{Combinatorial Rank Syndrome Decoding (RSD) Attack}\label{ssec:attack_syndrome_decoding}
The ciphertext can be interpreted as a codeword from a code of dimension $k$ (see~\cite{faure2006new}), generated by the generator matrix
\begin{equation*}
\begin{pmatrix}
\Mooremat{k-u}{q}{\vec{g}}\\
\Trmum(\gamma_1 \kpub)\\
\vdots\\
\Trmum(\gamma_u \kpub)\\
\end{pmatrix}.
\end{equation*}
Since the structure of this code only permits decoding like a random rank-metric code, it can be decoded with the combinatorial syndrome decoding attack from~\cite{Aragon2018-DecAttack} whose complexity is in the order of
\begin{align*}
\text{WF}_{\text{CRSD}}=(n-k)^3m^3q^{\tpub \lceil \frac{(k+1)m}{n} \rceil -m}.  
\end{align*}

\subsubsection{Algebraic RSD Attack}\label{ssec:attack_algebraic_syndrome_decoding}
As described in the previous section, the \textsf{ResG-Search} problem can be solved by decoding an error of rank weight $\tpub$ in a random $[n,k]$ code. Beside the combinatorial approach, there exist algebraic algorithms to solve the Problem.

In~\cite{B3GNRT19}, the RSD problem is expressed as a multivariate polynomial system and is solved by computing a Gröbner basis. The complexity of that attack is generally smaller than the combinatorial approach. In case there is a unique solution to the system, then the work factor of the algorithm is 
\begin{equation*}
  \text{WF}_{\text{Gr}} = 
\begin{cases}
  \left[ \frac{((m+n)\tpub)^{\tpub}}{\tpub!} \right]^\mu & \text{if $m{n-k-1 \choose \tpub } \le {n \choose \tpub}$} \\[8pt]
  \left[ \frac{((m+n)\tpub)^{\tpub+1}}{(\tpub+1)!} \right]^\mu & \text{otherwise, }
  \end{cases}
  \end{equation*}
where $\mu = 2.807$ is the linear algebra constant.

{Very recently, a new algebraic algorithm was proposed to solve the RSD problem~\cite{bardet2020algebraic}. It divides the RSD problem instances into two categories. If
\begin{equation*}
 m\binom{n-k-1}{\tpub} \geq \binom{n}{\tpub} -1,
\end{equation*}
we are in the overdetermined case and the proposed algorithm has work factor
\begin{equation*}
 \text{WF}_{\text{Wogr}} = m \binom{n-p-k-1}{\tpub} \binom{n-p}{\tpub}^{\mu-1}
\end{equation*}
in $\Fq$, where $p= \min \{i: i \in \{1,\hdots,n\}, m \binom{n-i-k-1}{\tpub} \geq \binom{n-i}{\tpub}-1\}$. Otherwise, we are in the underdetermined case in which the algorithm has work factor
\begin{equation*}
  \text{WF}_{\text{Wogr}} = \min \{ \text{WF}_{\text{Under}}, \text{WF}_{\text{Hybrid}} \}.
\end{equation*}
We have
\begin{equation*}
  \text{WF}_{\text{Hybrid}} = q^{a \tpub} m \binom{n-k-1}{\tpub} \binom{n-a}{\tpub} ^{\mu-1}
\end{equation*}
with $a = \min \{i: i \in \{1,\hdots,n\}, m \binom{n-k-1}{\tpub} \geq \binom{n-i}{\tpub}\ - 1\}$. Further, for $0<b<\tpub+2$ and $A_b -1 \leq B_b + C_b$,
\begin{equation*}
  \text{WF}_{\text{Under}} = \frac{B_b \binom{k+\tpub+1}{\tpub} + C_b(mk+1)(\tpub+1)}{B_b+C_b}\Bigg( \sum_{j=1}^{b} \binom{n}{\tpub} \binom{mk+1}{j} \Bigg)^2,
\end{equation*}
where $A_b:=\sum_{j=1}^{b}\binom{n}{\tpub} \binom{mk+1}{j}$, $B_b:=\sum_{j=1}^{b} m\binom{n-k-1}{\tpub}\binom{mk+1}{j}$ and
\begin{equation*}
  C_b := \sum_{j=1}^{b} \sum_{i=1}^{j} \bigg( (-1)^{i+1} \binom{n}{\tpub+i} \binom{m+i-1}{i} \binom{mk+1}{j-i} \bigg).
\end{equation*}

We denote the minimum of the work factors of the two algorithms as the work factor of the algebraic RSD attack, i.e.,
\begin{equation*}
  \text{WF}_{\text{ARSD}} = \min\{ \text{WF}_{\text{Gr}}, \text{WF}_{\text{Wogr}} \}.
\end{equation*}
Note that for algebraic decoding, it is neither known how to improve the complexity by using the fact that there are multiple solutions, nor it is known how to speed up the algorithm in the quantum world. 
}

\subsubsection{Linearization Attack}\label{ssec:attack_linearization}
In~\cite{faure2006new}, a message attack was proposed which succeeds for some parameters with high probability in polynomial time.
\begin{lemma}[Linearization Attack {\cite{faure2006new}}]
Let $\kpub^{(i)} = \Trmum(\gamma_i \kpub)$ for $i=1,\dots,u$ and
\begin{align}
\M =
\begin{pmatrix}
\Mooremat{\tpub+1}{q}{\c} \\
-\Mooremat{\tpub+1}{q}{\kpub^{(1)}} \\
\vdots \\
-\Mooremat{\tpub+1}{q}{\kpub^{(u)}} \\
-\Mooremat{k+\tpub-u}{q}{\g}
\end{pmatrix}. \label{eq:M_linearization_attack}
\end{align}
Then, the encrypted message $\m$ can be efficiently recovered if the left kernel of $\M$ has dimension $\dim(\ker(\M)) = 1$.
\end{lemma}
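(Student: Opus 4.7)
The plan is to show two things: (i) a suitable non-zero vector always lies in the left kernel of $\M$, and (ii) when this kernel is one-dimensional, it determines the message $\m$ essentially uniquely, and $\m$ can be extracted by an efficient linearized-polynomial computation.

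First I would expand the ciphertext according to Algorithm~\ref{alg:Encrypt} and the definition of the $\kpub^{(i)}$: writing $\alpha = \sum_{i=1}^{u}\beta_i\gamma_i$ with $\beta_i \in \Fqm$, we obtain
\begin{equation*}
\c = (\m,\vec{0}_u)\GG + \sum_{i=1}^{u}\beta_i\kpub^{(i)} + \e,
\qquad \rank_q(\e) = \tpub.
\end{equation*}
Because $\e$ has rank $\tpub$, the rows $\e, \e^{q}, \dots, \e^{q^{\tpub}}$ of $\Mooremat{\tpub+1}{q}{\e}$ are $\Fqm$-linearly dependent, so there exists a non-zero $\ell^{(0)} \in \Fqm^{\tpub+1}$ with $\ell^{(0)} \Mooremat{\tpub+1}{q}{\e} = \vec{0}$. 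Next I would multiply $\ell^{(0)}$ against $\Mooremat{\tpub+1}{q}{\c}$, using additivity of Frobenius row-wise.

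For the Gabidulin part, $(\m,\vec{0}_u)\GG = \sum_{i=1}^{k-u} m_i \g^{q^{i-1}}$, and applying the $q^j$-Frobenius yields a combination of $\g^{q^{i-1+j}}$ with $i\in[1,k-u]$, $j\in[0,\tpub]$; all such exponents lie in $[0,k+\tpub-u-1]$, so $\ell^{(0)}\Mooremat{\tpub+1}{q}{(\m,\vec{0}_u)\GG}$ equals $\mu\Mooremat{k+\tpub-u}{q}{\g}$ for some $\mu\in\Fqm^{k+\tpub-u}$. For the key part, applying Frobenius to $\sum_i\beta_i\kpub^{(i)}$ twists each $\beta_i$ by $q^j$; defining $\ell^{(i)}\in\Fqm^{\tpub+1}$ by $\ell^{(i)}_j := \ell^{(0)}_j\,\beta_i^{q^j}$ gives
\begin{equation*}
\ell^{(0)}\Mooremat{\tpub+1}{q}{\textstyle\sum_i\beta_i\kpub^{(i)}} = \sum_{i=1}^{u}\ell^{(i)}\Mooremat{\tpub+1}{q}{\kpub^{(i)}}.
\end{equation*}
Collecting these three identities shows that $\vec{v}:=(\ell^{(0)},\ell^{(1)},\dots,\ell^{(u)},\mu)$ is a non-zero vector with $\vec{v}\,\M=\vec{0}$.

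Under the hypothesis $\dim\ker(\M)=1$, every kernel element is a scalar multiple of $\vec{v}$, so solving the linear system of equations $\vec{w}\M = \vec{0}$ yields $c\vec{v}$ for some $c\in\Fqm^{\times}$ whose value does not matter: the claim I would verify is that the pair $(\ell^{(0)},\mu)$, up to a common scalar, already determines $\m$. To recover $\m$, I would translate the relation $\mu\Mooremat{k+\tpub-u}{q}{\g} = \ell^{(0)}\Mooremat{\tpub+1}{q}{(\m,\vec{0}_u)\GG}$ into linearized polynomials $L_\mu$, $L_{\ell^{(0)}}$, $L_{\m}$ and use that both $L_\mu$ and $L_{\ell^{(0)}}\circ L_{\m}$ agree on the $\Fq$-independent evaluation points $g_1,\dots,g_n$ while having $q$-degree $<n$, hence $L_\mu = L_{\ell^{(0)}}\circ L_{\m}$ as polynomials. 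Right-dividing $L_\mu$ by $L_{\ell^{(0)}}$ (a polynomial-time operation on linearized polynomials) returns $L_{\m}$, from which $\m$ is read off; the division is well-defined because $L_{\ell^{(0)}}\neq 0$ and any common scalar cancels in quotient.

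The main obstacle I foresee is the bookkeeping that turns the ciphertext identity into the block structure of $\M$, in particular verifying that the key contribution really lies in the row span of the $\Mooremat{\tpub+1}{q}{\kpub^{(i)}}$ blocks after the twist $\beta_i\mapsto\beta_i^{q^j}$, and that the Gabidulin contribution fits exactly inside the $\Mooremat{k+\tpub-u}{q}{\g}$ block. Once this is done, the uniqueness/extraction step is straightforward, since the one-dimensional kernel assumption removes any ambiguity about which kernel vector corresponds to the correct $(\ell^{(0)},\mu)$, and the right-division step is standard; the recovery runs in polynomial time in $n,m,u$.
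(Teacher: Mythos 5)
The paper does not actually prove this lemma: it is imported verbatim from the original Faure--Loidreau paper \cite{faure2006new} and used as a black box (the surrounding text only analyzes \emph{when} the kernel fails to be one-dimensional). So there is no in-paper proof to compare against, and your reconstruction has to be judged on its own terms. It is correct, and it is the standard linearization argument: expanding $\alpha=\sum_i\beta_i\gamma_i$ and using the $\Fqm$-linearity of $\Trmum$ to write $\Trmum(\alpha\kpub)=\sum_i\beta_i\kpub^{(i)}$, annihilating the rank-$\tpub$ error with a nonzero $\ell^{(0)}\in\Fqm^{\tpub+1}$ in the left kernel of $\Mooremat{\tpub+1}{q}{\e}$, and checking that the Frobenius twists push the Gabidulin part into the row span of $\Mooremat{k+\tpub-u}{q}{\g}$ (exponents in $[0,k+\tpub-u-1]$, which exactly fills that block) and the key part into the $\Mooremat{\tpub+1}{q}{\kpub^{(i)}}$ blocks via $\ell^{(i)}_j=\ell^{(0)}_j\beta_i^{q^j}$. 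The extraction step is also sound: $L_{\ell^{(0)}}\circ L_{\m}$ and $L_\mu$ both have $q$-degree at most $k+\tpub-u-1<n$ and agree on the $n$ points $g_1,\dots,g_n$ of full $\Fq$-rank, hence are equal as linearized polynomials, the unknown scalar $c$ from the one-dimensional kernel cancels, and the quotient recovers $L_{\m}$. Two minor remarks: what you call ``right division'' is the computation of the \emph{right factor} $L_{\m}$ from the product $L_{\ell^{(0)}}\circ L_{\m}$, which most references call left division (or the right quotient) in the skew-polynomial ring --- either way it is a polynomial-time operation, so nothing breaks; and it is worth stating explicitly that the left kernel of $\Mooremat{\tpub+1}{q}{\e}$ is nontrivial because $\Mooremat{\tpub+1}{q}{\e}=\Mooremat{\tpub+1}{q}{\e'}\vec{B}$ for some $\vec{B}\in\Fq^{\tpub\times n}$, so its $\Fqm$-rank is at most $\tpub$.
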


If $(u+2)\tpub + k > n$, then $\M$ has at least two more rows than columns and we have $\dim(\ker(\M))>1$.
If $\kpub$ is random and $(u+2)\tpub + k \leq n$, the attack is efficient with high probability~\cite{faure2006new}.

\begin{lemma}
Let $\M$ be as in \eqref{eq:M_linearization_attack}. Then,
\begin{align*}
\rank_{q^m}(\M) \leq \min\{\varphi + k + 2\tpub -u,n \}.
\end{align*}
\end{lemma}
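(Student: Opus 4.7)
My strategy is to exploit the ciphertext identity to row-reduce the three blocks of $\M$, then to bound the dimension of the remaining row span in terms of $\varphi$.

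\paragraph{Setup.}
Because $\kpub$ admits the dual-basis decomposition $\kpub = \sum_{i=1}^u \kpub^{(i)} \gamma_i^*$ and the trace is $\Fqm$-linear, I can rewrite $\Trmum(\alpha\kpub) = \sum_{i=1}^u \beta_i \kpub^{(i)}$ with $\beta_i := \Trmum(\alpha\gamma_i^*) \in \Fqm$. This yields
\begin{equation*}
\c^{[j]} = (\m,\vec{0}_u)^{[j]} \GG^{[j]} + \sum_i \beta_i^{[j]} (\kpub^{(i)})^{[j]} + \e^{[j]}, \qquad j=0,\dots,\tpub,
\end{equation*}
and analogously $(\kpub^{(i)})^{[j]} = \x_i^{[j]} \GG^{[j]} + \z_i^{[j]}$ from $\kpub^{(i)} = \x_i \GG + \z_i$.

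\paragraph{Step 1: absorb Block~1 into Blocks 2 and 3.}
The trailing zeros of $(\m,\vec{0}_u)$ force $(\m,\vec{0}_u)^{[j]} \GG^{[j]}$ to lie in $\langle \g^{[j]},\dots,\g^{[j+k-u-1]}\rangle$, and for $j\le\tpub$ all these Frobenius powers belong to the row span of $\Mooremat{k+\tpub-u}{q}{\g}$ (Block~3). Thus row operations using Blocks 2 and 3 replace each row of $\Mooremat{\tpub+1}{q}{\c}$ by $\e^{[j]}$, without altering $\rank_{q^m}(\M)$.

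\paragraph{Step 2: partially absorb Block~2 into Block~3.}
Substituting $(\kpub^{(i)})^{[j]} = \x_i^{[j]} \GG^{[j]} + \z_i^{[j]}$: for $j \le \tpub-u$ the encoder part $\x_i^{[j]} \GG^{[j]}$ lies entirely in $V_\g := \langle \g^{[0]},\dots,\g^{[k+\tpub-u-1]}\rangle$ and the row reduces to $\z_i^{[j]}$; for $j = \tpub-u+r$ with $r\in[1,u]$ only the part with $\g$-indices $\le k+\tpub-u-1$ can be eliminated, leaving
$
Y_{i,r} := \z_i^{[\tpub-u+r]} + \sum_{l=0}^{r-1} (x_{i,k-r+1+l})^{[\tpub-u+r]} \g^{[k+\tpub-u+l]}.
$

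\paragraph{Step 3: bound the row span.}
After these reductions the row span equals
$\mathcal{V} = V_\g + V_\e + V_\z + \langle Y_{i,r}\rangle$,
where $V_\e := \langle \e^{[0]},\dots,\e^{[\tpub]}\rangle$ has dimension $\tpub$ (since $\rank_q(\e)=\tpub$) and $V_\z := \sum_i \langle \z_i^{[0]},\dots,\z_i^{[\tpub]}\rangle$ has dimension at most $\varphi$, because $\tpub+1 \le n-k-w$ makes its defining matrix a row-submatrix of $\MooreZ$. Modulo $V_\g+V_\z$, every $Y_{i,r}$ lies in the $u$-dimensional subspace $U := \langle \overline{\g^{[k+\tpub-u]}},\dots,\overline{\g^{[k+\tpub-1]}}\rangle$; moreover, invertibility of the change-of-basis matrix $\bigl((x_{i,k-u+1+l})^{[\tpub]}\bigr)_{i,l}$ (which is guaranteed by the requirement that $x_{k-u+1},\dots,x_k$ form a basis of $\Fqmu/\Fqm$) shows that $Y_{1,u},\dots,Y_{u,u}$ already span all of $U$ modulo $V_\g + V_\z$, and the remaining $Y_{i,r}$ with $r<u$ contribute no new directions.

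\paragraph{Step 4: putting the pieces together.}
The preceding bounds yield $\dim(V_\g + V_\z + \langle Y_{i,r}\rangle) \le (k+\tpub-u)+\varphi+u = k+\tpub+\varphi$. Finally, I claim that $V_\e$ and $V_\g + V_\z + U$ share at least $u$ dimensions: for $j \in[\tpub-u+1,\tpub]$ the Frobenius identity from the Setup rewrites $\e^{[j]} = \c^{[j]} - \tilde{\m}^{[j]}\GG^{[j]} - \sum_i \beta_i^{[j]} \z_i^{[j]}$ with $\tilde{\m}^{[j]}\GG^{[j]}\in V_\g+U$ and $\sum_i \beta_i^{[j]}\z_i^{[j]} \in V_\z$, so the image of $V_\e$ in $\mathcal{V}/(V_\g+V_\z+U)$ is generated only by the $\tpub-u+1$ Frobenius powers with $j\le \tpub-u$. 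Subtracting this $u$-dimensional overlap, $\dim\mathcal{V} \le (k+\tpub+\varphi) + (\tpub-u) = \varphi + k + 2\tpub - u$. Combining with the trivial $\rank \M \le n$ gives the claim.

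\paragraph{Main obstacle.}
The delicate part is Step~4's overlap count: the naive triangle inequality only yields $\varphi+k+2\tpub$, and saving the additional $u$ requires carefully tracking how the ``extra'' $\g$-rows $\g^{[k+\tpub-u]},\dots,\g^{[k+\tpub-1]}$ introduced by the partial reduction in Step~2 already account for $u$ directions that would otherwise be contributed independently by the top $u$ Frobenius powers of $\e$.
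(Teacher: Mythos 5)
Your Steps~1--3 are correct and are in fact a more careful version of the paper's own argument: the paper simply replaces each block $\Mooremat{\tpub+1}{q}{\kpub^{(i)}}$ by $\Mooremat{\tpub+1}{q}{\z_i}$ via ``elementary row operations'' and then adds up the block ranks $\varphi + \rank_{q^m}(\Mooremat{\tpub+1}{q}{\c}) + \rank_{q^m}(\Mooremat{k+\tpub-u}{q}{\g})$, whereas you correctly record that for $j>\tpub-u$ the rows $\g^{[j]},\dots,\g^{[j+k-1]}$ needed to cancel $\x_i^{[j]}\GG^{[j]}$ are not all contained in the row space of $\Mooremat{k+\tpub-u}{q}{\g}$, leaving the residual terms $Y_{i,r}$ supported on the $u$-dimensional space $U$; you also reduce the $\c$-block to the $\e$-block so that it contributes only $\rank_{q^m}(\Mooremat{\tpub+1}{q}{\e})\le\tpub$ instead of $\tpub+1$. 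Up to the end of Step~3 your accounting establishes $\rank_{q^m}(\M)\le \dim(V_\g+V_\z+U)+\dim V_\e \le \varphi+k+2\tpub$, and that is where the honest bookkeeping stops.

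Step~4 is circular and does not close the remaining gap of $u$. The identity $\e^{[j]}=\c^{[j]}-\tilde{\m}^{[j]}\GG^{[j]}-\sum_i\beta_i^{[j]}\z_i^{[j]}$ only shows that $\e^{[j]}$ and $\c^{[j]}$ have the \emph{same} image in $\mathcal{V}/(V_\g+V_\z+U)$ --- which is precisely your Step~1 read backwards. It does not show that $\e^{[j]}\in V_\g+V_\z+U$ for $j>\tpub-u$, nor that the images of $\e^{[\tpub-u+1]},\dots,\e^{[\tpub]}$ lie in the span of the images of $\e^{[0]},\dots,\e^{[\tpub-u]}$; since $\e$ is drawn independently of $\g$ and $\z$, there is no structural reason for $V_\e$ to meet $V_\g+V_\z+U$ at all, so the claimed $u$-dimensional overlap is not available. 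Be aware that the paper's proof reaches the stated bound $\varphi+k+2\tpub-u$ only because it silently drops exactly the contribution of $U$ that your $Y_{i,r}$ make explicit (and additionally counts the $\c$-block as rank $\tpub$ rather than $\tpub+1$); your more careful version shows that the row-reduction route by itself yields $\min\{\varphi+k+2\tpub,\,n\}$, and the extra $-u$ cannot be recovered by an overlap argument of the kind you attempt.
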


\begin{proof}
We can write
\begin{align*}
\kpub^{(i)} &= \Trmum(\gamma_i \kpub) \\
&= \Trmum(\gamma_i \vec{x}) \cdot \Mooremat{k}{q}{\g} + \vec{z}_i,
\end{align*}
so by elementary row operations, we can transform $\M$ into
\begin{align*}
\M' = \begin{pmatrix}
\Mooremat{\tpub+1}{q}{\c} \\
-\Mooremat{\tpub+1}{q}{\vec{z}_1} \\
\vdots \\
-\Mooremat{\tpub+1}{q}{\vec{z}_u} \\
-\Mooremat{k+\tpub-u}{q}{\g}
\end{pmatrix}.
\end{align*}
Due to $w+2\tpub<n-k$, the matrix $\Mooremat{\tpub+1}{q}{\vec{z}_i}$ is a sub-matrix of $\Mooremat{n-k-w}{q}{\vec{z}_i}$, so
\begin{align*}
&\rank_{q^m} (\M) = \rank_{q^m} (\M') \\
&\leq \varphi + \rank_{q^m} (\Mooremat{\tpub+1}{q}{\c}) + \rank_{q^m} (\Mooremat{k+\tpub-u}{q}{\g}) \\
&= \varphi + k + 2\tpub -u.
\end{align*}
Further, since the number of columns of $\M$ is equal to $n$,
\begin{equation*}
\rank_{q^m} (\M) \leq n.
\end{equation*}
\qed
\end{proof}

The linearization attack is inefficient if the rank of $\M$ is smaller than its number of rows, which implies the following, stronger version of the original statement in~\cite{faure2006new}.
\begin{theorem}\label{cor:linearization_new}
  If $\tpub > \tfrac{n-k}{u+2}$ or $\varphi < u (\tpub+1)$, the linearization attack in \cite{faure2006new} is inefficient and its work factor is
  \begin{align*}
    \text{WF}_{\mathrm{Lin}} = q^{m \cdot \max\{u \tpub + u + 1 - \varphi,(u+2)\tpub + k +1 -n\}}.
  \end{align*}
\end{theorem}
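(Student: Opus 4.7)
The plan is to combine the rank bound of the preceding lemma with the rank--nullity theorem applied to $\M$ in \eqref{eq:M_linearization_attack}, and then to argue that any left-kernel dimension strictly greater than one forces a brute-force enumeration whose cost is an extra factor of $q^m$ per additional dimension.

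First I would count the rows of $\M$. Summing the block sizes $\tpub+1$ (from $\c$), $u(\tpub+1)$ (from the $\kpub^{(i)}$), and $k+\tpub-u$ (from $\g$) yields a total of $(u+2)\tpub + k + 1$ rows. Invoking the preceding lemma on $\rank_{q^m}(\M)$ and applying rank--nullity to the left kernel then gives
\begin{align*}
\dim\ker(\M) &\geq (u+2)\tpub + k + 1 - \min\{\varphi + k + 2\tpub - u,\; n\} \\
&= \max\{\,u\tpub + u + 1 - \varphi,\; (u+2)\tpub + k + 1 - n\,\}.
\end{align*}

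Next I would verify that either hypothesis of the theorem pushes this bound strictly above one. If $\tpub > \tfrac{n-k}{u+2}$, then $(u+2)\tpub > n-k$, so $(u+2)\tpub + k + 1 - n \geq 2$. If $\varphi < u(\tpub+1) = u\tpub + u$, then $u\tpub + u + 1 - \varphi \geq 2$. In either situation the left kernel of $\M$ contains an entire $\Fqm$-subspace of dimension at least two, so the linear system underlying the attack of \cite{faure2006new} no longer isolates the message vector up to scalars: the attack is inefficient.

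Finally, to convert this into the stated work factor, I would argue that once $d := \dim\ker(\M) > 1$, the attacker is reduced to enumerating candidate kernel vectors and testing each one for consistency with a valid plaintext. Since the kernel is an $\Fqm$-vector space of dimension $d$, this enumeration has size $q^{m d}$, matching the claimed $\text{WF}_{\mathrm{Lin}} = q^{m\cdot \max\{u\tpub+u+1-\varphi,\,(u+2)\tpub+k+1-n\}}$. The main obstacle is this last step: one has to trace through the Faure--Loidreau extraction procedure and argue that the attacker genuinely has no shortcut (e.g.\ no exploitable scalar symmetry that would reduce the exponent by one), so that the kernel-dimension lower bound translates directly into the exponent of $q^m$ in the work factor. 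Everything else is a straightforward row count together with the algebraic bound already provided.
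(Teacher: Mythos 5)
Your argument is correct and follows essentially the same route as the paper, which derives the theorem directly from the preceding rank lemma together with the row count $(u+2)\tpub+k+1$ and the observation that the attack succeeds only when the left kernel of $\M$ is one-dimensional (the paper in fact leaves this step implicit and states the work factor without further justification). Your added remark about enumerating the kernel and the possible scalar symmetry is a reasonable caveat, but it does not change the conclusion and the paper does not address it either.
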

The first condition in Corollary~\ref{cor:linearization_new} is again fulfilled by the choice of $w$ in Table~\ref{tab:parameters}.
The second one reads
$\tpub > \tfrac{\varphi}{u}+1$,
and for any valid $\varphi$, there are choices of $w$ such that $\tpub$ fulfills this inequality for any $u>1$.

\subsubsection{Algebraic Attacks}\label{ssec:algebraic_attacks}
Faure and Loidreau \cite{faure2006new} also described two message attacks of exponential worst-case complexity. The first one is based on computing gcds of polynomials of degrees and has a work factor
\begin{equation}
\text{WF}_{\text{GCD}}=
 q^{m(u-1)}\frac{q^{\tpub+1}-1}{q-1}. \label{eq:algebraic_attack_work_factor}
\end{equation}
Since computing the gcd of two polynomials can be implemented in quasi-linear time in the polynomials' degree,~\eqref{eq:algebraic_attack_work_factor} gives an estimate on the work factor of this attack. The second algebraic attack is based on finding Gr\"obner bases of a system of
$n_\mathrm{p} = \tbinom{n}{k+2 \tpub-u+1}$ many polynomials of degree approximately $d_\mathrm{p} = \tfrac{q^{\tpub+1}-1}{q-1}$.
The attack is only efficient for small code parameters, cf.~\cite[Sec.~5.3]{faure2006new}. Since the average-case complexity of Gr\"obner bases algorithms is hard to estimate, we cannot directly relate $n_\mathrm{p}$ and $d_\mathrm{p}$ to the attack's work factor. Faure and Loidreau choose the code parameters such that $n_\mathrm{p} \approx 2^{32}$ and $d_\mathrm{p} = 127$ and claim that the attack is inefficient for these values. Our example parameters in Section~\ref{sec:parameters} result in at least these values.

\subsubsection{Overbeck-like Attack}\label{ssec:attack_overbeck-like}
The key attack described in \cite[Ch.~7, Sec.~2.1]{LoidreauHabitilation-RankMetric_2007} is based on a similar principle as Overbeck uses to attack the McEliece cryptosystem based on Gabidulin codes~\cite{Overbeck-StructuralAttackGPT}.
The attack from \cite[Ch.~7, Sec.~2.1]{LoidreauHabitilation-RankMetric_2007} cannot be applied if
\begin{equation*}
w \geq n-k -\frac{k-u}{u-1}.
\end{equation*}

\subsubsection{Brute-Force Attack on the Element $\alpha$}\label{ssec:attack_brute-force-alpha}
An attacker can brute-force $\alpha \in \Fqmu$, which has a complexity of
\begin{align*}
\text{WF}_{\alpha}=  q^{mu}.  
\end{align*}
By knowing $\alpha$, he just needs to apply an efficient decoding algorithm on $\tilde{\vec{c}} = \vec{c} - \Trmum(\alpha \kpub)$ to retrieve the secret message.

\subsection{Exponential-Time Attacks on \textup{\textsf{ResIG-Dec}}}\label{ssec:sec_analysis_IG-Dec} 
We have seen in Section~\ref{sec:semantic} that {\SystemName} is IND-CCA2 secure under the assumption that \textup{\textsf{ResG-Dec}} is a hard problem.
The two previous subsections analyzed all known attacks on the \textup{\textsf{ResG-Search}} and \textup{\textsf{ResIG-Search}} problems, which are relevant since there is an obvious reduction of \textup{\textsf{ResG-Dec}} to these search problems.

In the following, we study Problem~\textup{\textsf{ResIG-Dec}} (which translates to distinguishing the public key from a random vector in $\Fqmu^n$), which is different in the sense that we do not know an efficient reduction from \textup{\textsf{ResG-Dec}} (or one of the search problems) to \textup{\textsf{ResIG-Dec}}.
In other words, even if distinguishing the public key is easy, it might still be hard to distinguish the ciphertext.
Nevertheless, we study the hardness of \textup{\textsf{ResIG-Dec}} in the following and present a distinguisher, which is efficient to compute if $\zeta$ is chosen small.
The distinguisher is as follows.

Recall the choice of $\kpub$ in Algorithm~\ref{alg:modKeyGen}.
We have
\begin{align*}
\kpub = \vec{x} \cdot \mathbf G_{\mycode{G}} + \vec{z} \in \Fqmu^n.
\end{align*}
Expand $\kpub$ into a $u \times n$ matrix over $\Fqm$ and choose any $\zeta+1$ rows.
As the $\Fqm$-expansion of the error $\vec{z}$ has $\Fqm$-rank $\zeta$, there are at least $q^m-1$ many non-trivial $\Fqm$-linear combinations of these $\zeta+1$ rows that are codewords of~$G_{\mycode{G}}$.
This is not true with high probability for a random $u \times n$ matrix over $\Fqm$.

Thus, by repeatedly randomly linearly combining these $\zeta+1$ rows and checking whether the result is a codeword of $G_{\mycode{G}}$, we obtain a Monte-Carlo algorithm with an expected work factor of
\begin{align}
\text{WF}_{\kpub,\mathrm{distinguisher}} = q^{m\zeta}, \notag
\end{align}
neglecting the cost of checking whether a vector in $\Fqm^n$ is a codeword.
Hence, if $m\zeta$ is smaller than the security parameter of the system, this distinguisher is feasible to compute.

{
\subsection{{Avoiding} Weak Keys}\label{ssec:weak_keys}

As already discussed in Section~\ref{ssec:attacks_ciphertext}, the work factors of the ``Randomized Gabidulin Decoding Attack on the Ciphertext'' and the ``List Decoding of the Ciphertext Attack'' depend on the rank of the error part $\Trmum(\alpha\vec{z}) + \vec{e}$ of the ciphertext (seen as codeword plus error).
Generically, this error has weight $\tpub+\e$, but due to the trace operation and the addition, the rank might be smaller.

In Appendix~\ref{app:probability_of_large_error_weight}, we will analyze the probability that for a given key (i.e., $\z$ in this case) and a random encryption (random choices of $\alpha$ and $\e$) the rank is significantly smaller than expected (we use $<w$ as a threshold, see Section~\ref{ssec:attacks_ciphertext}).
Briefly summarized, we get the following results.

It turns out that this probability heavily depends on the minimum distance of the code $\mathcal{A}$ used to generate $\z$ in Algorithm~\ref{alg:modKeyGen}.
The smaller this minimum distance, the larger the probability that the rank is low.
More precisely, for a given $\mathcal{A}$ of minimum distance $2 \leq t \leq w-\zeta+2$
\begin{equation*}
\Pr(\rank_{\Fq}(\Trmum(\alpha\vec{z}) + \vec{e})<w) \leq q^{-m\zeta}+64 \min\{\trrank,\tpub\}^2 q^{-(\trrank+\tpub-w+1)\left(n+\frac{\trrank-3w-\tpub}{2}\right)},
\end{equation*}
cf.~Theorem~\ref{thm:strong_key_ciphertext_error_weight_probability} in Appendix~\ref{app:probability_of_large_error_weight}.

Due to the above discussion, we call a key with $\Pr(\rank_{\Fq}(\Trmum(\alpha\vec{z}) + \vec{e})<w)>2^{-\secLevel}$ a \emph{weak key}.
In Appendix~\ref{app:probability_of_large_error_weight}, we derive an upper bound on the probability of choosing weak key (i.e., an $\mathcal{A}$ of too small minimum distance) in Algorithm~\ref{alg:modKeyGen}.
For $\zeta q^{\zeta w-m} \leq \tfrac{1}{2}$, this bound is roughly
\begin{equation*}
\Pr(\text{weak key}) \leq \Theta\!\left( q^{m[t-(w-\zeta+2)]} \right),
\end{equation*}
cf.~Remark~\ref{rem:asymptotic_weak_key} (see Theorem~\ref{thm:strong_key_ciphertext_error_weight_probability} for a non-asymptotic bound) in Appendix~\ref{app:probability_of_large_error_weight}, where $t$ is the smallest minimum distance for which the key is not weak.

It can be seen that the parameters of {\SystemName} can be chosen such that there is a $t$ with $2 \leq t \leq w-\zeta+2$ such that both $\Pr(\rank_{\Fq}(\Trmum(\alpha\vec{z}) + \vec{e})<w)$ (for any non-weak key) and $\Pr(\text{weak key})$ are smaller than $2^{-\secLevel}$.
This is the case for all parameters proposed in Table~\ref{tab:parameters}.
}

\subsection{Summary of the Work Factors}

In this section, we recall the conditions on the choice of the parameters such that all known attacks are inefficient and summarize their work factors. Furthermore, we give specific parameters and compare {\SystemName} to other code-based cryptosystems.

In the following, we choose the parameters $q$, $m$, $n$, $k$, $u$, $w$, and $\tpub$ as in Table~\ref{tab:parameters}.
Recall that this choice of $w$ prevents the Overbeck-like attack (Section~\ref{ssec:attack_overbeck-like}) and results in an exponential work factor of the linearization attack (Section~\ref{ssec:attack_linearization}).

Furthermore, we choose $\dimZ$ to be small such that the work factor of searching the exponentially-large output of the interleaved decoding attack (Section~\ref{ssec:attack_interleaved_decoding}) is large.
Note that the latter attack returns an exponentially-large output if and only if the GOT \cite{Gaborit2018-FL-Attack} attack fails, cf.~Theorem~\ref{thm:equiv-int-got}.

The resulting considered work factors are summarized in Table~\ref{tab:wf_attacks}.
In addition to these work factors, we have considered the following requirements:
\begin{itemize}
\item The work factor of the second algebraic attack in \cite{faure2006new} (cf.~Section~\ref{ssec:algebraic_attacks}) is unknown.
Hence, we choose the code parameters such that the resulting non-linear system of equations occurring in the attack consists of more than $n_\mathrm{p} \approx 2^{32}$ many polynomials of degree at least $d_\mathrm{p} = 127$. This is the same choice as in \cite{faure2006new}.
\item Since there is no efficient list decoder for Gabidulin codes, the work factor of the list decoding the public key or the ciphertext in Section~\ref{ssec:list_decoding_attack_ciphertext} is not known.
However, we do have a lower bound on the worst-case work factor for some codes, given by the maximal list size $\mathcal{L}_\text{$\c$,worst}$ in \eqref{eq:list_size_cipher}.
In all examples for which the bound holds, we chose the parameters such that $\log_2(\mathcal{L}_\text{$\c$,worst})$ is much larger than the claimed security level. 
\item {The probability of generating a weak key should be negligible. Thus, we choose the parameters such that $\zeta q^{\zeta w-m} \leq \tfrac{1}{2}$ and
    \begin{align*}
      \Pr(\text{weak key}) &\leq \frac{q^{m\zeta}-1}{(q^m-1)(q^{mw}-1)}\left( \sum_{i=0}^{\trrank-1} \qbinomialField{w}{i}{q} \prod_{j=0}^{i-1} \left(q^m-q^j\right) -1 \right) \\
                             &\leq 2^{-\secLevel},
\end{align*}
where $\secLevel$ is the security parameter and
\begin{align*}
t := \min\left\{ t \, : \,  q^{-m\zeta}+64 \min\{\trrank,\tpub\}^2 q^{-(\trrank+\tpub-w+1)\left(n+\frac{\trrank-3w-\tpub}{2}\right)} \leq 2^{-\secLevel} \right\}.
\end{align*}
}
  
\end{itemize}

\begin{table*}
  \caption{Summary of the Discussed Attacks' Work Factor.}
  \renewcommand{\arraystretch}{1.6} % to make the height of each row equal
\begin{center}
  \begin{tabular}{r|l}
   Name of the attack & Work factor \\
    \hline
    Brute-force $\vec{z}$ (Sec.~\ref{ssec:attack_brute-force_z}) & $\text{WF}_{\text{ILD}}=\frac{q^{w(n-w)}}{\mathcal{N}}$\\
    Interleaved Decoding (Sec.~\ref{ssec:attack_interleaved_decoding}) & $\text{WF}_{\text{ILD}}= \frac{q^{m(w-\dimZ(n-k-w))}}{\mathcal{N}}$  \\
    Randomized Decoding (Sec.~\ref{ssec:attack_rsd_pk}) & $\text{WF}_{\text{RGD}} = \frac{n}{64} q^{m(n-k)-w(n+m)+w^2+\min\{2\xi(\frac{n+k}{2}-\xi),wk\} }$ \\
    Moving to Close Error (Sec.~\ref{ssec:attack_moving_to_error}) &    $\text{WF}_{\text{MCE}} = q^{(n-w)(w(\dimZ+1)-\dimZ(n-k))}$ \\
    \hline
    Combinatorial RSD (Sec.~\ref{ssec:attack_syndrome_decoding}) &$ \text{WF}_{\text{CRSD}}=(n-k)^3m^3q^{\tpub \big \lceil\frac{(k+1)m}{n} \big\rceil -m}$ \\
    Algebraic RSD (Sec.~\ref{ssec:attack_algebraic_syndrome_decoding}) & $ \text{WF}_{\text{ARSD}}$ \\
    Linearization (Sec.~\ref{ssec:attack_linearization}) & $\text{WF}_{\text{Lin}} = q^{m \cdot \max\{u \tpub + u + 1 - \varphi,(u+2)\tpub + k +1 -n\}}$ \\
    GCD based attack (Sec.~\ref{ssec:algebraic_attacks}) & $\text{WF}_{\text{GCD}} = q^{m(u-1)}\frac{q^{\tpub+1}-1}{q-1}$  \\
    Brute-force $\alpha$ (Sec.~\ref{ssec:attack_brute-force-alpha}) & $\text{WF}_{\alpha}=  q^{mu}$ \\
	\hline
	Distinguisher for $\kpub$ (Sec.~\ref{ssec:sec_analysis_IG-Dec}) & $\text{WF}_{\kpub,\mathrm{distinguisher}} = q^{m\zeta}$
  \end{tabular}
\end{center}
\label{tab:wf_attacks}
\end{table*}

\renewcommand{\arraystretch}{1.2}
\setlength{\tabcolsep}{10pt}
\begin{table*}[!t]
	\caption{Parameter sets for 128, 192 and 256 bit security.} 
	\begin{center}
		\begin{tabular}{l||c|c|c|c|c|c|c|c|c }
			Parameter Set &$q$ & $u$ & $k$ & $n$ & $m$ & $\zeta$ & $w$ & $\tpub$ & $R$ \\
			\hline  \hline

                  {\SystemName}-128 & 2 & 5 & 53 & 92 & 92 & 2 & 27  & 6 & 0.52 \\ 

                  {\SystemName}-192 & 2 & 5 & 69 & 120 & 120 & 2 & 35  & 8 & 0.53\\ 

                  {\SystemName}-256 & 2 & 5 & 85 & 148 & 148 & 2 & 43  & 10 & 0.54 \\
                  \hline
                \end{tabular}
	\end{center}
	\label{tab:security_para}
	\vspace*{4pt}
\end{table*}

\begin{table*}[!t]
	\caption{Comparison of memory costs of $\sk$, $\pk$ and the ciphertext $\ct$ in Byte with IND-CCA-secure Loidreau~\cite{Bellini2019indcca} and the NIST proposals RQC~\cite{melchor2019rqc}, ROLLO~\cite{melchor2019rollo}, BIKE~\cite{aragon2019bike} and Classic McEliece~\cite{bernstein2019mceliece}. The entry 'yes' in the column DFR indicates that a scheme has a decryption failure rate larger than 0.} 	\begin{center}
		\begin{tabular}{l||c|c|c|c|c }
			System name & $\sk$ & $\pk$ & $\ct$ & Security & DFR \\
			\hline  \hline
                  \SystemName-128 & 3795 & 6348 & 1058 & 128 bit & no  \\
                  RQC-I & 40 & 1834 & 3652 & 128 bit & no \\
                  ROLLO-I-128 & 40 & 696 & 696 & 128 bit & yes \\
                  Loidreau-128 & 7181 & 6720 & 464 & 128 bit & no \\
                  BIKE-2 Level 1 & 249 & 1271 & 1271 &  128 bit & yes \\
                  McEliece348864 & 6452 & 261120 & 128 & 128 bit & no \\

                  \hline
                  \SystemName-192 & 6450 & 10800 & 1800 & 192 bit & no  \\
                  RQC-II & 40 & 2853 & 5690 & 192 bit & no \\
                  ROLLO-I-192 & 40 & 958 & 958 & 192 bit & yes \\
                  Loidreau-192 & 13548 & 11520 & 744 & 128 bit & no \\
                  BIKE-2 Level 2 & 387 & 2482 & 2482 & 192 bit & yes \\
                  McEliece460896 & 13568 & 524160 & 188 & 192 bit & no \\
                  
                  \hline
                  \SystemName-256 & 9805 & 16428 & 2738 & 256 bit & no  \\
                  RQC-III & 40 & 4090 & 8164 & 256 bit & no \\ 
                  ROLLO-I-256 & 40 & 1371 & 1371 & 256 bit & yes \\
                  Loidreau-256 & 14128 & 16128 & 1024 & 256 bit & no \\
                  BIKE-2 Level 3 & 513 & 4094 &  4094 & 256 bit & yes\\
                  McEliece6688128 & 13892 & 1044992 & 240 & 256 bit & no  \\
				\hline
                  
		\end{tabular}
	\end{center}
	\label{tab:comparison}
\end{table*}

\section{Parameters and Key Sizes}\label{sec:parameters}
We propose parameters for security levels of $128$ bit, $192$ bit and $256$ bit in Table~\ref{tab:security_para}, where $R=\frac{k-u}{n}$ denotes the rate. The parameters are chosen in a way that we can send at least $256$ bit of information and thus the system can be used as a KEM. Further, we use a security margin of at least $20$ bit.  For all parameters, the algebraic attack based on computing gcds of polynomials is the most efficient attack.

To evaluate the performance of \SystemName, we compare it to the IND-CCA-secure version~\cite{Bellini2019indcca} of Loidreau's system~\cite{Loidreau2017-NewRankMetricBased} and the NIST proposals RQC~\cite{melchor2019rqc}, ROLLO~\cite{melchor2019rollo}, BIKE~\cite{aragon2019bike} and Classic McEliece~\cite{bernstein2019mceliece}. We show the sizes of the private key $\sk$, the public key $\pk$ and the ciphertext $\ct$ in Byte in Table~\ref{tab:comparison}.

\section{Conclusion}\label{sec:conclusion}

In this paper, we presented a new rank-metric code-based cryptosystem: {\SystemName}.
{\SystemName} uses a new coding-theoretic interpretation of the Faure--Loidreau system.
We showed that the ciphertext is a corrupted codeword of a Gabidulin code, where to an unauthorized receiver, the error weight is too large to be correctable.
The authorized user knows the row space of a part of the error and is thus able to correct the error.
Further, we derived that a part of the public key can be seen as a corrupted codeword of an interleaved Gabidulin code and that in the original FL system, an interleaved Gabidulin decoder can efficiently recover the private key from this part of the public key with high probability.
We proved that the condition that interleaved Gabidulin decoders fail is equal to the condition that the severe attack by Gaborit, Otmani and Tal\'e Kalachi fails.

Based on the latter observation, we chose {\SystemName}'s key generation algorithm such that interleaved Gabidulin decoders fail which in turn implies that the attack by Gaborit~\emph{et~al.} fails.

We proposed two versions of {\SystemName} and proved that the public key encryption is IND-CPA secure and the KEM is IND-CCA2 secure under the assumption that the \textsf{ResG-Dec} problem is hard.
We extensively analyzed the security of this decisional problem by studying attacks on the \textsf{ResG-Search}, \textsf{ResIG-Search}, and \textsf{ResIG-Dec} (recall that there is a reduction of \textsf{ResG-Dec} to each of the two search problems).
All studied attacks have an exponential work factor in the proposed parameter ranges and can be avoided by parameter choice.

Finally, we presented parameters for security levels of $128$, $192$ and $256$ bit and compared them to the NIST proposals RQC, ROLLO, BIKE, Classic McEliece and a rank-metric McEliece-like system proposed by Loidreau.
It was observed that {\SystemName} has small ciphertext sizes as well as relatively small key sizes.
Encryption and decryption correspond to encoding and decoding of Gabidulin codes, for which efficient and constant-time algorithms exist.
Further, the proposed system guarantees decryption and is not based on hiding the structure of a code.
Hence, the {\SystemName} system should be considered as an alternative of small ciphertext and key size.

\section*{Acknowledgment}
The work of J.~Renner and A.~Wachter-Zeh was supported by the European Research Council (ERC) under the European Union’s Horizon 2020 research and innovation programme (grant agreement no.~801434).

S.~Puchinger received funding from the European Union's Horizon 2020 research and innovation program under the Marie Sklodowska-Curie grant agreement no.~713683.

We would like to thank Johan Rosenkilde for proposing the ``moving to a close error'' attack.
Also, we are thankful to Michael Schelling for his observation that decryption of the FL system can be seen as error-erasure decoding.
Further, we thank Pierre Loidreau for his valuable comments on a previous version of this paper.
{We are also grateful to Alessandro Neri for fruitful discussions that helped to achieve the results in Appendix~\ref{app:probability_of_large_error_weight}.}

\bibliographystyle{splncs04}
\bibliography{main}

\appendix

\section{Practical Considerations on the Key Generation}\label{app:practical_key_generation}

We discuss practical aspects related to the following lines of the modified key generation algorithm (Algorithm~\ref{alg:modKeyGen}).
{\RestyleAlgo{plain}
\begin{algorithm}
\DontPrintSemicolon
{
\setcounter{AlgoLine}{2}
$\mathcal{A} \sample \left\{ \text{subspace } \mathcal{U} \subseteq \Fqm^w \, : \, \dim \mathcal{U} = \zeta, \, \mathcal{U} \text{ has a basis of full-$\Fq$-rank elements} \right\}$ \\
\setcounter{AlgoLine}{2}
\SetNlSty{textbf}{}{'}
$\begin{pmatrix}
\s_1 \\
\vdots \\
\s_u
\end{pmatrix} \sample \left\{ \begin{pmatrix}
\s_1' \\
\vdots \\
\s_u'
\end{pmatrix} \, : \, \langle\s_1',\dots,\s_u'\rangle_{\Fqm} = \mathcal{A}, \, \rank_{q}(\s_i') = w \, \forall \, i \right\}$ \\
}
\label{alg:s_construction}
\end{algorithm}

We conjecture that the set from which $\mathcal{A}$ is sampled is almost the entire set of $\zeta$-dimensional subspaces of $\Fqm^w$ (or, equivalently, of linear $[w,\zeta]_{q^m}$ codes). Using a combinatorial argument on the known number of full-rank codewords of MRD codes, we prove in Lemma~\ref{lem:MRD_basis_of_full_rank_codewords} (Appendix~\ref{app:probability_of_large_error_weight}) that MRD codes always have a basis consisting of full-rank codewords.
Since the weight enumerator is not known in general for non-MRD codes, we cannot give a proof, but we expect that most codes that are close to MRD (i.e., $d$ is close to $n-k+1$) also have such a basis.
The conjecture is then implied by the fact that (close-to) MRD codes constitute the majority of linear codes \cite{neri2018MRD,byrne2020partition} for the parameters considered here.}

Since it is hard to check if a randomly drawn code admits a basis of full-$\Fq$-rank codewords in the worst case, these arguments also imply a practical method on how to implement Lines~$3$ and $3'$ in practice: sample uniformly at random from the set of $[w,\zeta]_{q^m}$ codes. With overwhelming probability, the code is close to MRD and a large proportion of its codewords have full $\Fq$-rank.
Randomly choosing $u$ codewords will thus give a generating set consisting of full-rank codewords with high probability.
Only if no basis is found after a given number of trials, one needs to formally check if the code does not admit a generating set of full-$\Fq$-rank codewords.
This gives a Las-Vegas-type algorithm with (supposedly) small expected running time.

The worst case of this algorithm (i.e., no suitable generating set is found after a given number of trials) occurs with extremely small probability (provably it is close to the probability of drawing no MRD code at random, it might be even smaller in reality since also ``near-MRD'' might have suitable bases).
Nevertheless, the worst-case complexity is still quite large.
Alternatively, one can draw a new code $\mathcal{A}$ if no generating set is found after a given number of trials.
This, however, slightly changes the random experiment from which the code $\mathcal{A}$ is drawn.
The only part of this paper which is influenced by such a modification is Section~\ref{ssec:weak_keys}, a summary of Appendix~\ref{app:probability_of_large_error_weight}, which studies weak keys (i.e., keys for which there is a non-negligible probability that the error part of the ciphertext has too low rank and is vulnerable to a feasible ciphertext attack).
A key is weak only if the minimum distance of $\mathcal{A}$ is small.
By parameter choice, the probability that such a key is generated can be made arbitrarily small (cf.~Appendix~\ref{app:probability_of_large_error_weight}).
By the same arguments as above, we conjecture that if the probability of obtaining a generating set of full-$\Fq$-rank codewords by drawing $u$ codewords uniformly at random is small, then also the minimum distance of the code must be small (i.e., far away from MRD).
In summary, we expect (but cannot prove) that this change of drawing procedure results in an even smaller weak-key probability than predicted by Theorem~\ref{thm:strong_key_ciphertext_error_weight_probability} (Appendix~\ref{app:probability_of_large_error_weight}).

\section{Decryption as Error-Erasure Decoding}\label{app:decryption_as_error-erasure_decoding}

In the following, we give a coding-theoretic interpretation of the ciphertext of the original FL system and of {\SystemName}, which---to the best of our knowledge---has not been observed before.
\begin{lemma}\label{lem:ciphertext_decomposition_decryption_error_erasure}
Fix a basis $\vec{\gamma}$ of $\Fqm$ over $\Fq$. Then, the matrix representation of the ciphertext can be written in the form
\begin{equation}
  \extsmallfield_{\vec{\gamma}}(\c) = \vec{C}_\mathcal{G}  + \vec{A}^{(C)} \vec{B}^{(C)} + \vec{E}, \label{eq:ciphertext_decomposition_decryption_error_erasure}
\end{equation}
where
\begin{itemize}
	\item 
$\vec{C}_\mathcal{G} = \extsmallfield_{\vec{\gamma}}([\vec{m}+\Trmum(\alpha \vec{x})]\cdot\mathbf G_{\mycode{G}}) \in \Fq^{m \times n}$
is unknown and a codeword of a Gabidulin code,
 \item $\vec{A}^{(C)} = \extsmallfield_{\vec{\gamma}}(\Trmum(\alpha \s)) \in \Fq^{m \times w}$
is unknown,
\item  $\vec{B}^{(C)} = (\vec{P}^{-1})_{[1,\dots,w]} \in \Fq^{w \times n}$
is known and
\item $\vec{E} 	  =  \extsmallfield_{\vec{\gamma}}(\vec{e}) \in \Fq^{m \times n}$
is unknown.
\end{itemize}
\end{lemma}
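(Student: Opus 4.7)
The plan is to expand the ciphertext using the definitions of $\c$, $\kpub$, and $\z$, and then to carefully exploit the $\Fq$- and $\Fqm$-linearities of the trace operator and the $\extsmallfield_{\vec{\gamma}}$ map.

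First I would substitute $\kpub = \vec{x} \cdot \mathbf G_{\mycode{G}} + \vec{z}$ into the definition of $\c$ from Algorithm~\ref{alg:Encrypt}. Since $\mathbf G_{\mycode{G}} \in \Fqm^{k \times n}$ and $\Trmum : \Fqmu \to \Fqm$ is $\Fqm$-linear, the trace commutes with right-multiplication by $\mathbf G_{\mycode{G}}$, giving
\begin{equation*}
\c = \bigl[(\vec{m},\vec{0}_u) + \Trmum(\alpha \vec{x})\bigr] \cdot \mathbf G_{\mycode{G}} + \Trmum(\alpha \vec{z}) + \vec{e}.
\end{equation*}
Next, using $\vec{z} = (\vec{s} \mid \vec{0}) \cdot \vec{P}^{-1}$ with $\vec{P}^{-1} \in \Fq^{n \times n} \subseteq \Fqm^{n \times n}$, the same linearity argument yields
\begin{equation*}
\Trmum(\alpha \vec{z}) = \bigl(\Trmum(\alpha \vec{s}) \mid \vec{0}\bigr) \cdot \vec{P}^{-1} = \Trmum(\alpha \vec{s}) \cdot \vec{B}^{(C)},
\end{equation*}
where $\vec{B}^{(C)}$ is the submatrix of $\vec{P}^{-1}$ consisting of its first $w$ rows.

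Then I would apply $\extsmallfield_{\vec{\gamma}}$ to both sides. The key observation is that for any vector $\vec{v} \in \Fqm^\ell$ and any matrix $\vec{M} \in \Fq^{\ell \times n}$, one has $\extsmallfield_{\vec{\gamma}}(\vec{v} \vec{M}) = \extsmallfield_{\vec{\gamma}}(\vec{v}) \vec{M}$; this follows directly from the definition of $\extsmallfield_{\vec{\gamma}}$ together with the $\Fq$-linearity of the basis decomposition, and it is the step that converts the factor $\vec{B}^{(C)}$ (with $\Fq$-entries) into a right factor after expansion. Since $\vec{B}^{(C)}$ has entries in $\Fq$, we obtain
\begin{equation*}
\extsmallfield_{\vec{\gamma}}\bigl(\Trmum(\alpha \vec{s}) \cdot \vec{B}^{(C)}\bigr) = \extsmallfield_{\vec{\gamma}}\bigl(\Trmum(\alpha \vec{s})\bigr) \cdot \vec{B}^{(C)} = \vec{A}^{(C)} \vec{B}^{(C)},
\end{equation*}
and $\extsmallfield_{\vec{\gamma}}$ is additive by construction, so the three summands in the expression for $\c$ expand to $\vec{C}_\mathcal{G}$, $\vec{A}^{(C)} \vec{B}^{(C)}$, and $\vec{E}$, respectively. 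The claimed dimensions ($\vec{C}_\mathcal{G} \in \Fq^{m \times n}$, $\vec{A}^{(C)} \in \Fq^{m \times w}$, $\vec{B}^{(C)} \in \Fq^{w \times n}$, $\vec{E} \in \Fq^{m \times n}$) follow from the shape of the inputs to $\extsmallfield_{\vec{\gamma}}$, concluding the proof.

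There is no real obstacle here; the only thing that requires care is to track the field over which each quantity is linear and to verify that the trace and the $\extsmallfield_{\vec{\gamma}}$-map commute with right multiplication by the relevant $\Fq$- and $\Fqm$-valued matrices. Everything else is a direct substitution.
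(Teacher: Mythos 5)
Your proposal is correct and follows essentially the same route as the paper's proof: substitute $\kpub = \x\GG + \z$ and $\z = (\s \mid \vec{0})\Pinv$, pull $\GG$ and the first $w$ rows of $\Pinv$ out of the trace using its $\Fqm$-linearity, and then observe that $\extsmallfield_{\vec{\gamma}}$ commutes with right multiplication by the $\Fq$-matrix $\vec{B}^{(C)}$. The only difference is that you spell out the final expansion step (which the paper leaves implicit), and you correctly identify $\vec{B}^{(C)}$ as the first $w$ \emph{rows} of $\Pinv$, consistent with the stated dimensions.
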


\begin{proof}
  Due to the $\Fqm$-linearity of the trace map $\Trmum$ and the fact that the entries of the matrices $\GG$ and $\Pinv$ are in $\Fqm$, we can write the ciphertext as follows.
  \begin{small}
\begin{align*}
\c &=  \m \GG + \Trmum(\alpha \kpub) + \e \\
&= \m \GG + \Trmum(\alpha \x \GG + \alpha \z) + \e \\
&= \left[ \m + \Trmum(\alpha \x) \right] \GG + \Trmum(\alpha \z) + \e \\
&= \left[ \m + \Trmum(\alpha \x) \right] \GG + \Trmum(\alpha (\s \mid \vec{0}) \Pinv) + \e \\
&= \left[ \m + \Trmum(\alpha \x) \right] \GG + \Trmum(\alpha \s (\vec{P}^{-1})_{[1,w]}) + \e \\
&= \left[ \m + \Trmum(\alpha \x) \right] \GG + \Trmum(\alpha \s) (\vec{P}^{-1})_{[1,w]} + \e.
\end{align*}
\end{small}
Since the entries of $(\vec{P}^{-1})_{[1,\dots,w]}$ are in $\Fq$, the expansion of the ciphertext into the $\Fq$-basis $\vec{\gamma}$ of $\Fqm$ can be written as in~\eqref{eq:ciphertext_decomposition_decryption_error_erasure} above. 
\qed
\end{proof}

\begin{theorem}\label{thm:decryption_as_error_erasure_decoding}
The message vector $\vec{m}$ can be reconstructed by the error-erasure decoders in \cite{GabidulinPilipchuck_ErrorErasureRankCodes_2008,silva_rank_metric_approach,WachterzehZeh-ListUniqueErrorErasureInterpolationInterleavedGabidulin_DCC2014} (as well as their accelerations in \cite{PuchingerWachterzeh-ISIT2016,puchinger2018fast}) and Steps~\ref{line:decrypt_alpha} and~\ref{line:decrypt_m} of Algorithm~\ref{alg:Decrypt}.
\end{theorem}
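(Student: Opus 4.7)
The plan is to combine the decomposition of the ciphertext from Lemma~\ref{lem:ciphertext_decomposition_decryption_error_erasure} with the known guarantees of rank-metric error-erasure decoders. First, I would observe that equation~\eqref{eq:ciphertext_decomposition_decryption_error_erasure} has exactly the form of the general error-erasure channel model in~\eqref{eq:decomp_errrorerasures}, namely
\begin{equation*}
\extsmallfield_{\vec{\gamma}}(\c) = \vec{C}_\mathcal{G} + \vec{A}^{(R)} \vec{B}^{(R)} + \vec{A}^{(C)} \vec{B}^{(C)} + \vec{E},
\end{equation*}
with $\numbRowErasures = 0$ (no row erasures, i.e., $\vec{A}^{(R)}\vec{B}^{(R)}$ is absent), $\numbColErasures = w$ column erasures given by $\vec{A}^{(C)} = \extsmallfield_{\vec{\gamma}}(\Trmum(\alpha \s)) \in \Fq^{m \times w}$ and $\vec{B}^{(C)} = (\vec{P}^{-1})_{[1,w]} \in \Fq^{w \times n}$, and a full error $\vec{E} = \extsmallfield_{\vec{\gamma}}(\e)$ of rank $t = \tpub$. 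By construction, $\vec{B}^{(C)}$ is fully determined by $\vec{P}$ and can thus be included as (equivalent) side information in the secret key, while $\vec{A}^{(C)}$ and $\vec{E}$ are unknown to the decoder.

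Second, I would verify the error-erasure decodability condition~\eqref{eq:errorerasurecond}. Substituting $\numbRowErasures = 0$, $\numbColErasures = w$ and $t = \tpub$ gives the requirement $2\tpub + w \leq n-k$, which holds by the choice $\tpub = \lfloor (n-w-k)/2 \rfloor$ from Table~\ref{tab:parameters}. Consequently, any of the decoders in~\cite{GabidulinPilipchuck_ErrorErasureRankCodes_2008,silva_rank_metric_approach,WachterzehZeh-ListUniqueErrorErasureInterpolationInterleavedGabidulin_DCC2014} (or their faster variants~\cite{PuchingerWachterzeh-ISIT2016,puchinger2018fast}) correctly recovers $\vec{C}_\mathcal{G} = \extsmallfield_{\vec{\gamma}}\!\big([\vec{m} + \Trmum(\alpha \x)]\cdot\GG\big)$ from the pair $(\extsmallfield_{\vec{\gamma}}(\c),\vec{B}^{(C)})$.

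Third, having reconstructed $\vec{C}_\mathcal{G}$, one obtains $\vec{m}' := \vec{m} + \Trmum(\alpha \x) \in \Fqm^k$ by solving the (overdetermined but consistent) linear system $\vec{m}' \GG = \extsmallfield_{\vec{\gamma}}^{-1}(\vec{C}_\mathcal{G})$, which is well-defined because $\GG$ has full row rank. This is exactly the intermediate vector produced by Step~\ref{line:decrypt_decoding} of Algorithm~\ref{alg:Decrypt}, so from this point on the argument proceeds verbatim as in the correctness proof of the original decryption algorithm: Step~\ref{line:decrypt_alpha} exploits $m_{k-u+1} = \dots = m_k = 0$ together with the dual-basis property of $\{x_{k-u+1}^*,\dots,x_k^*\}$ to extract $\alpha$, and Step~\ref{line:decrypt_m} then returns $\vec{m}$ by subtracting $\Trmum(\alpha\vec{x})$ from $\vec{m}'$ and truncating to the first $k-u$ coordinates.

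There is really no hard step: the argument is essentially a bookkeeping exercise verifying that the ciphertext decomposition matches the input format of a rank-metric error-erasure decoder and that the error/erasure budget fits the guaranteed decoding radius. The only subtlety worth flagging is that the secret-key representation must make the column-erasure basis $\vec{B}^{(C)} = (\vec{P}^{-1})_{[1,w]}$ available (rather than, or in addition to, $\vec{P}_{[w+1,n]}$ as in Algorithm~\ref{alg:modKeyGen}); this is a purely representational change, since both matrices are derived from the same secret permutation-like matrix $\vec{P}$ and can be precomputed at key-generation time.
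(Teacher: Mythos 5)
Your proposal is correct and follows essentially the same route as the paper's own proof: invoke the decomposition of Lemma~\ref{lem:ciphertext_decomposition_decryption_error_erasure}, match it to the error-erasure channel model \eqref{eq:decomp_errrorerasures} with $\numbRowErasures=0$, $\numbColErasures=w$, $t=\tpub$, check the decoding condition $w+2\tpub\leq n-k$ from Table~\ref{tab:parameters}, and finish with Steps~\ref{line:decrypt_alpha} and~\ref{line:decrypt_m} of Algorithm~\ref{alg:Decrypt}. Your extra remarks (that the key must expose $(\vec{P}^{-1})_{[1,w]}$ as the column-erasure side information, and how $\vec{m}'$ is extracted from the recovered codeword) are correct elaborations of details the paper leaves implicit.
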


\begin{proof}
As seen in Lemma~\ref{lem:ciphertext_decomposition_decryption_error_erasure}, we can decompose the matrix representation of the ciphertext into a codeword plus an error that is partially known.
In fact, the decomposition is of the form as in \eqref{eq:decomp_errrorerasures} (see Section~\ref{ssec:preliminaries_rank-metric_codes}), so $\vec{m}+\Trmum(\alpha \vec{x})$ can be reconstructed by the error-erasure decoders in \cite{GabidulinPilipchuck_ErrorErasureRankCodes_2008,silva_rank_metric_approach,WachterzehZeh-ListUniqueErrorErasureInterpolationInterleavedGabidulin_DCC2014} since the decoding condition \eqref{eq:errorerasurecond} reads as
\begin{align*}
w+2 \rank_q(\vec{E}) = w+2\tpub \leq n-k
\end{align*}
in this case and is fulfilled by Table~\ref{tab:parameters}.

The message $\m$ can then be recovered from $\vec{m}+\Trmum(\alpha \vec{x})$ using the same steps as in Algorithm~\ref{alg:Decrypt}.
\qed
\end{proof}

Theorem~\ref{thm:decryption_as_error_erasure_decoding} leads to the following observation.
The ciphertext is a codeword plus an error of rank weight $w+\tpub$, which is beyond the unique decoding radius.
The legitimate receiver can only decrypt since she knows the ($w$-dimensional) row space of a part of the error.
Although the attacker knows the code, she cannot recover the message since she has no further knowledge about the structure of the error.
Note the difference to the code-based McEliece cryptosystem, where the security relies on the fact that an attacker does not know the structure of the code.
We will turn this observation into an exponential-time message attack in Section~\ref{ssec:list_decoding_attack_ciphertext}, which we will consider in our parameter choice.

Furthermore, the procedure implied by Theorem~\ref{thm:decryption_as_error_erasure_decoding} might have a practical advantage compared to the original decryption algorithm.
The code $\mycode{\mathcal{G}}'$ used for decoding in Algorithm~\ref{alg:Decrypt} depends on the private key.
In Theorem~\ref{thm:decryption_as_error_erasure_decoding}, the code is given by $\vec{g}$, which is public and in fact does not need to be chosen randomly in the key generation.\footnote{Note that we described the key generation as in \cite{faure2006new}, where $\vec{g}$ is chosen at random, but this is not necessary for the security of the system.}
Depending on the used algorithm and type of implementation (e.g., in hardware), it can be advantageous in terms of complexity or implementation size if the code is fixed.

\section{Probability of Large Enough Ciphertext Error Weight}\label{app:probability_of_large_error_weight}

In this section, we analyze the probability that the error part $\Trmum(\alpha\vec{z}) + \vec{e}$ of the ciphertext
\begin{align*}
\vec{c} &= (\vec{m},\vec{0}_u) \cdot \mathbf G_{\mycode{G}} + \Trmum(\alpha \kpub) + \vec{e} \\
&= \underbrace{\left[(\vec{m},\vec{0}_u) + \Trmum(\alpha \vec{x})\right] \cdot \mathbf G_{\mycode{G}}}_{\text{codeword}} + \underbrace{\Trmum(\alpha\vec{z}) + \vec{e}}_{\text{error}}
\end{align*}
has large enough rank to avoid the ciphertext attacks discussed in Section~\ref{sec:sec_analysis}.
{The results of this appendix are summarized in Section~\ref{ssec:weak_keys}.}

Generically (i.e., with probability close to $1$ for random choices of $\kpub$, $\alpha$, $\e$), we have $\rank_{q}(\Trmum(\alpha\vec{z}))=w$, $\rank_{q}(\vec{e}) = \tpub$, and $\rank_{q}\big(\Trmum(\alpha\vec{z})+\e\big)=w+\tpub$.
However, there is a very small probability that the error has significantly smaller rank than the generic case.
Our aim is to design the system parameters such that this probability is sufficiently small, e.g., $2^{-\secLevel}$, to avoid attacks utilizing this behavior.

As we will see in this section, the choice of $\z$ in the public key influences this probability (fixed $\z$, randomness in $\alpha$ and $\e$) significantly.
Since $\z$ is itself drawn using a random experiment during the key generation, we study with which probability this key is ``strong'', i.e., whether the rank of $\Trmum(\alpha\vec{z}) + \vec{e}$ is large with sufficiently high probability (randomness only in $\alpha$ and $\e$).

We start with a lemma that shows that the probability mass function of the $\Fq$-rank of $\Trmum(\alpha \z)$ for uniformly drawn $\alpha$ only depends on the weight distribution of the code spanned by $\a_1,\dots,\a_\zeta$ (the $\Fqm$-linearly independent vectors over $\Fqm$ from which $\z$ is constructed).

\begin{lemma}\label{lem:Tr_alpha_z_rank_rank_distribution_A}
Let $\z$ be constructed from the randomly chosen code $\mathcal{A}[w,\zeta]_{q^m}$ as in Algorithm~\ref{alg:modKeyGen}.
Denote by $A_0,\dots,A_w$ the rank-weight distribution of $\mathcal{A}$.
For $\alpha \in \Fqmu$ chosen uniformly at random, we have
\begin{align*}
\Pr\Big(\rank_{\Fq}\big(\Trmum(\alpha \z) \big) = i \Big) = \frac{A_i}{q^{m\zeta}}.
\end{align*}
\end{lemma}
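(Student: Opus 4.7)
The plan is to reduce the problem to the distribution of $\Trmum(\alpha \s)$ on the ambient code $\mathcal{A}$, and then to a simple kernel count. First I would argue that $\rank_{\Fq}(\Trmum(\alpha \z)) = \rank_{\Fq}(\Trmum(\alpha \s))$. Since $\z = (\s \mid \vec{0})\vec{P}^{-1}$ and $\vec{P}^{-1}$ has entries in $\Fq \subseteq \Fqm$, and since $\Trmum$ is $\Fqm$-linear and acts coordinate-wise, I can pull the right multiplication past the trace to obtain $\Trmum(\alpha \z) = (\Trmum(\alpha \s) \mid \vec{0})\vec{P}^{-1}$. Zero-padding and right-multiplication by an invertible matrix over $\Fq$ both preserve the $\Fq$-rank, so the two ranks coincide.

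Second, I would expand $\alpha = \sum_{j=1}^{u} \alpha_j \gamma_j$ with $\alpha_j \in \Fqm$ and recall that $\s = \sum_{i=1}^{u} \s_i \gamma_i^*$. Using $\Fqm$-linearity of $\Trmum$ and the dual-basis relation $\Trmum(\gamma_j \gamma_i^*) = \delta_{i,j}$, the double sum collapses to
\begin{equation*}
\Trmum(\alpha \s) = \sum_{i,j} \alpha_j \s_i \,\Trmum(\gamma_j \gamma_i^*) = \sum_{i=1}^{u} \alpha_i \s_i.
\end{equation*}
This identifies the random vector $\Trmum(\alpha \s)$ with the image of $(\alpha_1,\dots,\alpha_u) \in \Fqm^u$ under the $\Fqm$-linear evaluation map $\Phi : \Fqm^u \to \Fqm^w$, $(\beta_1,\dots,\beta_u) \mapsto \sum_i \beta_i \s_i$.

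Finally I would carry out the counting. By construction $\langle \s_1,\dots,\s_u\rangle_{\Fqm} = \mathcal{A}$, so $\mathrm{Im}(\Phi) = \mathcal{A}$ has $\Fqm$-dimension $\zeta$, and hence $\ker(\Phi)$ has $\Fqm$-dimension $u-\zeta$; each $\vec{v} \in \mathcal{A}$ therefore has exactly $q^{m(u-\zeta)}$ preimages under $\Phi$. Because $\alpha \mapsto (\alpha_1,\dots,\alpha_u)$ is an $\Fqm$-linear bijection from $\Fqmu$ onto $\Fqm^u$, a uniform $\alpha$ yields $\Trmum(\alpha \s)$ uniformly distributed on $\mathcal{A}$, so each $\vec{v} \in \mathcal{A}$ occurs with probability $q^{m(u-\zeta)}/q^{mu} = q^{-m\zeta}$. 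Summing over the $A_i$ rank-$i$ vectors of $\mathcal{A}$ and combining with the first reduction gives the claim.

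I do not expect any serious obstacle; the only subtle part is the first step, where one must be careful that the trace truly commutes with right-multiplication by $\vec{P}^{-1}$ and with zero-padding, which is exactly where the hypothesis that $\vec{P}^{-1}$ lies over $\Fq$ (combined with the $\Fqm$-linearity of $\Trmum$) is used.
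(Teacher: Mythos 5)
Your proposal is correct and follows essentially the same route as the paper's proof: reduce to $\Trmum(\alpha\s)$ via the $\Fq$-structure of $\vec{P}^{-1}$, write $\Trmum(\alpha\s)=\sum_i\alpha_i\s_i$, and show this is a uniformly random codeword of $\mathcal{A}$ (your kernel count is the same rank--nullity argument the paper phrases via the full-rank matrix $\S$ mapping a uniform vector to a uniform vector of $\Fqm^{\zeta}$). If anything, your expansion of $\alpha$ in the primal basis $\gamma_j$ against $\s=\sum_i\s_i\gamma_i^*$ is slightly more careful than the paper's wording, which expands both in the dual basis.
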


\begin{proof}
We use the notation ($\z$, $\s$, $\mathcal{A}$, $\P$, and $\S$) from Algorithm~\ref{alg:modKeyGen}.
First observe that $\Trmum(\alpha \z) = \Trmum(\alpha [\s \mid \0]) \P$. Hence,
\begin{align*}
\rank_{\Fq}\!\left(\Trmum(\alpha \z) \right) = \rank_{\Fq}\!\left(\Trmum(\alpha \s) \right).
\end{align*}
We can expand $\alpha \in \Fqmu$ in the dual basis $\gamma_i^\ast$ as $\alpha = \sum_{i=1}^{u} \alpha_i \gamma_i^\ast$. Then,
\begin{align*}
\Trmum(\alpha \s) &= \sum_{i=1}^{u} \alpha_i \s_i = \begin{pmatrix}
\alpha_1 & \dots & \alpha_u
\end{pmatrix}
\cdot
\begin{pmatrix}
\s_1 \\ \vdots \\ \s_u \\
\end{pmatrix} = \begin{pmatrix}
\alpha_1 & \dots & \alpha_u
\end{pmatrix}
\S
\begin{pmatrix}
\a_1 \\ \vdots \\ \a_\zeta
\end{pmatrix},
\end{align*}
where $\a_1,\dots,\a_\zeta$ is a basis of $\mathcal{A}$ and $\S \in \Fqm^{u \times \zeta}$ is a matrix of full rank $\zeta$.
As $\alpha$ is chosen uniformly at random from $\Fqmu$, the $\alpha_i$ are chosen independently and uniformly at random from $\Fqm$. As $\rank_{q^m} \S = \zeta$, this is equivalent to saying that
\begin{align*}
\begin{pmatrix}
\beta_1 & \dots & \beta_\zeta
\end{pmatrix}
:=
\begin{pmatrix}
\alpha_1 & \dots & \alpha_u
\end{pmatrix}
\S
\end{align*}
is chosen uniformly at random from $\Fqm^\zeta$. Hence, we have
\begin{align*}
\Trmum(\alpha \s) = \begin{pmatrix}
\beta_1 & \dots & \beta_\zeta
\end{pmatrix}
\begin{pmatrix}
\a_1 \\ \vdots \\ \a_\zeta
\end{pmatrix},
\end{align*}
i.e., $\Trmum(\alpha \s)$ is a codeword of $\mathcal{A}$, chosen uniformly at random.
This immediately implies the claim.
\qed
\end{proof}

A direct consequence of the lemma above is the following statement.
\begin{corollary}\label{cor:Tr_alpha_z_rank_min_dist_A}
With notation as in Lemma~\ref{lem:Tr_alpha_z_rank_rank_distribution_A}, let $d$ be the minimum rank distance of the code $\mathcal{A}[w,\zeta]_{q^m}$. Then,
\begin{align*}
\Pr\Big(\rank_{\Fq}\big(\Trmum(\alpha \z) \big) < d \Big) = q^{-m\zeta}.
\end{align*}
\end{corollary}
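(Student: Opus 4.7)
The plan is to derive this directly from Lemma~\ref{lem:Tr_alpha_z_rank_rank_distribution_A} together with the combinatorial consequence of the minimum-distance assumption on $\mathcal{A}$. I would begin by writing
\[
\Pr\!\left(\rank_{\Fq}\!\big(\Trmum(\alpha \z)\big) < d\right) = \sum_{i=0}^{d-1}\Pr\!\left(\rank_{\Fq}\!\big(\Trmum(\alpha \z)\big) = i\right),
\]
which is valid because the events indexed by $i$ are disjoint.

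Next, I would invoke Lemma~\ref{lem:Tr_alpha_z_rank_rank_distribution_A} to replace each summand by $A_i/q^{m\zeta}$, yielding
\[
\Pr\!\left(\rank_{\Fq}\!\big(\Trmum(\alpha \z)\big) < d\right) = \frac{1}{q^{m\zeta}}\sum_{i=0}^{d-1} A_i.
\]
The rank-weight distribution $A_0,\dots,A_w$ of $\mathcal{A}$ satisfies $A_0=1$ (the zero codeword has rank $0$), while $A_i=0$ for every $1 \le i \le d-1$, by the definition of the minimum rank distance $d$. Hence $\sum_{i=0}^{d-1} A_i = 1$.

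Putting the two together gives
\[
\Pr\!\left(\rank_{\Fq}\!\big(\Trmum(\alpha \z)\big) < d\right) = \frac{1}{q^{m\zeta}} = q^{-m\zeta},
\]
which is the claim. There is no real obstacle here: the corollary is a one-line consequence of the lemma once one observes that no nonzero codeword of $\mathcal{A}$ can have $\Fq$-rank strictly less than $d$.
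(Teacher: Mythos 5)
Your proof is correct and is precisely the argument the paper intends: the corollary is stated there as a ``direct consequence'' of Lemma~\ref{lem:Tr_alpha_z_rank_rank_distribution_A} with no further detail, and you have filled in exactly that consequence by summing the lemma's probabilities over $i<d$ and observing that $A_0=1$ while $A_i=0$ for $1\le i\le d-1$ by the definition of minimum rank distance.
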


Corollary~\ref{cor:Tr_alpha_z_rank_min_dist_A} shows that we can easily bound the probability that $\Trmum(\alpha \z)$ has small $\Fq$-rank if the code $\mathcal{A}$ (as defined in Lemma~\ref{lem:Tr_alpha_z_rank_rank_distribution_A}) has a large minimum rank distance.
Loosely speaking, if the minimum rank distance of the code is small, we can consider this key to be \emph{weak}, and \emph{strong} otherwise.
Since the code is chosen uniformly at random from the set of $\Fqm$-linear $[w,\zeta]_{\Fqm}$ (cf.~choice of $\a_i$ in Algorithm~\ref{alg:modKeyGen}), we can use the following result from \cite{byrne2020partition} to bound the probability that the key is weak.

\begin{lemma}[{\!\!\cite[Corollary~5.4]{byrne2020partition}}]\label{lem:Tr_alpha_z_min_dist_A_probability}
Let $1 \leq k \leq n$ and $2\leq d \leq n-k+2$. Choose a code $\Code \in \Fqm^n$ uniformly at random from the $\Fqm$-linear codes of parameters $[n,k]_{\Fqm}$. Then,
\begin{align*}
\qbinomialField{n}{k}{q^m}^{-1}\sum_{h=1}^{d-1} &\qbinomialField{d-1}{h}{q^m} \sum_{s=h}^{d-1} \qbinomialField{d-1-h}{s-h}{q^m} \qbinomialField{n-s}{n-k}{q^m} (-1)^{s-h} q^{m\binom{s-h}{2}}\leq \\
&\Pr\Big( \dRmin(\Code) < d \Big) \leq \frac{q^{mk}-1}{(q^m-1)(q^{mn}-1)}\left( \sum_{i=0}^{d-1} \qbinomialField{n}{i}{q} \prod_{j=0}^{i-1} \left(q^m-q^j\right) -1 \right)
\end{align*}
\end{lemma}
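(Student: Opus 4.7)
The plan is to establish the two bounds separately, both resting on the elementary fact that for a uniformly random $[n,k]_{\Fqm}$-code $\Code$ and any fixed $s$-dimensional $\Fqm$-subspace $U \subseteq \Fqm^n$,
\begin{equation*}
\Pr(U \subseteq \Code) \;=\; \frac{\qbinomialField{n-s}{n-k}{q^m}}{\qbinomialField{n}{k}{q^m}},
\end{equation*}
since the number of $[n,k]_{\Fqm}$-codes through $U$ equals the number of $(k-s)$-dimensional subspaces of the quotient $\Fqm^n/U$. In particular, taking $s=1$ gives $\Pr(\vec{v}\in\Code) = (q^{mk}-1)/(q^{mn}-1)$ for every nonzero $\vec{v}\in\Fqm^n$.

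For the upper bound I would use the union bound. Write $\{\dRmin(\Code)<d\}=\bigcup_{\vec{v}\neq 0,\,\rank_q(\vec{v})<d}\{\vec{v}\in\Code\}$ and group vectors by the one-dimensional $\Fqm$-subspace they span (since $\vec{v}\in\Code$ iff $\lambda\vec{v}\in\Code$ for every $\lambda\in\Fqm\setminus\{0\}$). Using the standard count that the number of rank-$i$ vectors in $\Fqm^n$ equals $\qbinomialField{n}{i}{q}\prod_{j=0}^{i-1}(q^m-q^j)$, the number of one-dimensional $\Fqm$-subspaces of weight $<d$ is $(q^m-1)^{-1}\sum_{i=1}^{d-1}\qbinomialField{n}{i}{q}\prod_{j=0}^{i-1}(q^m-q^j)$. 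Multiplying by the per-subspace inclusion probability and rewriting $\sum_{i=1}^{d-1}(\cdots) = \sum_{i=0}^{d-1}(\cdots)-1$ reproduces exactly the stated upper bound.

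For the lower bound I would carry out a subspace-level inclusion-exclusion. For each $\Fqm$-subspace $U\subseteq\Fqm^n$ all of whose nonzero vectors have rank weight strictly less than $d$, the probability that $U\subseteq\Code$ is given by the formula above. Applying the Möbius function of the lattice of $\Fqm$-subspaces of a reference $(d-1)$-dimensional space — which is precisely $\mu(0,V) = (-1)^{\dim V}q^{m\binom{\dim V}{2}}$ — and then summing over all possible dimensions $h$ of the low-weight subspace and over the dimension $s$ of its ambient low-weight hull produces a signed sum of exactly the form $\sum_{h,s}\qbinomialField{d-1}{h}{q^m}\qbinomialField{d-1-h}{s-h}{q^m}\qbinomialField{n-s}{n-k}{q^m}(-1)^{s-h}q^{m\binom{s-h}{2}}$, with the outer $\qbinomialField{d-1}{h}{q^m}$ counting $h$-dimensional subspaces of the reference space. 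Truncating this alternating Bonferroni-type sum after the $(d-1)$-th term yields the stated lower bound.

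The main obstacle will be to keep the combinatorics of the $q$-analog Möbius inversion straight and to verify that the chosen truncation of the alternating expansion produces a genuine lower bound (as opposed to an upper bound) of the target probability. Since the lemma is a verbatim restatement of \cite[Corollary~5.4]{byrne2020partition}, in practice I would defer at this point to the detailed partition-enumerator machinery developed in that reference, which systematizes the alternating $q$-binomial identities appearing in the bound.
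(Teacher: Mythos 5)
The paper offers no proof of this lemma at all: it is imported verbatim from \cite[Corollary~5.4]{byrne2020partition}, so there is no in-paper argument to compare against. Your proposal therefore goes beyond the paper by supplying an actual derivation, and your treatment of the \emph{upper} bound is correct and complete: $\Pr(\vec{v}\in\Code)=(q^{mk}-1)/(q^{mn}-1)$ for fixed nonzero $\vec{v}$, the count $\qbinomialField{n}{i}{q}\prod_{j=0}^{i-1}(q^m-q^j)$ of rank-$i$ vectors, grouping into one-dimensional $\Fqm$-subspaces (rank weight is invariant under $\Fqm^\ast$-scaling), and the union bound reproduce the right-hand side exactly, with the $-1$ accounting for the $i=0$ term.

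Your description of the \emph{lower} bound, however, misidentifies the mechanism. The double sum is not a truncated Bonferroni expansion whose sign needs to be verified; it is an \emph{exact} count. Fix a single $(d-1)$-dimensional $\Fqm$-subspace $V\subseteq\Fqm^n$ all of whose nonzero vectors have $\Fq$-rank at most $d-1$ (e.g.\ the vectors supported on the first $d-1$ coordinates). For a fixed $h$-dimensional $U\subseteq V$, M\"obius inversion over the interval $[U,V]$ in the subspace lattice, with $\mu(U,W)=(-1)^{j}q^{m\binom{j}{2}}$ for $j=\dim W-\dim U$, gives the exact number of $[n,k]_{q^m}$ codes $\Code$ with $\Code\cap V=U$; this is precisely the inner sum over $s=h+j$ once one writes $\qbinomialField{n-s}{k-s}{q^m}=\qbinomialField{n-s}{n-k}{q^m}$. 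Multiplying by $\qbinomialField{d-1}{h}{q^m}$ and summing over $h\geq 1$ counts the codes with $\Code\cap V\neq\{0\}$, and the inequality $\Pr(\Code\cap V\neq\{0\})\leq\Pr(\dRmin(\Code)<d)$ is simply the containment of events, not a truncation argument. So the ``main obstacle'' you anticipate does not arise; with this reading the lower bound is also self-contained and the appeal to the partition-enumerator machinery of the reference becomes unnecessary.
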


Since the code in Lemma~\ref{lem:Tr_alpha_z_min_dist_A_probability} is chosen uniformly at random, it does not exactly match the distribution of the code $\mathcal{A}$ in Algorithm~\ref{alg:modKeyGen}.
Hence, we need the following lemma and theorem to estimate the probability of a small minimum distance in our case.

{
\begin{lemma}\label{lem:MRD_basis_of_full_rank_codewords}
An $\Fqm$-linear MRD code $[n,k]_{q^m}$ has a basis consisting of codewords of $\Fq$-rank $n$.
\end{lemma}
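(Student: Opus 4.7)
The plan is to establish the quantitative statement that the number $A_n$ of codewords in $\mathcal{C}$ of $\mathbb{F}_q$-rank exactly $n$ satisfies $A_n \geq q^{m(k-1)}$, and then conclude by a greedy construction. Indeed, if this count holds, one can build a basis $\vec{b}_1,\dots,\vec{b}_k$ of full-rank codewords inductively: at step $\ell$, the $\mathbb{F}_{q^m}$-span of the $\ell-1$ previously chosen codewords has cardinality $q^{m(\ell-1)} \leq q^{m(k-1)}$, so it cannot exhaust the at least $q^{m(k-1)}$ full-rank (hence nonzero) codewords, and we may pick a new one outside the span. After $k$ steps we obtain the desired basis.

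To prove the count, I would use a union bound indexed by the $\mathbb{F}_q$-hyperplanes of $\mathbb{F}_{q^m}^n$ cut out by nonzero vectors $\vec{a} \in \mathbb{F}_q^n$. For such an $\vec{a}$, set $S_{\vec{a}} := \{\vec{c}\in\mathcal{C}:\vec{c}\vec{a}^{T}=0\}$, an $\mathbb{F}_{q^m}$-subspace of $\mathcal{C}$. A codeword $\vec{c}$ has $\mathbb{F}_q$-rank less than $n$ iff its components are $\mathbb{F}_q$-linearly dependent, i.e., iff $\vec{c}\in S_{\vec{a}}$ for some nonzero $\vec{a}\in\mathbb{F}_q^n$. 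The key step is to show $\dim_{\mathbb{F}_{q^m}} S_{\vec{a}} = k-1$ for every such $\vec{a}$. This follows from the classical fact that the dual $\mathcal{C}^{\perp}$ of an MRD code is itself MRD, with minimum rank distance $k+1$; since any nonzero $\vec{a}\in\mathbb{F}_q^n$ has $\mathbb{F}_q$-rank exactly $1<k+1$ when viewed in $\mathbb{F}_{q^m}^n$, it cannot lie in $\mathcal{C}^{\perp}$, so $\vec{a}^{T}$ does not annihilate all of $\mathcal{C}$ and $S_{\vec{a}}$ has codimension exactly one.

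Combining these ingredients, a union bound over the $q^n-1$ nonzero choices of $\vec{a}$ yields
\begin{equation*}
|\{\vec{c}\in\mathcal{C}\setminus\{\vec{0}\}:\mathrm{rank}_{\mathbb{F}_q}(\vec{c})<n\}| \;\leq\; (q^n-1)\bigl(q^{m(k-1)}-1\bigr),
\end{equation*}
and hence $A_n \geq q^{mk}-1-(q^n-1)(q^{m(k-1)}-1)$. A short rearrangement rewrites the right-hand side as $q^{m(k-1)} + q^{m(k-1)}(q^m-q^n) + (q^n-2)$. The first summand gives the required lower bound, and the two remaining terms are nonnegative under the standing hypothesis $n\leq m$ (which is implicit, since $\mathbb{F}_{q^m}$-linear MRD codes only exist in that range) together with $q^n\geq 2$. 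This completes the counting step and hence the proof.

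The main obstacle is arguably the dimension claim $\dim S_{\vec{a}} = k-1$: the union bound and the greedy extraction of a basis are routine, but pinning down that no nonzero $\mathbb{F}_q$-vector lies in $\mathcal{C}^{\perp}$ requires the nontrivial input that duals of MRD codes are MRD. Everything else reduces to a compact algebraic estimate. \qed
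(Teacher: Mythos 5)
Your proposal is correct, and it shares the paper's overall skeleton—lower-bound the number $A_n$ of full-$\Fq$-rank codewords by $q^{m(k-1)}$, then observe that a set of that many nonzero codewords cannot fit inside a $(k-1)$-dimensional subspace and hence spans (and contains a basis of) the code—but the counting step is carried out by a genuinely different method. The paper invokes Gabidulin's exact rank-weight distribution of MRD codes and lower-bounds the resulting alternating sum by its two leading terms, which requires a somewhat delicate estimate. You instead cover the low-rank codewords by the subcodes $S_{\vec{a}}=\{\vec{c}\in\mathcal{C}:\vec{c}\vec{a}^{T}=0\}$ for nonzero $\vec{a}\in\Fq^n$, pin down $\dim_{\Fqm}S_{\vec{a}}=k-1$ via the classical fact that the dual of an MRD code is MRD with minimum rank distance $k+1$ (so no rank-one vector, in particular no nonzero $\vec{a}\in\Fq^n$, lies in $\mathcal{C}^{\perp}$), and finish with a union bound; the rearrangement $q^{mk}-1-(q^n-1)\bigl(q^{m(k-1)}-1\bigr)=q^{m(k-1)}+q^{m(k-1)}(q^m-q^n)+(q^n-2)$ checks out and uses only $n\leq m$ and $q^n\geq 2$. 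The trade-off is which external input one is willing to import: the paper needs the full weight enumerator formula, while you need MRD duality; your estimates are more elementary and transparent, whereas the paper's route would also yield the sharper exact count if one pushed the computation further. Your greedy basis extraction is also sound, since at each step the span of the codewords chosen so far contains at most $q^{m(k-1)}-1$ nonzero (hence at most that many full-rank) vectors, strictly fewer than $A_n$.
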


\begin{proof}
We show that the number of full-rank codewords is at least $q^{m(k-1)}$.
Since these codewords are all non-zero, their $\Fqm$-span must have cardinality at least $q^{mk}$ and is hence the entire code.

The weight distribution of an MRD code of length $n$ and minimum distance $d$ can be given by (see~\cite{Gabidulin_TheoryOfCodes_1985}):
\begin{equation}\label{eq:weight-dist}
A_{d+s}=\quadbinom{n}{d+s}\sum\limits_{j=0}^{s} (-1)^{j+s} \quadbinom{d+s}{d+j}q^{(s-j)(s-j-1)/2}(q^{m(j+1)}-1), \ s=0,1,\dots,n-d,
\end{equation}
where $m$ is the order of the extension field, $n \leq m$, and $A_{d+s}$ denotes the number of rank-$(d+s)$ codewords.

We are interested in a lower bound for the number of full-rank codewords, i.e., $s = n-d$. 
The sum in~\eqref{eq:weight-dist} is an alternating sum whose terms get larger, the larger $j$ and therefore can be lower bounded by the case of $j=s$ plus the case of $j=s-1$. That means:
 \begin{align}
A_{d+s} &\geq \quadbinom{n}{d+s} \left( (q^{m(s+1)}-1) - \quadbinom{d+s}{d+s-1}(q^{ms}-1)\right).\nonumber
\end{align}
Hence, for $s=n-d$, we obtain:
 \begin{align}
A_{n} &\geq \quadbinom{n}{n} \left( (q^{m(n-d+1)}-1) - \quadbinom{n}{n-1}(q^{m(n-d)}-1)\right)\nonumber\\
&= q^{mk}-1 - \quadbinom{n}{n-1}(q^{m(k-1)}-1)\nonumber\\
& = q^{mk}-1 - \frac{(q^n-1) q^{m(k-1)}}{q-1}+\frac{q^n-1}{q-1}\nonumber\\
& \geq q^{mk}- \frac{(q^n-1) q^{m(k-1)}}{q-1}\nonumber\\
&\geq q^{m(k-1)} \underbrace{\left(q^m - \frac{q^n-1}{q-1}\right)}_{\geq 1 \text{ since } m \geq n \text{ and } q \geq 2}\nonumber\\
&\geq q^{m(k-1)}.
\end{align}
\end{proof}
}

{
\begin{theorem}
Let $m$, $\zeta$, and $w$ be chosen such that
\begin{equation}
1-\zeta q^{\zeta w-m} \geq \tfrac{1}{2}. \label{eq:MRD_probability_condition}
\end{equation}
Let $\mathcal{A}$ be chosen as in Algorithm~\ref{alg:modKeyGen}, i.e., uniformly at random from the set of linear $[w,\zeta]_{q^m}$ codes that have a basis consisting only of codewords with $\Fq$-rank $w$. Furthermore, let $2 \leq t \leq w-\zeta+2$. Then,
\begin{align*}
\Pr\Big( \dRmin(\mathcal{A}) < t \Big) \leq 2 \frac{q^{m\zeta}-1}{(q^m-1)(q^{mw}-1)}\left( \sum_{i=0}^{t-1} \qbinomialField{w}{i}{q} \prod_{j=0}^{i-1} \left(q^m-q^j\right) -1 \right).
\end{align*}
\end{theorem}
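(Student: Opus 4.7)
The plan is to realize $\mathcal{A}$ as the result of rejection sampling from the set of all $[w,\zeta]_{q^m}$ codes. Let $\mathcal{C}$ be a uniformly random $[w,\zeta]_{q^m}$ code in the sense of Lemma~\ref{lem:Tr_alpha_z_min_dist_A_probability}, and let $\mathcal{F}$ denote the event ``$\mathcal{C}$ admits a basis of full-$\Fq$-rank codewords.'' Since Algorithm~\ref{alg:modKeyGen} draws $\mathcal{A}$ uniformly from exactly the codes belonging to $\mathcal{F}$, the distribution of $\mathcal{A}$ equals that of $\mathcal{C}$ conditioned on $\mathcal{F}$. In particular,
\begin{equation*}
\Pr\!\left(\dRmin(\mathcal{A}) < t\right) \;=\; \Pr\!\left(\dRmin(\mathcal{C}) < t \,\big|\, \mathcal{F}\right) \;\leq\; \frac{\Pr(\dRmin(\mathcal{C}) < t)}{\Pr(\mathcal{F})}.
\end{equation*}
Consequently, if I can show that $\Pr(\mathcal{F}) \geq \tfrac{1}{2}$, the desired inequality follows at once by plugging in the upper bound of Lemma~\ref{lem:Tr_alpha_z_min_dist_A_probability} with $(n,k,d)=(w,\zeta,t)$ in the numerator; the factor $2$ in the conclusion comes precisely from the reciprocal $1/\Pr(\mathcal{F})$.

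To lower-bound $\Pr(\mathcal{F})$, the crucial ingredient is Lemma~\ref{lem:MRD_basis_of_full_rank_codewords}, which guarantees that every MRD code over $\Fqm$ admits a basis of full-$\Fq$-rank codewords. Thus $\mathcal{F} \supseteq \{\mathcal{C}\text{ is MRD}\}$, and therefore
\begin{equation*}
\Pr(\mathcal{F}) \;\geq\; \Pr(\mathcal{C}\text{ is MRD}) \;=\; 1 - \Pr\!\left(\dRmin(\mathcal{C}) < w-\zeta+1\right).
\end{equation*}
It remains to show that $\Pr(\dRmin(\mathcal{C}) < w-\zeta+1) \leq \zeta\, q^{\zeta w-m}$, because then the hypothesis $1-\zeta q^{\zeta w-m}\geq \tfrac{1}{2}$ immediately yields $\Pr(\mathcal{F})\geq \tfrac{1}{2}$.

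For this last estimate, I would apply the upper bound of Lemma~\ref{lem:Tr_alpha_z_min_dist_A_probability} again, now with $d=w-\zeta+1$. Using the standard bounds $\qbinomialField{w}{i}{q}\leq q^{i(w-i)}\prod_{j\geq 1}(1-q^{-j})^{-1}$ and $\prod_{j=0}^{i-1}(q^m-q^j)\leq q^{mi}$, each summand is at most (a constant multiple of) $q^{i(w-i)+mi}=q^{i(w-i+m)}$. The ratio of consecutive summands is $\approx q^{m+w-2i-1}\geq q$ in the parameter regime considered, so the sum is dominated by its last term at $i=w-\zeta$, giving a total of order $q^{(w-\zeta)(\zeta+m)}$. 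Combining this with the prefactor $\tfrac{q^{m\zeta}-1}{(q^m-1)(q^{mw}-1)}$, whose magnitude is $\Theta(q^{-m(w-\zeta+1)})$, the exponents of $q$ telescope to $\zeta(w-\zeta)-m \leq \zeta w-m$, leaving only a small polynomial-in-$\zeta$ constant in front.

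The main obstacle is bookkeeping of the multiplicative constants so that the prefactor in front of $q^{\zeta w-m}$ is indeed at most $\zeta$ (rather than $w-\zeta+1$ or a $q$-dependent constant). This requires combining the geometric-growth argument above with a sharp form of the Gaussian-binomial bound, and possibly invoking the stated condition $\zeta q^{\zeta w-m}\leq \tfrac{1}{2}$ at an intermediate step to absorb lower-order terms. Everything else in the proof is bookkeeping: once $\Pr(\mathcal{F})\geq \tfrac{1}{2}$ is in hand, the factor-of-$2$ inequality is an immediate application of conditional probability together with Lemma~\ref{lem:Tr_alpha_z_min_dist_A_probability}.
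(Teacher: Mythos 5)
Your overall architecture is exactly the paper's: realize $\mathcal{A}$ as a uniform $[w,\zeta]_{q^m}$ code conditioned on the event $\mathcal{F}$ of having a full-$\Fq$-rank basis, bound $\Pr(\dRmin(\mathcal{A})<t)\leq \Pr(\dRmin(\mathcal{C})<t)/\Pr(\mathcal{F})$, apply Lemma~\ref{lem:Tr_alpha_z_min_dist_A_probability} to the numerator, and obtain the factor $2$ from $\Pr(\mathcal{F})\geq\Pr(\mathcal{C}\text{ is MRD})\geq 1-\zeta q^{\zeta w-m}\geq\tfrac{1}{2}$ via Lemma~\ref{lem:MRD_basis_of_full_rank_codewords}. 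The gap is in the very last link: the paper does \emph{not} derive $\Pr(\mathcal{C}\text{ is MRD})\geq 1-\zeta q^{\zeta w-m}$ from Lemma~\ref{lem:Tr_alpha_z_min_dist_A_probability}; it cites this bound directly from \cite[Theorem~21]{neri2018MRD}, where it is obtained by a separate counting argument. Your attempt to re-derive it by evaluating the upper bound of Lemma~\ref{lem:Tr_alpha_z_min_dist_A_probability} at $d=w-\zeta+1$ is left as ``bookkeeping,'' and that bookkeeping does not in fact close in general.

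Concretely, for $\zeta=1$ and $q=2$ the target is $\zeta q^{\zeta w-m}=q^{w-m}$, the prefactor is $\tfrac{1}{q^{mw}-1}$, and already the single summand $i=w-1$ contributes
\begin{align*}
\frac{1}{q^{mw}-1}\,\qbinomialField{w}{w-1}{q}\prod_{j=0}^{w-2}\left(q^m-q^j\right)\;\approx\;\frac{q^w-1}{q-1}\,q^{-m}\;=\;\left(1-q^{-w}\right)q^{w-m},
\end{align*}
so adding the remaining positive summands (the $i=w-2$ term alone contributes roughly a further $\tfrac{1}{12}$ of this amount for $q=2$) pushes the total above $q^{w-m}$. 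Hence the constant in front of $q^{\zeta w-m}$ cannot be brought down to $\zeta$ by this route when $\zeta=1$; the slack $q^{-\zeta^2}$ you identify only rescues the estimate for $\zeta\geq 2$, and even there the constants are not actually tracked. The fix is simply to replace this step by the direct bound $\Pr(\mathcal{C}\text{ is MRD})\geq 1-\zeta q^{\zeta w-m}$ of \cite[Theorem~21]{neri2018MRD}, as the paper does; with that substitution the remainder of your argument is correct and coincides with the paper's proof.
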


\begin{proof}
We define an alternative random experiment, where a code $\mathcal{A}'$ is chosen uniformly from \emph{all} linear $[w,\zeta]_{q^m}$. The sought probability is then given by the conditional probability
\begin{align*}
\Pr\Big( \dRmin(\mathcal{A}') < t \mid \mathcal{S}\Big),
\end{align*}
where $\mathcal{S}$ is the event that $\mathcal{A}'$ has a basis of maximal-rank codewords.
We derive the result using the relation
\begin{align}
\Pr\Big( \dRmin(\mathcal{A}') < t \Big) \geq \Pr\Big( \dRmin(\mathcal{A}') < t \mid \mathcal{S}\Big)\Pr\Big(\mathcal{S}\Big). \label{eq:probability_of_d_min_small_conditional}
\end{align}
First note that Lemma~\ref{lem:Tr_alpha_z_min_dist_A_probability} gives us
\begin{align*}
\Pr\Big( \dRmin(\mathcal{A}') < t \Big) \leq \frac{q^{m\zeta}-1}{(q^m-1)(q^{mw}-1)}\left( \sum_{i=0}^{t-1} \qbinomialField{w}{i}{q} \prod_{j=0}^{i-1} \left(q^m-q^j\right) -1 \right).
\end{align*}
By Lemma~\ref{lem:MRD_basis_of_full_rank_codewords}, we have
\begin{align*}
\Pr\big(\mathcal{S}\big) \geq \Pr\big( \mathcal{A}' \text{ is MRD} \big).
\end{align*}
Using \cite[Theorem~21]{neri2018MRD}, we can lower-bound this probability by
\begin{align*}
\Pr\big( \mathcal{A}' \text{ is MRD} \big) \geq 1-\zeta q^{\zeta w-m} \geq \tfrac{1}{2},
\end{align*}
where the last inequality follows from \eqref{eq:MRD_probability_condition}.
The claim follows by combining the two bounds with \eqref{eq:probability_of_d_min_small_conditional}. 
\qed
\end{proof}
}

The last building block for a general bound on the probability of $\Trmum(\alpha\vec{z}) + \vec{e}$ having small rank is the following lemma, which gives a bound for this probability conditioned on the event that $\Trmum(\alpha\vec{z})$ has a given (large) rank.

\begin{lemma}\label{lem:key_error_weight_condition_probability}
Let $\kpub = \vec{x} \cdot \mathbf G_{\mycode{G}} + \vec{z}$ be fixed as in Algorithm~\ref{alg:Encrypt} and let $\alpha$ be chosen such that $\rank_{\Fq}\!\left(\Trmum(\alpha\vec{z})\right)=\trrank$. For $\vec{e} \sample \{ \a \in \Fqm^n : \rank_q(\a) = \tpub \}$, drawn uniformly at random, we have
\begin{align*}
\Pr\!\left(\rank_{\Fq}\Big( \Trmum(\alpha\vec{z})+\e\Big)<w \;\Big\vert \rank_{\Fq}\Big( \Trmum(\alpha\vec{z})\Big)=\trrank, \, \rank_{\Fq}(\e)=\tpub \right) \\
\leq 64 \min\{\trrank,\tpub\}^2 q^{-(\trrank+\tpub-w+1)\left(n+\frac{\trrank-3w-\tpub}{2}\right)} \\
\end{align*}
\end{lemma}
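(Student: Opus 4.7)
The plan is to reduce the probability to a counting problem in $\Fq^{m\times n}$ and exploit independence of the row and column spaces of a uniformly random rank-$\tpub$ matrix. Let $\vec{V}\in\Fq^{m\times n}$ be the fixed matrix representation of $\vec{v}:=\Trmum(\alpha\z)$ (of $\Fq$-rank $\trrank$) and $\vec{E}$ the representation of $\e$. The sought conditional probability equals
\[
\frac{\lvert\{\vec{E}\in\Fq^{m\times n}\,:\,\rank_q\vec{E}=\tpub,\; \rank_q(\vec{V}+\vec{E})<w\}\rvert}{\lvert\{\vec{E}\in\Fq^{m\times n}\,:\,\rank_q\vec{E}=\tpub\}\rvert}.
\]

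First I would isolate a necessary ``overlap'' condition on $\vec{E}$. Using full-rank factorisations $\vec{V}=\vec{A}_{\vec{V}}\vec{B}_{\vec{V}}$ and $\vec{E}=\vec{A}_{\vec{E}}\vec{B}_{\vec{E}}$, write
\[
\vec{V}+\vec{E}=\begin{pmatrix}\vec{A}_{\vec{V}}&\vec{A}_{\vec{E}}\end{pmatrix}\begin{pmatrix}\vec{B}_{\vec{V}}\\\vec{B}_{\vec{E}}\end{pmatrix},
\]
whose outer factors have ranks $\trrank+\tpub-\kappa_C$ and $\trrank+\tpub-\kappa_R$, where $\kappa_C,\kappa_R$ denote the dimensions of the intersections of the column and row spaces of $\vec{V}$ and $\vec{E}$. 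Sylvester's rank inequality (applied with inner dimension $\trrank+\tpub$) gives $\rank_q(\vec{V}+\vec{E})\geq\trrank+\tpub-\kappa_C-\kappa_R$, so the rank-drop event forces
\[
\kappa_C+\kappa_R\;\geq\;\delta\;:=\;\trrank+\tpub-w+1.
\]

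Next, I would use the standard fact that for a uniform rank-$\tpub$ matrix $\vec{E}$, the column and row spaces are independent and each uniform over the corresponding Grassmannian (for any fixed pair of spaces there are exactly $\lvert GL_\tpub(\Fq)\rvert$ realising matrices). Combined with the classical identity for the number of $\tpub$-dimensional subspaces intersecting a fixed $\trrank$-dimensional subspace in exactly $j$ dimensions, this yields
\[
\Pr(\kappa_C=j)=\frac{\qbinomialField{\trrank}{j}{q}\qbinomialField{m-\trrank}{\tpub-j}{q}\,q^{(\trrank-j)(\tpub-j)}}{\qbinomialField{m}{\tpub}{q}},
\]
and analogously for $\Pr(\kappa_R=k)$ with $m$ replaced by $n\leq m$ (the replacement only weakens the bound). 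The $q$-binomial estimate $\qbinomialField{a}{b}{q}\leq 4\,q^{b(a-b)}$, valid for $q\geq 2$, then collapses each probability to a clean negative power of $q$, with all multiplicative constants absorbed into the factor $64=4^3$.

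Finally, a union bound over the at most $(\min\{\trrank,\tpub\}+1)^2\leq 4\min\{\trrank,\tpub\}^2$ pairs $(j,k)$ with $j+k\geq\delta$ and $0\leq j,k\leq\min\{\trrank,\tpub\}$, together with the independence above, reduces the problem to locating the pair maximising the joint probability. The main obstacle is the combinatorial optimisation that follows: one has to verify that the optimum over $j+k\geq\delta$ (with the stated upper bounds) yields precisely the exponent $\delta\bigl(n+\tfrac{\trrank-3w-\tpub}{2}\bigr)$ after substituting $\delta=\trrank+\tpub-w+1$, where the constraint $\trrank\leq w$ (which follows from $\mathrm{rowsp}_{\Fq}(\vec{V})\subseteq\mathrm{rowsp}_{\Fq}((\P^{-1})_{[1,w]})$) is used to ensure the exponent is negative and the bound meaningful. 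Collecting the constants $4^3=64$ from the $q$-binomial bounds and the $4\min\{\trrank,\tpub\}^2$ from the union bound delivers the claimed inequality.
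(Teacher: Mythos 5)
Your proposal follows essentially the same route as the paper's proof: reduce the rank-drop event to the condition $\kappa_C+\kappa_R\geq \trrank+\tpub-w+1$ on the intersections of the column and row spaces of the two error matrices, exploit the independence and uniformity of the column and row spaces of a uniform rank-$\tpub$ matrix, insert the subspace-intersection distribution, bound the $q$-binomials, and union-bound over the admissible pairs $(j,k)$. The only real difference is that you obtain the lower bound $\rank_q(\vec{V}+\vec{E})\geq \trrank+\tpub-\kappa_C-\kappa_R$ from Sylvester's inequality applied to the concatenated full-rank factorisation, where the paper instead cites Marsaglia's rank theorem; your derivation is self-contained and equivalent. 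Two loose ends are worth closing. First, the ``combinatorial optimisation'' you defer does go through and is routine: after replacing $m$ by $n$ each summand is at most $q^{-(j+k)(n-w-\tpub)-(j^2+k^2)}$, which under $j+k\geq\delta:=\trrank+\tpub-w+1$ and $n>w+\tpub$ (guaranteed by $w+2\tpub\leq n-k$) is maximised at $j+k=\delta$, and $j^2+k^2\geq\delta^2/2$ then gives the exponent $-\delta\bigl(n-w-\tpub+\delta/2\bigr)$, which is at most the claimed $-\delta\bigl(n+\tfrac{\trrank-3w-\tpub}{2}\bigr)$. Second, your constant bookkeeping does not literally add up: each of the two intersection probabilities carries two $q$-binomials in its numerator, so the product contributes $4^4=256$ rather than your $4^3=64$, and multiplying by the number of pairs would overshoot $64\min\{\trrank,\tpub\}^2$; the paper's own proof is equally loose here (it bounds each probability by $16\,q^{\cdots}$ and then writes $64$ for the product of two such factors), so this is a cosmetic defect shared with the original rather than a flaw in your method.
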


\begin{proof}
For simplicity, we write (for some basis ${\boldsymbol{\gamma}}$ of $\Fqm$ over $\Fq$)
\begin{align*}
\e_1 &:= \Trmum(\alpha\vec{z}), &&\E_1 := \extsmallfield_{\boldsymbol{\gamma}}(\e_1), &&\mathcal{E}_1^\mathrm{C} := \mathrm{ColumnSpace}(\E_1) \subseteq \Fq^m \\
& && &&\mathcal{E}_1^\mathrm{R} := \mathrm{RowSpace}(\E_1) \subseteq \Fq^n \\
\e_2 &:= \Trmum(\alpha\vec{z}), &&\E_2 := \extsmallfield_{\boldsymbol{\gamma}}(\e_2), &&\mathcal{E}_2^\mathrm{C} := \mathrm{ColumnSpace}(\E_2) \subseteq \Fq^m \\
& && &&\mathcal{E}_2^\mathrm{R} := \mathrm{RowSpace}(\E_2) \subseteq \Fq^n.
\end{align*}
It is clear that $\rank_{\Fq}(\e_1+\e_2) = \rank_{\Fq}\!\left(\E_1 + \E_2\right)$ and, since $\rank_{\Fq}(\e_1)= \rank_{\Fq}(\E_1) = \trrank$ and $\rank_{\Fq}(\e_1)= \rank_{\Fq}(\E_1) = \tpub$,
\begin{align*}
\dim_{\Fq}\!\left(\mathcal{E}_1^\mathrm{C}\right) &= \dim_{\Fq}\!\left(\mathcal{E}_1^\mathrm{R}\right) = \trrank \\
\dim_{\Fq}\!\left(\mathcal{E}_2^\mathrm{C}\right) &= \dim_{\Fq}\!\left(\mathcal{E}_2^\mathrm{R}\right) = \tpub.
\end{align*}
Note that in our probabilistic model, $\mathcal{E}_1^\mathrm{C}$ and $\mathcal{E}_1^\mathrm{R}$ are fixed and it follows easily that $\mathcal{E}_2^\mathrm{C}$ and $\mathcal{E}_2^\mathrm{R}$ are random variables that are uniformly distributed on the set of $\tpub$-dimensional subspaces of $\Fq^m$ and $\Fq^n$, respectively, and stochastically independent.
Due to \cite[Theorem~1]{marsaglia1967bounds}, for
\begin{equation*}
\dim\!\left(\mathcal{E}_1^\mathrm{C} \cap \mathcal{E}_2^\mathrm{C}\right) = i \quad \text{and} \quad \dim\!\left(\mathcal{E}_1^\mathrm{R} \cap \mathcal{E}_2^\mathrm{R}\right) = j,
\end{equation*}
we have
\begin{align*}
\rank_{\Fq}\!\left(\E_1\right) + \rank_{\Fq}\!\left(\E_2\right) - i - j \leq \rank_{\Fq}\!\left(\E_1 + \E_2\right) \leq \rank_{\Fq}\!\left(\E_1\right) + \rank_{\Fq}\!\left(\E_2\right) - \max\{i,j\}.
\end{align*}
Since $\rank_{\Fq}\!\left(\E_1\right) + \rank_{\Fq}\!\left(\E_2\right) = \trrank+\tpub$, this implies
\begin{align*}
&\Pr\!\left(\rank_{\Fq}\Big( \E_1 + \E_2 \Big)<w \; \Big\vert \rank_{\Fq}\Big(\E_1\Big)=\trrank, \, \rank_{\Fq}\!\left(\E_2\right)=\tpub \right) \\
&\leq \Pr\Big(\dim\!\left(\mathcal{E}_1^\mathrm{C} \cap \mathcal{E}_2^\mathrm{C}\right)+\dim\!\left(\mathcal{E}_1^\mathrm{R} \cap \mathcal{E}_2^\mathrm{R}\right) >\trrank+\tpub-w \Big) \\
&=\sum_{\substack{i,j=0 \\ i+j>\trrank+\tpub-w}}^{\min\{\trrank,\tpub\}} \Pr\Big(\dim\!\left(\mathcal{E}_1^\mathrm{C} \cap \mathcal{E}_2^\mathrm{C}\right) = i \Big) \Pr\Big(\dim\!\left(\mathcal{E}_1^\mathrm{R} \cap \mathcal{E}_2^\mathrm{R}\right) = j \Big).
\end{align*}
Due to \cite[Proof of Lemma~7]{etzion2011error}, we have
\begin{align*}
\Pr\!\left(\dim\!\left(\mathcal{E}_1^\mathrm{C} \cap \mathcal{E}_2^\mathrm{C}\right) = i \right) &= \frac{\qbinomial{m-w}{\tpub-i}\qbinomial{w}{i} q^{(w-i)(\tpub-i)}}{\qbinomial{m}{\tpub}} \\
&\leq 16 \frac{q^{(\tpub-i)(m-w-\tpub+1) + i(w-i)+(w-i)(\tpub-i)}}{q^{\tpub(m-\tpub)}} \\
&= 16 q^{-i(m-w-\tpub+i)}.
\end{align*}
Likewise, we have
\begin{align*}
\Pr\!\left(\dim\!\left(\mathcal{E}_1^\mathrm{R} \cap \mathcal{E}_2^\mathrm{R}\right)= i \right) \leq 16 q^{-i(n-w-\tpub+i)}.
\end{align*}
Due to $n \leq n$, we obtain
\begin{align*}
&\Pr\!\left(\rank_{\Fq}\Big( \E_1 + \E_2 \Big)<w \; \Big\vert \rank_{\Fq}\Big(\E_1\Big)=\trrank, \, \rank_{\Fq}\!\left(\E_2\right)=\tpub \right) \\
&\leq 64 \sum_{\substack{i,j=0 \\ i+j>\trrank+\tpub-w}}^{\min\{\trrank,\tpub\}} q^{-i(n-w-\tpub+i)} q^{-j(m-w-\tpub+j)} \\
&\leq 64 \sum_{\substack{i,j=0 \\ i+j>\trrank+\tpub-w}}^{\min\{\trrank,\tpub\}} q^{-i(n-w-\tpub+i)} q^{-j(n-w-\tpub+j)} \\
&\leq 64 \sum_{\substack{i,j=0 \\ i+j>\trrank+\tpub-w}}^{\min\{\trrank,\tpub\}} q^{-(i+j)(n-w-\tpub)-(i^2+j^2)} \\
&\leq 64 \min\{\trrank,\tpub\}^2 q^{-(\trrank+\tpub-w+1)(n-w-\tpub)-\frac{(\trrank+\tpub-w+1)^2}{2}} \\
&\leq 64 \min\{\trrank,\tpub\}^2 q^{-(\trrank+\tpub-w+1)\left(n+\frac{\trrank-3w-\tpub}{2}\right)}.
\end{align*}
This proves the claim.
\qed
\end{proof}

Summarized, we have the following.
The proof follows directly by combining Corollary~\ref{cor:Tr_alpha_z_rank_min_dist_A}, Lemma~\ref{lem:Tr_alpha_z_min_dist_A_probability}, and Lemma~\ref{lem:key_error_weight_condition_probability}, and a union-bound argument.

\begin{theorem}\label{thm:strong_key_ciphertext_error_weight_probability}
{
Let $m$, $\zeta$, and $w$ be chosen such that $1-\zeta q^{\zeta w-m} \geq \tfrac{1}{2}$.}
Choose $\z$ of the public key as in Algorithm~\ref{alg:modKeyGen}.
Let $2 \leq \trrank \leq w-\zeta+2$.
With probability at least
\begin{align*}
P_\mathrm{strong,key}(\trrank) \geq 1-{2}\frac{q^{m\zeta}-1}{(q^m-1)(q^{mw}-1)}\left( \sum_{i=0}^{\trrank-1} \qbinomialField{w}{i}{q} \prod_{j=0}^{i-1} \left(q^m-q^j\right) -1 \right)
\end{align*}
the public key has the following property:

Choose $\alpha \in \Fqmu$ and For $\vec{e} \sample \{ \a \in \Fqm^n : \rank_q(\a) = \tpub \}$, both uniformly at random. Then the probability that $\Trmum(\alpha \z) + \e$ has $\Fq$-rank at least $w$ is lower-bounded by
\begin{align*}
\Pr\Big(\rank_{\Fq}\big(\Trmum(\alpha \z) &+ \e \big) \geq w \Big) \\
&\geq 1-q^{-m\zeta}-64 \min\{\trrank,\tpub\}^2 q^{-(\trrank+\tpub-w+1)\left(n+\frac{\trrank-3w-\tpub}{2}\right)}.
\end{align*}
\end{theorem}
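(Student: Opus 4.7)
The plan is to separate the randomness coming from key generation (the choice of the subspace $\mathcal{A}$ from which $\z$ is constructed) from the randomness coming from encryption (the choices of $\alpha$ and $\e$). I will call a key \emph{strong} if the code $\mathcal{A}$ used to build $\z$ satisfies $\dRmin(\mathcal{A}) \geq \trrank$, and then separately bound (i) the probability (over key generation) that a random key is strong, and (ii) conditional on a strong key, the probability (over $\alpha,\e$) that $\rank_{\Fq}\!\left(\Trmum(\alpha\z)+\e\right) \geq w$.

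For (i), under the sampling distribution used in Algorithm~\ref{alg:modKeyGen}, the event that the key is \emph{not} strong is exactly the event that the randomly drawn $[w,\zeta]_{q^m}$-code $\mathcal{A}$ (sampled from those admitting a basis of full-$\Fq$-rank codewords) has $\dRmin(\mathcal{A}) < \trrank$. This is precisely the probability bounded by the theorem immediately preceding Lemma~\ref{lem:key_error_weight_condition_probability}, and the hypothesis $\zeta q^{\zeta w - m} \leq \tfrac{1}{2}$ is exactly the condition needed to invoke that theorem. Taking complements yields the stated lower bound on $P_\mathrm{strong,key}(\trrank)$.

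For (ii), I would decompose the event $\{\rank_{\Fq}(\Trmum(\alpha\z)+\e) < w\}$ according to the value of $\rank_{\Fq}(\Trmum(\alpha\z))$. By Lemma~\ref{lem:Tr_alpha_z_rank_rank_distribution_A}, $\Trmum(\alpha\z)$ is a uniformly random codeword of $\mathcal{A}$; so either it equals $\0$ (probability exactly $q^{-m\zeta}$), or its $\Fq$-rank is at least $\dRmin(\mathcal{A}) \geq \trrank$ by the strong-key assumption. In the first case we just pay $q^{-m\zeta}$. In the second case, we condition further on $\rank_{\Fq}(\Trmum(\alpha\z)) = \trrank'$ for some $\trrank' \geq \trrank$ and apply Lemma~\ref{lem:key_error_weight_condition_probability} with parameter $\trrank'$. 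Combining via the law of total probability (equivalently, a union bound on the two cases) gives the claimed lower bound $1 - q^{-m\zeta} - 64\min\{\trrank,\tpub\}^2 q^{-(\trrank+\tpub-w+1)(n+(\trrank-3w-\tpub)/2)}$.

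The only non-bookkeeping step is the monotonicity check needed to replace the bound of Lemma~\ref{lem:key_error_weight_condition_probability} for arbitrary $\trrank' \geq \trrank$ by its value at $\trrank$. Writing $f(\trrank') = 64\min\{\trrank',\tpub\}^2 q^{-(\trrank'+\tpub-w+1)(n+(\trrank'-3w-\tpub)/2)}$, I would verify that $f$ is non-increasing on $[\trrank, w-\zeta+2]$: the prefactor grows at most quadratically in $\trrank'$ (and is capped at $\tpub^2$), whereas the negative exponent grows roughly quadratically in $\trrank'$ in the parameter range of {\SystemName} (where $n$ is large enough that both linear factors of the exponent are positive and increasing in $\trrank'$), so the exponential decay dominates. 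This is the main subtlety; once it is verified the rest is a direct assembly of Corollary~\ref{cor:Tr_alpha_z_rank_min_dist_A}, the preceding theorem, and Lemma~\ref{lem:key_error_weight_condition_probability}.
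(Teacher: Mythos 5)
Your proposal is correct and follows essentially the same route as the paper, whose proof is given in a single sentence as a combination of Corollary~\ref{cor:Tr_alpha_z_rank_min_dist_A}, the minimum-distance probability bound, and Lemma~\ref{lem:key_error_weight_condition_probability} via a union bound: you split the randomness between key generation and encryption in the same way, invoke the same three ingredients, and assemble them identically. The only place you go beyond the paper is in explicitly flagging that the bound of Lemma~\ref{lem:key_error_weight_condition_probability} must be non-increasing in the actual rank $\trrank'\geq\trrank$ of $\Trmum(\alpha\z)$ before it can be evaluated at the worst case $\trrank'=\trrank$ --- a step the paper's one-line proof leaves implicit --- and your monotonicity argument is sound in the parameter regime of Table~\ref{tab:parameters}.
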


{
\begin{remark}\label{rem:asymptotic_weak_key}
By the asymptotical analysis in \cite{byrne2020partition}, we have
\begin{align*}
P_\mathrm{strong,key}(\trrank) \geq 1-\Theta\!\left( q^{m[t-(w-\zeta+2)]} \right).
\end{align*}
Since the hidden constant strongly depends on $q$, this asymptotic value should only be used for a rough estimation of the strong-key probability and the exact formula in Theorem~\ref{thm:strong_key_ciphertext_error_weight_probability} should be used for parameter design.

Nevertheless, the formula shows that $1-P_\mathrm{strong,key}(\trrank)$ decreases exponentially in $m$ times the difference of $t$ and $w-\zeta+2$.
Hence, usually we can choose $t$ close to the maximal value $w-\zeta+2$ to achieve a given designed probability for a key to be strong.

For instance, we can choose $t \approx (w-\zeta+2)-\tfrac{\secLevel}{m}\log_{q}(2)$ for
\begin{equation*}
P_\mathrm{strong,key}(\trrank) \geq 1-2^{-\secLevel},
\end{equation*}
where $\secLevel$ is the security parameter.
\end{remark}
}

\end{document}